\newcounter{mynotes}
\declaretheorem[within=section]{theorem}
\declaretheorem[sibling=theorem]{lemma}
\declaretheorem[sibling=theorem]{definition}
\declaretheorem[sibling=theorem]{remark}
\newcounter{termcounter}
\renewcommand{\thetermcounter}{\Alph{termcounter}}
\crefname{term}{term}{terms}
\def\term{\@ifnextchar[\term@optarg\term@noarg}
\def\term@optarg[#1]#2{%
  \textup{(#1)}%
  \def\@currentlabel{#1}%
  \def\cref@currentlabel{[][2147483647][]#1}%
  \cref@label[term]{#2}}
\def\term@noarg#1{%
  \refstepcounter{termcounter}%
  \textup{(\thetermcounter)}%
  \cref@label[term]{#1}}
\newcommand{\mv}[1]{\mathbf {#1}}
\newcommand{\msf}[1]{\mathsf {#1}}
\newcommand{\mcl}[1]{\mathcal {#1}}
\newcommand{\mt}[1]{\text{#1}}
\newcommand{\ignore}[1]{}
\newcommand{\defeq}{\stackrel{\mathrm{def}}=}
\newcommand{\paren}[1]{(#1 )}
\newcommand{\Paren}[1]{\left(#1 \right )}
\definecolor{DSred}{rgb}{1,0,0}
\renewcommand{\leq}{\leqslant}
\renewcommand{\geq}{\geqslant}
\renewcommand{\ge}{\geqslant}
\renewcommand{\le}{\leqslant}
\renewcommand{\epsilon}{\varepsilon}
\newcommand{\eps}{\epsilon}
\newcommand{\Psymb}{{\bf Pr}}
\DeclareMathOperator*{\ProbOp}{\Psymb}
\renewcommand{\Pr}{\ProbOp}
\DeclareMathOperator*{\argmin}{arg\,min}
\DeclareMathOperator*{\Pa}{\bm{\mathsf{Pa}}}
\DeclareMathOperator*{\pa}{\bm{\mathsf{pa}}}
\DeclareMathOperator*{\ePa}{\bm{\mathsf{Pa^{+}}}}
\DeclareMathOperator*{\An}{\bm{\mathsf{An}}}
\DeclareMathOperator*{\an}{\bm{\mathsf{an}}}
\DeclareMathOperator*{\De}{\bm{\mathsf{De}}}
\DeclareMathOperator*{\de}{\bm{\mathsf{de}}}
\DeclareMathOperator*{\COA}{\mathsf{COA}}
\DeclareMathOperator*{\Neighbor}{\bm{\mathsf{Ne}}}
\DeclareMathOperator*{\neighbor}{\bm{\mathsf{ne}}}
\newcommand{\cit}{\mathsf{C2ST}(G,\epsilon)} 
\newcommand{\cgft}{\mathsf{CGFT}(G, \mathcal{M}, \epsilon)} 
\newcommand{\clp}{\mathsf{CL}(G, \epsilon)} 
\newcommand{\smbn}{\textsc{SMBN}} 
\newcommand{\smcg}{\textsc{SMCG}} 
\newcommand{\cbn}{\textsc{CBN}} 
\newcommand{\bn}{\textsc{BN}} 
\newcommand{\indep}{\rotatebox[origin=c]{90}{$\models$}}
\newcommand{\Opt}{\msf{Opt}}
\title{Learning and Testing Causal Models with Interventions}
\author{
\begin{tabular}{c c c c}
Jayadev Acharya\thanks{Email: \texttt{acharya@cornell.edu}. Supported by a Cornell University Startup.} & Arnab Bhattacharyya\thanks{ Email: \texttt{arnabb@iisc.ac.in}. Partially supported by DST Ramanujan grant DSTO1358 and DRDO Frontiers Project DRDO0687.} & Constantinos Daskalakis\thanks{Email: \texttt{costis@csail.mit.edu}. Partially supported by NSF CCF-1650733, CCF-1617730, CCF-1741137.} & Saravanan Kandasamy\thanks{Email: \texttt{saravan.tuty@gmail.com}. Partially supported by DRDO Frontiers Project DRDO0687.}\\[1em]
School of ECE & CSA Department & CSAIL & CSA Department\\
Cornell University & Indian Institute of Science & MIT & Indian Institute of Science
\end{tabular}
}
\author {
Jayadev Acharya\thanks{Supported by a Cornell University startup grant.} \\
School of ECE\\
 Cornell University\\
\texttt{acharya@cornell.edu}
\and
Arnab Bhattacharyya\thanks{Partially supported by DST Ramanujan grant DSTO1358 and DRDO Frontiers Project DRDO0687.} \\
CSA Department\\
Indian Institute of Science \\
\texttt{arnabb@iisc.ac.in}
\and
Constantinos Daskalakis\thanks{Partially supported by NSF CCF-1650733, CCF-1617730, CCF-1741137.}  \\
EECS\\
 MIT\\
\texttt{costis@csail.mit.edu}
\and
Saravanan Kandasamy\thanks{Partially supported by DRDO Frontiers Project DRDO0687.} \\
CSA Department\\
Indian Institute of Science \\
\texttt{saravan.tuty@gmail.com}
}
\thanks{School of ECE, Cornell University. Email: \texttt{acharya@cornell.edu}. Supported by a Cornell University Startup. }\qquad \qquad \qquad
\thanks{Dept.~Computer Science \& Automation, Indian Institute of Science. Email: \texttt{arnabb@iisc.ac.in}. Partially supported by DST Ramanujan grant DSTO1358 and DRDO Frontiers Project DRDO0687.}\qquad \qquad \qquad
\thanks{CSAIL, Massachusetts Institute of Technology. Email: \texttt{costis@csail.mit.edu}. Partially supported by NSF CCF-1650733, CCF-1617730, CCF-1741137.} \\[1.5em]
\thanks{Dept.~Computer Science \& Automation, Indian Institute of Science. Email: \texttt{saravan.tuty@gmail.com}. Partially supported by DRDO Frontiers Project DRDO0687.} &
\author{
Jayadev Acharya\thanks{Supported by a Cornell University Startup.}\\
School of ECE\\
Cornell University\\
\texttt{acharya@cornell.edu}

\and

Arnab Bhattacharyya\thanks{Partially supported by DST Ramanujan grant DSTO1358 and DRDO Frontiers Project DRDO0687.}\\
Computer Science \& Automation Dept.\\
Indian Institute of Science\\
 \texttt{arnabb@iisc.ac.in}

\and
Constantinos Daskalakis\thanks{Partially supported by NSF CCF-1650733, CCF-1617730, CCF-1741137.}\\
CSAIL\\
Massachusetts Institute of Technology\\
\texttt{costis@csail.mit.edu}
}
\begin{document}

\maketitle

\begin{abstract}
We consider testing and learning problems on causal Bayesian networks as defined by Pearl~\cite{Pearl00}. Given a causal Bayesian network $\mathcal{M}$ on a graph with $n$ discrete variables and bounded in-degree and bounded ``confounded components'', we show that $O(\log n)$ interventions on an unknown causal Bayesian network $\mathcal{X}$ on the same graph, and $\tilde{O}(n/\epsilon^2)$ samples per intervention, suffice to efficiently distinguish whether $\mathcal{X}=\mathcal{M}$ or whether there exists some intervention under which $\mathcal{X}$ and $\mathcal{M}$ are farther than $\epsilon$ in total variation distance.  We also obtain sample/time/intervention efficient algorithms for: (i) testing the identity of two unknown causal Bayesian networks on the same graph; and (ii) learning a causal Bayesian network on a given graph.  Although our algorithms are non-adaptive, we show that adaptivity does not help in general: $\Omega(\log n)$ interventions are necessary for testing the identity of two unknown causal Bayesian networks on the same graph, even adaptively.  Our algorithms are enabled by a new subadditivity inequality for the squared Hellinger distance between two causal Bayesian networks.
\end{abstract}

\section{Introduction}

A central task in statistical inference is learning properties of a high-dimensional distribution over some variables of interest given observational data. However, probability distributions only capture the association between  variables of interest and may not  suffice to predict what the consequences would be of setting some of the variables to particular values. A standard example illustrating the point is this: From observational data, we may learn that atmospheric air pressure and the readout of a barometer are correlated. But can we  predict whether the atmospheric pressure would stay the same or go up if the barometer readout was forcefully increased by moving its needle? 

Such issues are at the heart of {\em causal inference}, where the goal is to learn a {\em causal model} over some variables of interest, which can predict the result of {\em external interventions} on the variables. For example, a causal model on two variables of interest $X$ and $Y$ need not only determine conditional probabilities of the form $\Pr[Y \mid X = x]$, but also {\em interventional probabilities} $\Pr[Y \mid do(X=x)]$ where, following Pearl's notation \cite{Pearl00}, $do(X=x)$ means that $X$ has been forced to take the value $x$ by an external action. In our previous example, $\Pr[\text{Pressure} \mid do(\text{Barometer} = b)] = \Pr[\text{Pressure}]$ but  $\Pr[\text{Barometer} \mid do(\text{Pressure} = p)] \neq \Pr[\text{Barometer}]$, reflecting that the atmospheric pressure causes the barometer readout, not the other way around. 

Causality has been the focus of extensive study, with a wide range of analytical frameworks proposed to capture causal relationships and perform causal inference. A prevalent class of causal models are {\em  graphical causal models}, going back to Wright \cite{Wright21} who introduced such models for path analysis, and Haavelmo \cite{Haavelmo43} who used them to define structural equation models. Today, graphical causal models are widely used to represent causal relationships in a variety of ways~\cite{SDLC93, GC99, Pearl00, SGS00, Neapolitan03, KF09}.  

In our work, we focus on the central model of {\em causal Bayesian networks}  (CBNs) \cite{Pearl00, SGS00, Neapolitan03}. Recall that a (standard) Bayesian network is a distribution over several random variables that is associated with a directed acyclic graph. The vertices of the graph are the random variables over which the distribution is defined, and the graph describes conditional independence properties of the distribution. In particular, every variable is independent of its non-descendants, conditioned on the values of its parents in the graph. A CBN is also associated with a directed acyclic graph (DAG) whose vertices are the random variables on which the distribution is defined. However, a CBN is not a single distribution over these variables  but the collection of all possible interventional distributions, defined by setting any subset of the variables to any set of values. In particular, every vertex is both a variable $V$ and a {\em mechanism} to generate the value of $V$ given the values of the parent vertices, and the interventional distributions are defined in terms of these mechanisms.

\ignore{
To illustrate the difference between Bayesian networks and causal Bayesian networks, observe that a (standard) Bayesian network associated with the graph $X \rightarrow Y \rightarrow Z$, can also be associated with the graphs $X \leftarrow Y \leftarrow Z$ and $X \leftarrow Y \rightarrow Z$, as all graphs declare that $X$ and $Z$ are independent conditioned on $Y$. On the other hand, a causal Bayesian network associated with one of these graphs, might not be possible to associate to any of the other graphs, as the three graphs imply different causal relationships. Back to our original example, take $X=\text{Barometer Readout}$,  $Y =\text{Atmospheric Pressure}$, $Z=\text{Altitude}$. Clearly, a causal Bayesian network that is able to generate all interventional distributions on variables $X, Y, Z$ should better use the graph $X \leftarrow Y \leftarrow Z$, as only this one captures the causal relationship among these variables.
}
We allow CBNs to contain both observable and unobservable (hidden) random variables. Importantly, we allow {\em unobservable confounding variables}. These are variables that are not observable, yet they are ancestors of at least two observable variables. These are especially tricky in statistical inference, as they may lead to spurious associations. 
\ignore{For the full generality of causal Bayesian networks that we consider, see Definition~\ref{def:causal Bayesnet}.}

\subsection{Our Contributions}
Consider the following situations:
\begin{enumerate}
\item
An engineer designs a large circuit using a circuit simulation program and then builds it in hardware. The simulator predicts relationships between the voltages and currents at different nodes of the circuit. Now, the engineer would like to verify whether the simulator's predictions hold for the real circuit by doing a limited number of experiments (e.g., holding some voltages at set levels, cutting some wires, etc.). If not, then she would want to learn a model for the system that has sufficiently good accuracy.
\item
A biologist is studying the role of a set of genes in migraine. He would like to know whether the mechanisms relating the products of these genes are approximately the same for patients with and without migraine.
He has access to tools (e.g., CRISPR-based gene editing technologies \cite{dixit16}) that generate data for gene activation and knockout experiments. 
\end{enumerate}
Motivated by such scenarios, we study the problems of hypothesis testing and learning CBNs when both observational and interventional data are available.  The main highlight of our work is that we prove bounds on the number of samples, interventions, and time steps required by our algorithms.

\ignore{This setting has been a recent focus of study \cite{hauser12,wang17,yang18}, motivated by advances in genomics which allow high-resolution observational and interventional data for gene expression using flow cytometry and CRISPR technologies \cite{sachs05,macosko15,dixit16}. Our algorithms are non-parametric, which is also crucial in genomics studies where the data is known to be intrinsically non-Gaussian.}
\ignore{
From observational and interventional data, discovering `causal relations' or estimation of `interventional markov equivalent classes' has been studied recently \cite{yang18,wang17,hauser12}.  This setting is motivated by the recent advancements in biology which allow obtaining high resolution observational and interventional data (of the order of 100,000 cells / samples on 20,000 genes / variables) for gene regulatory networks \cite{sachs05,macosko15,dixit16}, where the interventions can be performed using chemical reagents and gene deletions.  In this paper, we study hypothesis testing and learning problems for causal models using both observational and interventional data.  Our algorithms are non-parametric, which is also crucial in gene regulatory networks where the data is known to be intrinsically non-Gaussian.}  

To define our problems precisely, we need to specify what we consider to be a good approximation of a causal model.
Given $\eps \in (0,1)$, we say that two causal models $\mathcal{M}$ and $\mathcal{N}$ on a set of variables $\mv{V} \cup \mv{U}$ (observable and unobservable resp.) are {\em $\eps$-close} (denoted $\Delta(\mathcal{M},\mathcal{N}) \leq \eps$) if for every subset $\mv{S}$ of $\mv{V}$ and assignment $\mv{s}$ to $\mv{S}$, performing the same intervention $do(\mv{S}=\mv{s})$ to both $\mathcal{M}$ and $\mathcal{N}$ leads to the two interventional distributions being $\eps$-close to each other in total variation distance. Otherwise, the two models are said to be {\em $\eps$-far} and $\Delta(\mathcal{M},\mathcal{N}) > \eps$.

Thus, two models $\mathcal{M}$ and $\mathcal{N}$ are close according to the above definition if there is {\em no} intervention which can make the resulting distributions differ significantly. This definition is motivated by the philosophy articulated by Pearl (pp.~414, \cite{Pearl00}) that ``causation is a summary of behavior under intervention''. Intuitively, if there is some intervention that makes $\mathcal{M}$ and $\mathcal{N}$ behave differently, then $\mathcal{M}$ and $\mathcal{N}$ do not describe the same causal process. Without having any prior information about the set of relevant interventions, we adopt a worst-case view and simply require that causal models $\mathcal{M}$ and $\mathcal{N}$ behave similarly for every intervention to be declared close to each other.\footnote{To quote Pearl again, ``It is the
nature of any causal explanation that its utility be proven not over standard situations but
rather over novel settings that require innovative manipulations of the standards.'' (pp. 219, \cite{Pearl00}).}


The {\em goodness-of-fit testing} problem can now be described as follows.
Suppose that a collection  $\mv{V} \cup \mv{U}$ (observable and unobservable resp.) of $n$ random variables  are causally related to each other.
Let $\mathcal{M}$ be a hypothesized causal model for $\mv{V} \cup \mv{U}$ that we are given explicitly. Suppose that the true model to describe the causal relationships is an unknown $\mathcal{X}$. Then, the goodness-of-fit testing problem is to distinguish between: (i) $\mathcal{X} = \mathcal{M}$, versus (ii) $\Delta(\mathcal{X},\mathcal{M}) > \eps$, by sampling from and experimenting on $\mv{V}$, i.e.~forcing some variables in $\mv{V}$ to certain values and sampling from the thus intervened upon distribution.

We study goodness-of-fit testing assuming $\mathcal{X}$ and $\mathcal{M}$ are causal Bayesian networks over a known DAG $G$. Given a DAG $G$, CBN $\mathcal{M}$ and $\eps > 0$, we denote the corresponding goodness-of-fit testing problem $\mathsf{CGFT}(G, \mathcal{M}, \eps)$. For example,  the engineer above, who wants to determine whether the circuit behaves as the simulation software predicts, is interested in the problem $\mathsf{CGFT}(G, \mathcal{M}, \eps)$ where $\mathcal{M}$ is the simulator's prediction, $G$ is determined by the circuit layout, and $\epsilon$ is a user-specified accuracy parameter.
\ignore{

Here is a way to interpret the problem $\mathsf{CGFT}(G, \mathcal{M}, \eps)$. The DAG $G$ reflects a scientist's qualitative understanding of the causal mechanisms behind a system.
She hypothesizes that the model that describes the system is a CBN $\mathcal{M}$ (perhaps, itself the output of another algorithm) over $G$. Now, if an algorithm for $\mathsf{CGFT}(G, \mathcal{M}, \eps)$ accepts, it means that $\mathcal{M}$ is $\eps$-close to the true causal model for the system, meaning that $\mathcal{M}$ can be used to estimate the result of any intervention up to $\eps$ error in total variation distance.}
 Here is our theorem {for goodness-of-fit testing.} 

\begin{theorem}[Goodness-of-fit Testing -- Informal]\label{thm:maininf}
Let $G$ be a DAG on $n$ vertices with bounded in-degree and bounded ``confounded components.'' Let $\mathcal{M}$ be a given CBN over $G$. Then, there exists an algorithm solving $\mathsf{CGFT}(G, \mathcal{M}, \eps)$ that makes $O(\log n)$ interventions, takes $\tilde{O}(n/\eps^2)$ samples per intervention and runs in time $\tilde{O}(n^2/\eps^2)$.  Namely, the algorithm gets access to a CBN $\mathcal{X}$ over $G$, accepts  with probability $\geq 2/3$  if $\mathcal{X} = \mathcal{M}$ and rejects with probability $\geq 2/3$ if $\Delta(\mathcal{X},\mathcal{M}) > \eps$. 
\end{theorem}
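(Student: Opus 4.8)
The plan is to turn this inherently \emph{interventional} testing question into a collection of \emph{local} tests on bounded-arity conditional distributions, using two ingredients: (a) the Tian--Pearl ``c-factor'' (Q-) factorization, which reduces all interventional distributions of a CBN over $G$ to a family of c-factors $\{Q_j\}_j$, one per confounded component $S_j$, with each $Q_j$ decomposing as a product $\prod_{i\in S_j}P(v_i\mid v_{T_i})$ of conditionals whose conditioning sets $T_i$ have size $O(1)$ (this is exactly where bounded in-degree and bounded c-components are used -- $|T_i|$ is at most the c-component size times the in-degree, hence the domains are $O(1)$); and (b) a subadditivity inequality for squared Hellinger distance adapted to CBNs. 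Note $\mathcal{X}=\mathcal{M}$ (as far as observable interventional distributions go) iff $Q_j^{\mathcal{X}}=Q_j^{\mathcal{M}}$ for all $j$, iff $P_{\mathcal{X}}(v_i\mid v_{T_i})=P_{\mathcal{M}}(v_i\mid v_{T_i})$ for every observable $i$; so the completeness case is ``all $O(1)$-arity mechanisms agree'' and the soundness case must force \emph{some} mechanism to disagree.

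The engine is an inequality of the shape
\[
H^2\big(\mathcal{X}_{do(\mathbf{s})},\mathcal{M}_{do(\mathbf{s})}\big)\ \leq\ C\sum_{i\in\mathbf{V}}\E\big[\,H^2\big(P_{\mathcal{X}}(V_i\mid V_{T_i}),\,P_{\mathcal{M}}(V_i\mid V_{T_i})\big)\,\big],
\]
valid for every intervention $do(\mathbf{s})$, where the expectation is over $V_{T_i}$ drawn from a suitable post-intervention marginal and $C$ depends only on the structural bounds. Since total variation is at most $\sqrt{2}$ times Hellinger, $\Delta(\mathcal{X},\mathcal{M})>\eps$ gives some $do(\mathbf{s})$ with $H^2>\eps^2/2$, hence some single term on the right is $\geq\eps^2/(2Cn)$: there is an observable $i^\star$ in some c-component $S_{j^\star}$ and a ``heavy'' assignment $\pi^\star$ to $T_{i^\star}$ on which $H^2\big(P_{\mathcal{X}}(\cdot\mid\pi^\star),P_{\mathcal{M}}(\cdot\mid\pi^\star)\big)$ is noticeably positive while $\pi^\star$ carries non-negligible mass under the relevant post-intervention distribution.

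For the algorithm: for a fixed c-component $S_j$ there is an intervention that fixes variables outside $S_j$'s ``district'' to values also making the relevant conditioning configurations $\Omega(1)$-likely, under which post-intervention samples directly yield estimates of $P_{\mathcal{X}}(v_i\mid v_{T_i})$ for $i\in S_j$. Such interventions conflict across c-components, but a \emph{random} intervention -- each variable independently left free or fixed to a uniformly random value -- is ``good'' for any fixed $S_j$ with probability bounded below by a constant (only $O(1)$ coordinates matter), so $O(\log n)$ independent random interventions are simultaneously good for all c-components with probability $\geq 1-1/\poly(n)$; this oblivious choice is what gives the $O(\log n)$ bound and keeps the algorithm non-adaptive. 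For each of the $O(\log n)$ interventions we take $\tilde O(n/\eps^2)$ samples and, for every observable $i$ whose c-component this intervention is good for and every heavy configuration $\pi$ of $T_i$, estimate $P_{\mathcal{X}}(V_i\mid V_{T_i}=\pi)$ to total-variation accuracy $\approx\eps/\sqrt{n}$ and compare against the known $P_{\mathcal{M}}(\cdot\mid\pi)$ (a union bound over the $n$ nodes costs only a $\log n$ factor, and since each domain has size $O(1)$ and $\pi$ gets an $\Omega(1)$ fraction of the samples, $\tilde O(n/\eps^2)$ samples and $\tilde O(n)$ time per sample suffice). Accept iff every comparison passes. Completeness is immediate; for soundness the bad pair $(i^\star,\pi^\star)$ lies in a c-component good for at least one of the $O(\log n)$ interventions, so the corresponding comparison fails w.h.p. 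Totals: $O(\log n)$ interventions, $\tilde O(n/\eps^2)$ samples each, $\tilde O(n^2/\eps^2)$ time.

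The main obstacle is the displayed subadditivity inequality \emph{for CBNs with latent confounders}: showing that an $\eps$-discrepancy in \emph{some} interventional distribution forces an $\Omega(\eps^2/\poly(n))$ discrepancy in \emph{some} $O(1)$-arity mechanism. Unlike the fully observed Bayes-net case, the interventional distribution is a product of c-factors rather than of single-variable conditionals, so the bound must first be proved ``per c-component'' -- peeling off c-factors in a valid topological order and controlling cross terms via multiplicativity of the Hellinger affinity over independent blocks -- and only then refined \emph{inside} each c-component down to its constituent conditionals, all while keeping the loss linear in the number $n$ of terms (and tracking which post-intervention marginal each conditioning context is drawn from). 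Establishing that the c-factor-exposing interventions behave as claimed and that random interventions cover all c-components is comparatively routine but relies essentially on the bounded-degree and bounded-c-component hypotheses.
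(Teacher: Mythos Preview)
Your plan is close to the paper's and correctly identifies the two pillars (a covering set of $O(\log n)$ random interventions, and a Hellinger subadditivity bound for CBNs with latents). But there is a real gap in the soundness direction as you have written it.

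Your subadditivity inequality has an \emph{expectation} over $V_{T_i}$ drawn from ``a suitable post-intervention marginal,'' which is the marginal under the adversarial $do(\mathbf{s})$. You then localize to a heavy $\pi^\star$ that carries non-negligible mass \emph{under that} $do(\mathbf{s})$. The algorithm, however, never performs $do(\mathbf{s})$; its random interventions induce different marginals on $V_{T_i}$, so $\pi^\star$ may have tiny (even zero) probability under every intervention you actually run. Your coverage criterion ``good for all c-components'' does not prevent this: it only asks that $S_j$ be left free, not that any particular configuration of $T_{i^\star}$ be realized. So nothing in your algorithm is guaranteed to see $\pi^\star$, and the soundness argument breaks.

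The paper closes this gap by replacing both pieces simultaneously. First, the subadditivity theorem (their Theorem~4.1) is \emph{worst-case over parent assignments}: it assumes $H^2\big(P_{\mathcal{X}}[\mathbf{S}\mid do(\pa(\mathbf{S}))],\,P_{\mathcal{M}}[\mathbf{S}\mid do(\pa(\mathbf{S}))]\big)\le\gamma$ for every subset $\mathbf{S}$ of a c-component and \emph{every} assignment $\pa(\mathbf{S})$, and concludes closeness for every $do(\mathbf{t})$; contrapositively, $\Delta>\eps$ forces some specific pair $(\mathbf{S}^\star,\pa^\star(\mathbf{S}^\star))$ with large local $H^2$. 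Second, the covering intervention set is strictly stronger than ``good for all c-components'': for every subset $\mathbf{S}$ of a c-component and every assignment $\pa(\mathbf{S})$, some intervention in the set leaves $\mathbf{S}$ free and forces $\Pa(\mathbf{S})=\pa(\mathbf{S})$ exactly. Your random scheme (each vertex free or set to a uniform value independently) already yields this stronger property with $O(\log n)$ interventions under the bounded-degree/bounded-c-component assumptions; you just need to state and use the stronger criterion. With these two upgrades the expectation/heaviness bookkeeping disappears: the bad $(\mathbf{S}^\star,\pa^\star)$ is directly realized by some intervention in your set, and a standard two-sample Hellinger test on that bounded-support local distribution detects it. Your proof of the per-c-component subadditivity (``peeling c-factors then refining inside'') is the right picture, but note that the ``inside a c-component'' step is where all the work is; the paper handles it via a sequence of program-relaxation lemmas rather than a direct chain-rule bound, and loses a factor of $|\Sigma|^{\ell(d+1)}$ in the process.
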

By ``confounded component'' in the above statement, we mean a {\em c-component} in $G$, as defined in Definition~\ref{defn:c-component}. Roughly, a c-component is a maximal set of observable vertices that are pairwise connected by paths of the form $V \leftarrow U \rightarrow V \leftarrow U \rightarrow V \leftarrow \cdots \rightarrow V$ where $V$ and $U$ correspond to observable and unobservable variables respectively. The decomposition of CBNs into c-components has been important in earlier work \cite{tian-pearl} and continues to be an important structural property here. 

\ignore{A point to emphasize is that the algorithm of Theorem \ref{thm:maininf} samples from not just the observable distribution described by $\mathcal{X}$ but also from interventional distributions obtained from $\mathcal{X}$. This is unavoidable as there exist causal models in which certain interventions are {\em non-identifiable},\footnote{Identifiability in causal models is the subject of extensive study \cite{Pearl95, PR95, KM99, Halpern00, Tian02, SP06, HV08}, and there exist procedures to find all non-identifiable interventions in a causal model. Our line of work is orthogonal to this direction.} meaning that they cannot be computed from the observable distribution, and it could be that exactly on these non-identifiable interventions, $\mathcal{X}$ differs greatly from $\mathcal{M}$. Therefore, it is necessary that the testing algorithm be allowed to intervene  on subsets of the observable variables. On the other hand, it is preferable to reduce the number of interventions, as manipulating certain sets of variables may be costly, unethical, impractical or even impossible in practice (see \cite{LondonKadane02} for an instance of a particularly unethical intervention).}

We can use our techniques to extend Theorem \ref{thm:maininf} in several ways:
\begin{enumerate*}
\item[(1)]
In the {\em two-sample testing} problem for causal models, the tester gets access to two unknown causal models $\mathcal{X}$ and $\mathcal{Y}$ on the same set of variables $\mv{V} \cup \mv{U}$ (observable and unobservable resp.). For a given $\eps > 0$, the goal is to distinguish between (i) $\mathcal{X}=\mathcal{Y}$ and (ii) $\Delta(\mathcal{X}, \mathcal{Y}) > \eps$ by sampling from and intervening on $\mv{V}$ in both $\mathcal{X}$ and $\mathcal{Y}$. 

We solve the two-sample testing problem when the inputs are two CBNs over the same DAG $G$ in $n$ variables; for a given $\eps > 0$ and DAG $G$, call the problem $\mathsf{C2ST}(G,\eps)$. Specifically, we show an algorithm to solve $\mathsf{C2ST}(G,\eps)$ that makes $O(\log n)$ interventions on the input models $\mathcal{X}$ and $\mathcal{Y}$, uses $\tilde{O}(n/\eps^2)$ samples per intervention and runs in time $\tilde{O}(n^2/\eps^2)$, when $G$ has bounded in-degree and c-component size.\footnote{Of course, it is allowed for the two networks to be different subgraphs of $G$. So, $\mathcal{X}$ could be defined by the graph $G_1$ and $\mathcal{Y}$ by $G_2$. Our result holds when $G_1 \cup G_2$ is a DAG with bounded in-degree and c-component size.}

\item[(2)]
For the $\mathsf{C2ST}(G,\eps)$ problem, the requirement that $G$ be fully known is rather strict. Instead, suppose  the common graph $G$ is unknown and only bounds on its in-degree and maximum c-component size are given. For example, the biologist above who wants to test whether certain causal mechanisms are identical for patients with and without migraine can reasonably assume that the underlying causal graph is the same (even though he doesn't know what it is exactly) and that only the strengths of the relationships may differ between subjects with and without migraine. For this problem, we obtain an efficient algorithm with nearly the same number of samples and interventions as above.
\ignore{
In this case, the time complexity of the algorithm increases by a factor of $n^{O(d\ell)}$ where $d$ is the maximum in-degree and $\ell$ is the maximum c-component size.}

\item[(3)]
The problem of {\em learning} a causal model can be posed as follows: the learning algorithm gets access to an unknown causal model $\mathcal{X}$ over a set of variables $\mv{V} \cup \mv{U}$ (observable and unobservable resp.), and its objective is to output a causal model $\mathcal{N}$ such that $\Delta(\mathcal{X},\mathcal{N}) \leq \eps$. 
\ignore{We allow the learning algorithm to be {\em improper}, meaning that $\mathcal{N}$ need not satisfy the hypotheses about $\mathcal{X}$.  }

We consider the problem $\clp$ of learning a CBN over a known DAG $G$ on the observable and unobservable variables. For example, this is the problem facing the engineer above who wants to learn a good model for his circuit by conducting some experiments; the DAG $G$ in this case is known from the circuit layout. Given a DAG $G$  with bounded in-degree and c-component size and a parameter $\eps > 0$, we design an algorithm that on getting access to a CBN $\mathcal{X}$ defined over $G$, makes $O(\log n)$ interventions, uses $\tilde{O}(n^2/\eps^4)$ samples per intervention, runs in time $\tilde{O}(n^3/\eps^4)$, and returns an oracle $\mathcal{N}$ that can efficiently compute $P_{\mathcal{X}}[\mv{V}\setminus \mv{T} \mid do(\mv{T}=\mv{t})]$ for any $\mv{T} \subseteq \mv{V}$ and $\mv{t} \in \Sigma^{|\mv{T}|}$ with error at most $\eps$ in TV distance. 

\ignore{The problem of transferring the causal knowledge across multiple source (population) domains to a different target domain with certain commonalities between the domains has been studied extensively under the notion of {\em transportability} \cite{meta-transportability,causal-hierarchy,m-transportability,transportability-introduction,transportability-completeness}.  Such problems require the knowledge about the set of all interventional distributions with respect to each source domain.  For problems such as transportability, our learning algorithm not only minimizes the number of interventions to be performed but also minimizes the number of samples required to learn each interventional distribution to error at most $\eps$.}
\end{enumerate*}
\ignore{An application of our learning algorithm is to the problem of {\em transportability} \cite{meta-transportability,causal-hierarchy,m-transportability,transportability-introduction,transportability-completeness}.  Transportability refers to the notion of transferring causal knowledge from a set of source domains to a target domain, when there are certain commonalities between the source and target domains.  Most work in this area assume the existence of an algorithm that learns the set of {\em all} interventions of the source domains.  Our learning algorithm can be used for this purpose; it is efficient in terms of time, interventions, and sample complexity, and it learns each intervention distribution to error at most $\epsilon$.}

The sample complexity of our testing algorithms matches the state-of-the-art for testing identity of (standard) Bayes nets \cite{CP16, CanonneDKS17}. Designing a goodness-of-fit tester using $o(n)$ samples is a very interesting challenge and seems to require fundamentally new techniques. 

We also show that the number of interventions for $\cit$ and $\clp$ is nearly optimal, even in its dependence on the in-degree and c-component size, and even when the algorithms are allowed to be adaptive.  By `adaptive' we mean the algorithms are allowed to choose the future interventions based on the samples observed from the past interventions.  Specifically,

\begin{theorem}
There exists a causal graph $G$ on $n$ vertices, with maximum in-degree at most $d$ and  largest c-component size at most $\ell$, such that $\Omega(|\Sigma|^{\ell d - 2} \log n)$ interventions are necessary for any algorithm (even adaptive) that solves $\cit$ or $\clp$.
\end{theorem}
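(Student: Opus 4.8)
The plan is to prove both lower bounds from a single hard graph $G$ via Yao's minimax principle: design a prior over pairs of CBNs on $G$ against which no deterministic adaptive tester succeeds, and observe that the same construction also fools learners. Take $G$ to be essentially a disjoint union of $m=\Theta(n/\ell)$ identical gadgets, each gadget a single c-component on $\ell$ observable vertices wired with unobservable confounders so that every vertex has in-degree exactly $d$ and the gadget is maximally confounded. Such a gadget is chosen so that its c-factor (in the sense of the decomposition of \cite{tian-pearl}) carries the largest possible supply of \emph{mutually non-identifiable} interventional conditionals: distributions $P[V\mid do(\cdot)]$ for $V$ in the gadget, indexed by roughly $|\Sigma|^{\ell d-2}$ joint configurations of the other gadget vertices and their external parents, no one of which is computable from samples of the others or from the observational law (the two exponent slots lost are absorbed by the c-component's internal wiring). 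Call these $N=\Theta(|\Sigma|^{\ell d-2})$ the \emph{cells} of a gadget; to probe cell $c$ of gadget $i$ one must perform the intervention that pins the relevant vertices of gadget $i$ to the configuration defining $c$ while leaving $V$ free, and only samples drawn under such an intervention carry information about that conditional, while an intervention leaving some of those vertices free merely smears a gadget's samples over a subcube of cells, which only hurts the tester.

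For the two ensembles, fix a base CBN $\cM_0$ over $G$ (e.g., uniform mechanisms). In the \textsc{yes} case, $\mathcal X=\mathcal Y=\cM_0$. In the \textsc{no} case, draw $(i^\star,c^\star)$ uniformly from $[m]\times[N]$, set $\mathcal X=\cM_0$, and let $\mathcal Y$ be $\cM_0$ with the one conditional indexed by cell $c^\star$ of gadget $i^\star$ replaced by one at total variation distance $3\eps$ from it. Only gadget $i^\star$ is altered, and the intervention probing $(i^\star,c^\star)$ already separates the two interventional distributions by $3\eps$ on the coordinates of that gadget, so $\Delta(\mathcal X,\mathcal Y)>\eps$ in the \textsc{no} case. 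For $\clp$, use the fooling set $\{\cM_0\}\cup\{\cM_{i,c}\}$ where $\cM_{i,c}$ perturbs $\cM_0$ at cell $c$ of gadget $i$: a learner that on input $\cM_0$ never probes $(i,c)$ must, by non-identifiability, emit the same estimate of that conditional on inputs $\cM_0$ and $\cM_{i,c}$, and hence be $>\eps$-off on one of them.

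The two technical ingredients are an indistinguishability lemma that neutralizes adaptivity and a counting bound. For any fixed adaptive algorithm, I would couple its transcript under \textsc{yes} with its transcript under \textsc{no} so the two coincide until the algorithm performs an intervention that probes cell $c^\star$ of gadget $i^\star$ and draws enough samples there to witness a $3\eps$ perturbation; before that moment every sample is identically distributed in both worlds, so the posterior on $(i^\star,c^\star)$ stays uniform and adaptive choices carry no information about the plant. It then remains to bound the resolving power of $T$ interventions. A single intervention probes one cell per gadget, so $T$ interventions touch at most $Tm$ of the $mN$ cell--gadget pairs; moreover, because the algorithm must be simultaneously reliable over all $\Theta(n)$ gadgets while under the $\tilde O(n/\eps^2)$ per-intervention sample budget no statistic aggregates cross-gadget signal strongly enough to localize the perturbed gadget faster than binary search, each intervention raises the probability of having resolved a uniformly random plant by only an $O(1/(N\log n))$ fraction. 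A union-bound/Fano argument then forces $T=\Omega(N\log n)=\Omega(|\Sigma|^{\ell d-2}\log n)$ to reach resolution probability $1/3$, which by the coupling is necessary to beat the \textsc{yes} case; the fooling-set version gives the same bound for $\clp$.

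The step I expect to be the main obstacle is making both factors appear \emph{simultaneously}: $\Omega(|\Sigma|^{\ell d-2})$ alone (one cell per gadget per intervention) and $\Omega(\log n)$ alone (one planted gadget among $\Theta(n)$) are each routine, but a careless construction lets a clever tester amortize — probing one cell of every gadget at once, or aggregating cross-gadget signal to binary-search many cells in parallel — which would yield only the weaker $\Omega(|\Sigma|^{\ell d-2}+\log n)$. The gadget and the way its $\Theta(n)$ copies are glued together must be designed so that the sample budget genuinely blocks such aggregation at the relevant scale, and the coupling and counting arguments must be pushed through against fully adaptive algorithms; these two points are where the real work lies.
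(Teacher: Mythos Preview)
Your proposal has a genuine gap at exactly the point you flagged, and the gap is not merely technical. The theorem is a lower bound on the number of \emph{interventions}, and the paper's proof holds even when the algorithm may draw \emph{infinitely many} samples from each intervention (this is stated explicitly in the paper). Hence your mechanism for getting the $\log n$ factor --- that ``under the $\tilde O(n/\eps^2)$ per-intervention sample budget no statistic aggregates cross-gadget signal strongly enough to localize the perturbed gadget faster than binary search'' --- is inadmissible: there is no sample budget to appeal to. With unlimited samples and a disjoint union of $m=\Theta(n/\ell)$ gadgets, a single intervention can set every gadget to any desired cell configuration simultaneously (since the gadgets share no vertices), and with infinite samples the algorithm learns the exact marginal on every gadget under that intervention. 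So $N=\Theta(|\Sigma|^{\ell d-2})$ interventions suffice to sweep every cell of every gadget, and your Yao/Fano argument yields only $\Omega(|\Sigma|^{\ell d-2})$, i.e.\ exactly the additive bound you warn against.

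The paper's construction avoids this by \emph{not} using disjoint gadgets: all c-components $\mv{B}_1,\dots,\mv{B}_{n/\ell}$ draw their observable parents from a \emph{common} pool $\mv{A}=\mv{A}_f\cup\mv{A}_r$, with $|\mv{A}_f|=\ell d-2$ fixed parents shared by every $\mv{B}_i$ and two further parents of each $\mv{B}_i$ chosen at random from $\mv{A}_r$ with $|\mv{A}_r|=n$. The $|\Sigma|^{\ell d-2}$ factor comes from the assignments to $\mv{A}_f$. The $\log n$ factor comes from the shared pool $\mv{A}_r$: once $\mv{A}_f$ is fixed, each intervention assigns a single value to every node of $\mv{A}_r$, so $t$ interventions partition $\mv{A}_r$ into at most $|\Sigma|^t$ cells; if two parents of some $\mv{B}_i$ fall in the same cell, no intervention in the set realises a particular assignment to $\Pa(\mv{B}_i)$, and an explicit XOR/XNOR construction then produces two models that agree on every performed intervention yet are at TV distance $1$ under the missing one. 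The sharing of parents is precisely what prevents the parallelisation your construction allows: you cannot independently choose the parent assignment seen by each c-component. This structural idea --- coupling the c-components through common parents so that covering all local interventions becomes a combinatorial separating-system problem on $n$ shared nodes --- is what you are missing.
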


\subsection{Related Work} \label{sec:related_work}
\subsubsection{Causality}
As mentioned before, there is a huge and old literature on causality, for both testing causal relationships and inferring causal graphs that is impossible to detail here. Below, we point out some representative directions of research that are relevant to our work. This discussion is far from exhaustive, and the reader is encouraged to pursue the references cited in the mentioned works. 

Most work on statistical tests for causal models has been in the parametric setting. {\em Structural equation models} have traditionally been tested for goodness-of-fit by comparing observed and predicted covariance matrices \cite{BL92}. Another class of tests that has been proposed assumes that the causal factors and the noise factors are conditionally independent. In the {\em additive noise model} \cite{HJMPS09, PJS11, zhang2012kernel, sen2017model}, each variable is the sum of a (non-linear) function of its parent variables and independent noise, often assumed to be Gaussian. This point of view has been refined into an information-geometric criterion in \cite{JM+12}.
 In the non-parametric setting, which is the concern of this paper, Tian and Pearl \cite{tian-pearl} show how to derive functional constraints from causal Bayesian graphs that give equality and inequality constraints among the (distributions of) observed variables, not just conditional independence relations. Kang and Tian \cite{KT06} derive such functional constraints on interventional distributions. Although these results yield non-trivial constraints, they are valid for any model that respects a particular graph and it is not clear how to use them for testing goodness-of-fit with statistical guarantees.

Learning in the context of causal inference has been extensively studied.
To the best of our knowledge, though, most previous work is on learning only the causal graph, whereas our objective is to learn the entire causal model (i.e., the set of all interventional distributions). 
Pearl and Verma \cite{PV95, PV92} investigated the problem of finding a causal graph with hidden variables that is consistent with a given list of conditional independence relations in observational data. In fact, there may be a large number of causal graphs that are consistent with a given set of conditional independence relations. \cite{SGS00, ARSZ05}, and Zhang \cite{Zhang08} (building on the FCI algorithm \cite{SMT99}) has given a complete and sound algorithm for recovering a representative of the equivalence class consistent with a set of conditional independence relations. 

Subsequent work considered the setting when both observational and interventional data are available. 
 This setting has been a recent focus of study \cite{hauser12,wang17,yang18}, motivated by advances in genomics that allow high-resolution observational and interventional data for gene expression using flow cytometry and CRISPR technologies \cite{sachs05,macosko15,dixit16}. When there are no confounding variables, Hauser and B\"uhlmann \cite{HB12}, following up on work by Eberhardt and others \cite{EGS05, Eberhardt07}, find the information-theoretically minimum number of interventions that are sufficient to identify\footnote{More precisely, the goal is to discover the causal graph given the conditional independence relations satisfied by the interventional distributions.} the underlying causal graph and provide a polynomial time algorithm to find such a set of interventions.  A recent paper \cite{KDV17} extends the work of \cite{HB12} to minimize the total cost of interventions where each vertex is assigned a cost.
Another work by Shanmugam et al.~\cite{SKDV15} investigates the problem of learning causal graphs without confounding variables using interventions on sets of small size.  In the presence of confounding variables, there are several works which aim to learn the causal graph from interventional data (e.g., \cite{MMLM06, HEH13}). In particular, a recent work of Kacaoglu et al.~\cite{KSB17} gives an efficient randomized algorithm to learn a causal graph with confounding variables while minimizing the number of interventions from which conditional independence relations are obtained.  

\ignore{All the works mentioned above assume access to an oracle that gives conditional independence relations between variables in the observed and interventional distributions. This is clearly a problematic assumption because it implicitly requires unbounded training data. For example, Scheines and Spirtes \cite{Scheines08} have pointed out that measurement error, quantization and aggregation can easily alter conditional independence relations. The problem of developing finite sample bounds for testing and learning causal models has been repeatedly posed in the literature. The excellent survey by Guyon, Janzing and Sch\"olkopf \cite{GJS10} on causality from a machine learning perspective underlines the issue as one of the ``ten open problems'' in the area. To the best of our knowledge, our work is the first to show finite sample complexity and running time bounds for inference problems on CBNs.
with a quantitative bound on the number of interventions for testing and learning causal Bayesian networks (with or without confounding variables).}

All the works mentioned above assume access to an oracle that gives conditional independence relations between variables in the observed and interventional distributions.  This is clearly a problematic assumption because it implicitly requires unbounded training data. For example, Scheines and Spirtes \cite{Scheines08} have pointed out that measurement error, quantization and aggregation can easily alter conditional independence relations. The problem of developing finite sample bounds for testing and learning causal models has been repeatedly posed in the literature. The excellent survey by Guyon, Janzing and Sch\"olkopf \cite{GJS10} on causality from a machine learning perspective underlines the issue as one of the ``ten open problems'' in the area. To the best of our knowledge, our work is the first to show finite sample complexity and running time bounds for inference problems on causal Bayesian networks.


{\ignore{With respect to the identification of causal effects, that is the effect of interventions, there is a huge amount of literature that consider the setting when the causal graph is known.  }
An application of our learning algorithm is to the problem of {\em transportability}, studied in \cite{meta-transportability,causal-hierarchy,m-transportability,transportability-introduction,transportability-completeness}, which refers to the notion of transferring causal knowledge from a set of source domains to a target domain to identify causal effects in the target domain, when there are certain commonalities between the source and target domains.  Most work in this area assume the existence of an algorithm that learns the set of {\em all} interventions, that is the complete specification of the the source domain model.  Our learning algorithm can be used for this purpose; it is efficient in terms of time, interventions, and sample complexity, and it learns each intervention distribution to error at most $\epsilon$.  \ignore{Our two-sample testing algorithm \footnote{Our algorithm works even for the case when the graphs are unknown but common.} can be used here to test whether the functional relationships among the vertices between different domains (eg: population domains) are equivalent.}

\subsubsection{Distribution Testing and Learning}

There is a vast literature on testing and learning high dimensional distributions in the statistics, and information theory literature, and more recently in computer science with a focus on the computational efficiency of solving such problems. We will not be able to cover and do justice to all of these works in this section. However, we will provide pointers to some of the resources, and also discuss some of the recent progress that is the most closely related to the work we present here. 

In the distribution learning and testing framework, the closest to our work is  learning and testing graphical models. The seminal work of Chow-Liu~\cite{ChowL68} considered the problem of learning tree-structured graphical models. Motivated by applications across many fields, the problem of learning graphical models from samples has gathered recent interest. Of particular interest is the apparent gap between the sample complexity and computational complexity of learning graphical models.~\cite{AbbeelKN06, BreslerMS08} provided algorithms for learning bounded degree graphical models with polynomial sample and time complexity. A lower bound on the sample complexity that grows exponentially with the degree, and only logarithmically with the number of dimensions was provided by~\cite{santhanam2012information}, and recent works~\cite{Bresler15, VuffrayMLC16,KlivansM17} have proposed algorithms with near optimal sample complexity, and polynomial running time for learning Ising models. 

Sample and computational complexity of testing graphical models has been studied recently, in~\cite{CanonneDKS17} for testing Bayesian Networks, and in~\cite{DaskalakisDK18} for testing Ising models. Given sample access to an unknown Bayesian Network, or Ising model, they study the sample complexity, and computation complexity of deciding whether the unknown model is equal to a known fixed model (hypothesis testing). 

The problem of testing and learning distribution properties has itself received wide attention in statistics with a history of over a century~\cite{Fisher25, LehmannR06, CoverT06}. In these fields, the emphasis is on asymptotic analysis characterizing the convergence rates, and error exponents, as the number of samples tends to infinity. A recent line of work originating from~\cite{GoldreichR00, BatuFRSW00} focuses on \emph{sublinear} algorithms where the goal is to design algorithms with the number of samples that is smaller than the domain size (e.g.,~\cite{Canonne15, Goldreich17}, and references therein). 

While most of these results are for learning and testing low dimensional (usually one dimensional) distributions, there are some notable exceptions. Testing for properties such as independence, and monotonicity in high dimensions have been considered recently~\cite{BhattacharyyaFRV11, AcharyaDK15, DiakonikolasK16}. These results show that the optimal sample complexity for testing these properties grows exponentially with the number of dimensions. A line of recent work~\cite{CP16, CanonneDKS17, DaskalakisDK17, DaskalakisDK18} overcomes this barrier by utilizing additional structure in the high-dimensional distribution induced by Bayesian network or Markov Random Field assumptions.

\subsection{Overview of our Techniques} 
In this section, we give an overview of the proof of Theorem \ref{thm:maininf} and the lower bound construction. We start by making a well-known observation \cite{tian-pearl,verma-pearl} that CBNs can be assumed to be over a particular class of DAGs known as {\em semi-Markovian causal graphs}. A semi-Markovian causal graph is a DAG where every vertex corresponding to an unobservable variable is a root and has exactly two children, both observable. More details of the correspondence are given in Appendix \ref{section:general-graph-reduction}. 

In a semi-Markovian causal graph, two observable vertices $V_1$ and $V_2$ are said to be connected by a bi-directed edge if there is a common unobservable parent of $V_1$ and $V_2$. Each connected component of the graph restricted to bi-directed edges is called a {\em c-component}. The decomposition into c-components gives very useful structural information about the causal model. In particular, a fact that is key to our whole analysis is that if $\mathcal{N}$ is a semi-Markovian Bayesian network on observable and unobservable variables $\mv{V} \cup \mv{U}$ with c-components $\mv{C}_1, \dots, \mv{C}_p$, then for any $\mv{v} \in  \Sigma^{|\mv{V}|}$:
\begin{align}\label{eqn:factor}
P_{\mathcal{N}}[\mv{v}] = \prod_{i=1}^p P_{\mathcal{N}}[\mv{c}_i \mid do(\mv{V}\setminus \mv{C}_i = \mv{v}\setminus \mv{c}_i)]
\end{align}
where {$\Sigma$ is the alphabet set,} $\mv{c}_i$ is the restriction of $\mv{v}$ to $\mv{C}_i$ and $\mv{v}\setminus \mv{c}_i$ is the restriction of $\mv{v}$ to $\mv{V}\setminus \mv{C}_i$. Moreover, one can write a similar formula  (Lemma \ref{lemma:c-component-factorization}) for an interventional distribution on $\mathcal{N}$ instead of the observable distribution $P_{\mathcal{N}}[\mv{v}]$.

The most direct approach to test whether two causal Bayes networks $\mathcal{X}$ and $\mathcal{Y}$ are identical is to test whether each interventional distribution is identical in the two models.  This strategy would require $(|\Sigma|+1)^n$ many interventions, each on a variable set of size $O(n)$, where $n$ is the total number of observable vertices.
To reduce the number of interventions as well as the sample complexity, 
a natural approach, given \eqref{eqn:factor} and its extension to interventional distributions, is to test for identity between each pair of ``local'' distributions $$P_{\mathcal{X}}[\mv{S} \mid do(\mv{v}\setminus \mv{s})] \qquad \text{and} \qquad P_{\mathcal{Y}}[\mv{S} \mid do(\mv{v}\setminus \mv{s})]$$ for every subset $\mv{S}$ of a c-component $\mv{C}$ and assignment $\mv{v}\setminus \mv{s}$ to $\mv{V}\setminus \mv{S}$. We assume that each c-component is bounded, so each local distribution has bounded support. Moreover, using the conditional independence properties of Bayesian networks, note that in each local distribution, we only need to intervene on observable parents of $\mv{S}$ that are outside $\mv{S}$, not on all of $\mv{V} \setminus \mv{S}$.  

Through a probabilistic argument, we efficiently find a \emph{small set} $\mv{I}$ of {\em covering interventions}, which are defined as a set of interventions with the following property: For every subset $\mv{S}$ of a c-component and for every assignment $\pa(\mv{S})$ to the observable parents of $\mv{S}$, there is an intervention $I \in \mv{I}$ that does not intervene on $\mv{S}$ and sets the parents of $\mv{S}$ to exactly $\pa(\mv{S})$. Our test performs all the interventions in $\mv{I}$ on both $\mathcal{X}$ and $\mathcal{Y}$ and hence can observe each of the local distributions $P_{\mathcal{X}}[\mv{S} \mid do(\pa(\mv{S}))]$ and $P_{\mathcal{Y}}[\mv{S} \mid do(\pa(\mv{S}))]$.
What remains is to bound $\Delta(\mathcal{X},\mathcal{Y})$ in terms of the distances between each pair of local distributions. 

To that end, we develop a subadditivity theorem about CBNs, and this is the main technical contribution of our upper bound results.  We show that if each pair of local distributions is within distance $\gamma$ in {\em squared Hellinger} distance, then for any intervention $I$, applying $I$ to $\mathcal{X}$ and $\mathcal{Y}$ results in distributions that are within $O(n \gamma)$ distance in {squared Hellinger} distance, assuming bounded in-degree and c-component size of the underlying graph. A bound on the total variation distance between the interventional distributions and hence $\Delta(\mathcal{X},\mathcal{Y})$ follows.  The subadditivity theorem is inspired from \cite{CP16}, where they showed that for Bayes networks, ``closeness of local marginals implies closeness of the joint distribution''. Our result is in a very different set-up, where we prove ``closeness of local interventions implies closeness of any joint interventional distribution'', and requires a new proof technique.
We relax the squared Hellinger distance between the interventional distributions as the objective of a minimization program in which the constraints are that each pair of local distributions is $\gamma$-close in squared Hellinger distance. By a sequence of transformations of the program, we lower bound its objective in terms of $\gamma$, thus proving our result. {In the absence of unobservable variables, the analysis becomes much simpler and is sketched in Appendix~\ref{sec:fully_observable_case}.}

{Regarding the lower bound, we prove that the number of interventions required by our algorithms are indeed necessary for any algorithm that solves $\cit$ or $\clp$, even if the algorithms are provided with infinite samples/time.  For any algorithm that fails to perform some local intervention $I$, we provide a construction of two models which do not agree on $I$ and agree on all other interventions.  Our construction is designed in such a way that it allows adaptive algorithms.  The idea is to show an adversary that, for each intervention, reveals a distribution to the algorithm.  Towards the end, when the algorithm fails to perform some local intervention $I$, we can show a construction of two models such that: i) both the models do not agree on $I$, and the total variation distance between the interventional distributions is equal to one; ii) and for all other interventions, the interventional distributions revealed by the adversary match with the corresponding distributions on both the models.  This, together with a probabilitic argument, shows the existence of a causal graph that requires sufficiently large number of interventions to solve $\cit$ and $\clp$.}

\subsection{Future Directions}

We hope that this work paves the way for future research on designing efficient algorithms with bounded sample complexity for learning and testing causal models. For the sake of concreteness, we list a few open problems.

\begin{itemize}
\item
Interventional experiments are often expensive or infeasible, so one would like to deduce causal models from observational data alone. In general, this is impossible. However, in {\em identifiable} causal Bayesian networks (see \cite{Tian02}), one can identify causal effects from observational data alone. \textbf{Is there an efficient algorithm to learn an identifiable interventional distribution from samples?}\footnote{Schulman and Srivastava \cite{SS16} have shown that under adversarial noise, there exist causal Bayesian networks on $n$ nodes where estimating an identifiable intervention to precision $d$ requires precision $d+\exp(n^{0.49})$ in the estimates of the probabilities of observed events. However, this instability is likely due to the adversarial noise and does not preclude an efficient sampling-based algorithm, especially if we assume a balancedness condition as in \cite{CanonneDKS17}.}

\item
A deficiency of our work is that we assume the underlying causal graph is fully known. 
\textbf{Can our learning algorithm be extended to the setting where the hypothesis only consists of some limited information about the causal graph (e.g., in-degree, c-component size) instead of the whole graph?} In fact, it is open how to efficiently learn the distribution given by a Bayesian network based on samples from it, if we don't have access to the underlying graph \cite{CP16, CanonneDKS17}.

\item
Our goodness-of-fit algorithm might reject even when the input $\mathcal{X}$ is very close to the hypothesis $\mathcal{M}$. \textbf{Is there a {\em tolerant} goodness-of-fit tester that accepts when $\Delta(\mathcal{X}, \mathcal{M}) \leq \eps_1$ and rejects when $\Delta(\mathcal{X}, \mathcal{M}) > \eps_2$ for $0 < \eps_1 < \eps_2 < 1$?} Our current analysis does not extend to a tolerant tester. The same question holds for two-sample testing.

\item
In many applications, causal models are described in terms of {\em structural equation models}, in which each variable is a deterministic function of its parents as well as some stochastic error terms. \textbf{Design sample and time efficient algorithms for testing and learning structural equation models.} Other questions such as evaluating {\em counterfactual} queries  or doing {\em policy analysis} (see Chapter 7 of \cite{Pearl00}) also present interesting algorithmic problems.
\end{itemize}

\section{Preliminaries}\label{sec:prelims}

\paragraph{Notation.} 
We use capital (bold capital) letters to denote variables (sets of variables), e.g., $A$ is a variable and $\mv{B}$ is a set of variables. We use small (bold small) letters to denote values taken by the corresponding variables (sets of variables), e.g., $a$ is the value of $A$ and $\mv{b}$ is the value of the set of variables $\mv{B}$. 
The variables in this paper take values in a discrete set $\Sigma$. We use $[n]$ to denote $\{1,2, \dots, n\}$. 

\paragraph{Probability and Statistics.}
The total variation (TV) distance between distributions $P$ and $Q$ over the same set $[D]$ is $\delta_{TV}(P,Q):=\frac12 \sum_{i\in [D]} |P(i)-Q(i)|.$
The squared Hellinger distance (given in~\eqref{eqn:squaredHellinger}) and the total variation distance are related by the following.

\begin{lemma}[Hellinger vs total variation] \label{Hellinger-Tv-Inequality} The Hellinger distance and the total variation distance between two distributions $P$ and $Q$ are related by the following inequality: $$ H^2(P,Q) \leq \delta_{TV}(P,Q) \leq \sqrt{2 H^2(P,Q)}.$$
\end{lemma}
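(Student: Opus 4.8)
The plan is to work directly from the definitions, writing $H^2(P,Q) = \tfrac12 \sum_{i \in [D]} \bigl(\sqrt{P(i)} - \sqrt{Q(i)}\bigr)^2$ and $\delta_{TV}(P,Q) = \tfrac12 \sum_{i \in [D]} \lvert P(i) - Q(i)\rvert$, and to prove the two inequalities separately by elementary pointwise/Cauchy--Schwarz arguments. No clever idea is needed; the only things to get right are a pointwise inequality for the lower bound and one application of Cauchy--Schwarz together with a crude bound for the upper bound.

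For the left inequality $H^2(P,Q) \le \delta_{TV}(P,Q)$, I would argue termwise: for each $i$,
\[
\bigl(\sqrt{P(i)} - \sqrt{Q(i)}\bigr)^2 = \bigl\lvert \sqrt{P(i)} - \sqrt{Q(i)}\bigr\rvert \cdot \bigl\lvert \sqrt{P(i)} - \sqrt{Q(i)}\bigr\rvert \le \bigl\lvert \sqrt{P(i)} - \sqrt{Q(i)}\bigr\rvert \cdot \bigl(\sqrt{P(i)} + \sqrt{Q(i)}\bigr) = \lvert P(i) - Q(i)\rvert,
\]
using that $\lvert \sqrt a - \sqrt b\rvert \le \sqrt a + \sqrt b$ for $a,b \ge 0$. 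Summing over $i$ and multiplying by $\tfrac12$ gives the claim.

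For the right inequality $\delta_{TV}(P,Q) \le \sqrt{2\,H^2(P,Q)}$, I would write $\lvert P(i) - Q(i)\rvert = \bigl\lvert \sqrt{P(i)} - \sqrt{Q(i)}\bigr\rvert \cdot \bigl(\sqrt{P(i)} + \sqrt{Q(i)}\bigr)$ and apply Cauchy--Schwarz:
\[
\sum_{i} \lvert P(i) - Q(i)\rvert \le \Bigl(\sum_i \bigl(\sqrt{P(i)} - \sqrt{Q(i)}\bigr)^2\Bigr)^{1/2} \Bigl(\sum_i \bigl(\sqrt{P(i)} + \sqrt{Q(i)}\bigr)^2\Bigr)^{1/2}.
\]
The first factor equals $\sqrt{2\,H^2(P,Q)}$ by definition. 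For the second, expand $\bigl(\sqrt{P(i)} + \sqrt{Q(i)}\bigr)^2 = P(i) + Q(i) + 2\sqrt{P(i)Q(i)}$ and sum; since $P$ and $Q$ are probability distributions and $\sum_i \sqrt{P(i)Q(i)} \le 1$ (e.g.\ by Cauchy--Schwarz again, or since it equals $1 - H^2(P,Q) \le 1$), the second factor is at most $\sqrt{4} = 2$. Hence $\sum_i \lvert P(i) - Q(i)\rvert \le 2\sqrt{2\,H^2(P,Q)}$, and dividing by $2$ finishes the proof.

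There is no real obstacle here — the statement is a classical fact. The only point requiring minor care is the termwise inequality $\lvert \sqrt a - \sqrt b\rvert \le \sqrt a + \sqrt b$ and making sure the ``crude'' bound $\sum_i (\sqrt{P(i)} + \sqrt{Q(i)})^2 \le 4$ is justified from $P,Q$ being normalized; both are immediate.
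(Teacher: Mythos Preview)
Your proof is correct. The paper does not supply a proof for this lemma; it is stated as a standard preliminary fact, and your elementary argument via the pointwise bound and Cauchy--Schwarz is exactly the classical way to establish it.
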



The problem of two-sample testing for discrete distributions in Hellinger distance, and learning with respect to total variation distance has been studied in the literature, and the following two lemmas state two results we use. Let $P$ and $Q$ denote distributions over a domain of size $D$. 


\begin{lemma}[Hellinger Test,~\cite{DiakonikolasK16}]
\label{hellingerTest} 
Given $\tilde{O}(\min (D^{2/3}/\epsilon^{8/3},D^{3/4}/\epsilon^2))$ samples from each unknown distributions $P$ and $Q$, we can distinguish between $P=Q$ vs $H^2(P,Q)\geq \epsilon^2$ with probability at least $2/3$.  This probability can be boosted to $1-\delta$ at a cost of an additional  $O(\log (1 / \delta))$ factor in the sample complexity. The running time of the algorithm is quasi-linear in the sample size.
\end{lemma}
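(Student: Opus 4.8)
The plan is to invoke the Hellinger closeness tester of \cite{DiakonikolasK16} essentially verbatim; I will outline the argument behind the sample bound and then dispatch the two bookkeeping claims (probability amplification and running time).

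The engine is the ``flattening plus $\ell_2$-test'' framework of \cite{DiakonikolasK16}. First I would draw $m$ samples from each of $P$ and $Q$ and, using an auxiliary batch of samples, form the empirical masses $\hat p_i$ and replace every element $i$ by roughly $1+\lceil m\,\hat p_i\rceil$ virtual copies, splitting its mass equally among them --- applying the \emph{same} split to $P$ and to $Q$ --- so as to obtain distributions $P',Q'$ over a domain of size $D'=O(D+m)$. Two observations drive the analysis: (i) splitting an element's mass equally among several copies leaves the squared Hellinger distance unchanged, so $H^2(P',Q')=H^2(P,Q)$; and (ii) with high probability over the flattening, $P'$ and $Q'$ are $\Theta(1)$-close to uniform on $[D']$ (i.e.\ $\|P'\|_2^2,\|Q'\|_2^2=O(1/D')$), which forces $\|P'-Q'\|_2^2=\Theta\!\left(H^2(P,Q)/D'\right)$ because each factor $(\sqrt{p'_i}+\sqrt{q'_i})^2$ is $\Theta(1/D')$ elementwise. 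I would then run a standard unbiased estimator of $\|P'-Q'\|_2^2$ on the flattened samples and threshold it, using a Chebyshev bound on its variance to separate the null case $\|P'-Q'\|_2=0$ from $\|P'-Q'\|_2^2\gtrsim \epsilon^2/D'$. Tuning how aggressively to flatten (equivalently, the number of virtual copies per element) against the variance budget is precisely what produces the two expressions inside the $\min$: $\tilde{O}(D^{2/3}/\epsilon^{8/3})$ is the better choice when $\epsilon$ is small, and $\tilde{O}(D^{3/4}/\epsilon^2)$ otherwise.

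To amplify the success probability I would repeat the base tester $\Theta(\log(1/\delta))$ times on independent samples and return the majority verdict; since each run errs with probability at most $1/3$, a Chernoff bound drives the overall error below $\delta$, at the cost of the claimed extra $O(\log(1/\delta))$ factor in the sample complexity. The running-time bound is then immediate: building the flattening and evaluating the degree-two estimator are each a single pass over the samples with $O(1)$ hashed work per sample, so each run takes time quasi-linear in its sample count, and the $O(\log(1/\delta))$-fold majority vote preserves this.

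The genuinely delicate part --- and the reason to cite \cite{DiakonikolasK16} rather than reprove it --- is the joint quantitative control behind (ii) together with the variance analysis: one must show that the data-dependent flattening renders $P'$ and $Q'$ near-uniform with high probability, that in the ``far'' case $\|P'-Q'\|_2^2$ is bounded \emph{below} by $\Omega(\epsilon^2/D')$ despite the split depending on the samples, and that in the ``null'' case the $\ell_2^2$ estimator has small enough variance for the chosen $m$ --- and then to balance these estimates so as to extract the better of the two exponents in each regime of $(D,\epsilon)$. Everything else is routine.
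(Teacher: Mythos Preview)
The paper does not prove this lemma at all: it is stated as a citation of \cite{DiakonikolasK16} and used as a black box, with no accompanying argument. So there is no ``paper's own proof'' to compare your sketch against. Your outline is a reasonable summary of the flattening-plus-$\ell_2$ approach underlying that reference (Hellinger is invariant under equal-mass splitting, flattening makes both distributions near-uniform so that $\ell_2^2$ and $H^2$ become comparable, then a Chebyshev-based $\ell_2$ tester finishes), together with the standard majority-vote amplification and the observation that the estimator runs in near-linear time. Since the paper treats the lemma as imported, your sketch already goes beyond what the paper itself supplies.
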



\begin{lemma}[Learning in TV distance, folklore (e.g.~\cite{DevroyeLugosi})]\label{lemma:TVlearn}
For all $\delta \in (0,1)$, the empirical distribution $\hat{P}$ computed using $\Theta \left(\frac{D}{\epsilon^2} + {\log{\frac{1}{\delta}} \over \epsilon^2}\right)$ samples from $P$  satisfies $H^2(P,\hat{P}) \le \delta_{TV}(P,\hat{P}) \le \epsilon$, with probability at least $1-\delta$. 
\end{lemma}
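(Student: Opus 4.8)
The plan is to treat the two inequalities separately. The left one, $H^2(P,\hat P)\le\delta_{TV}(P,\hat P)$, carries no new content: it is exactly the first bound of \cref{Hellinger-Tv-Inequality} specialized to $Q=\hat P$, and it holds for \emph{every} realization of the samples. So the entire task is the high-probability bound $\delta_{TV}(P,\hat P)\le\epsilon$, which I would obtain by the textbook two-step argument: first bound $\E[\delta_{TV}(P,\hat P)]$, then show concentration around the mean.

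For the expectation, write $\delta_{TV}(P,\hat P)=\tfrac12\sum_{i\in[D]}\bigl|P(i)-\hat P(i)\bigr|$ where $m\hat P(i)\sim\mathrm{Bin}(m,P(i))$ for a sample of size $m$. Jensen's inequality gives $\E\bigl|P(i)-\hat P(i)\bigr|\le\sqrt{\Var[\hat P(i)]}=\sqrt{P(i)(1-P(i))/m}\le\sqrt{P(i)/m}$, and then Cauchy--Schwarz yields $\sum_i\sqrt{P(i)}\le\sqrt D$, so $\E[\delta_{TV}(P,\hat P)]\le\tfrac12\sqrt{D/m}$. For concentration, regard $\delta_{TV}(P,\hat P)$ as a function of the $m$ i.i.d.\ samples; altering one sample changes two coordinates of $\hat P$ by $1/m$ each, hence changes $\delta_{TV}(P,\hat P)$ by at most $1/m$, so McDiarmid's bounded-differences inequality gives $\Pr[\delta_{TV}(P,\hat P)>\E[\delta_{TV}(P,\hat P)]+t]\le e^{-2mt^2}$. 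Choosing $t=\sqrt{\ln(1/\delta)/(2m)}$ and combining, with probability at least $1-\delta$ we have $\delta_{TV}(P,\hat P)\le\tfrac12\sqrt{D/m}+\sqrt{\ln(1/\delta)/(2m)}$, which is $\le\epsilon$ as soon as $m\ge D/\epsilon^2$ and $m\ge 2\ln(1/\delta)/\epsilon^2$ --- both guaranteed, for an appropriate hidden constant, by the hypothesis $m=\Theta(D/\epsilon^2+\log(1/\delta)/\epsilon^2)$. (If one prefers to avoid McDiarmid, the concentration step can be replaced by a Hoeffding bound for each fixed $S\subseteq[D]$ together with a union bound over all $2^D$ subsets, using $\delta_{TV}(P,\hat P)=\max_S\bigl(P(S)-\hat P(S)\bigr)$; this costs a worse constant but is fully self-contained.)

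The single point deserving care --- the ``main obstacle'', though a mild one for a folklore statement --- is the dimension dependence: getting $D$ into the sample size linearly rather than quadratically. Demanding that each coordinate be $\epsilon/D$-accurate via Chernoff and union-bounding over the $D$ coordinates would cost roughly $D^2/\epsilon^2$ samples; the sharpening to $O(D/\epsilon^2)$ comes precisely from the Cauchy--Schwarz step $\sum_i\sqrt{P(i)}\le\sqrt D$, which captures that the combined contribution of the many low-probability coordinates is automatically small. Everything else is routine.
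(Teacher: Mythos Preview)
The paper does not actually prove this lemma: it is stated as folklore with a citation to Devroye--Lugosi and left without proof. Your argument is correct and is precisely the standard route one finds in that reference --- bound the expectation of $\delta_{TV}(P,\hat P)$ via Cauchy--Schwarz to get the $\sqrt{D/m}$ term, then apply McDiarmid for the $\sqrt{\log(1/\delta)/m}$ fluctuation --- so there is nothing to compare against and nothing to fix.
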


\paragraph{Bayesian Networks.}
Bayesian networks are popular probabilistic graphical models for describing high-dimensional distributions.

\begin{definition}
A {\em Bayesian Network} ($\bn$) $\mathcal{N}$ is a distribution that can be specified by a tuple $\langle \mv{V}, G, \{\Pr[V_i \mid \pa(V_i)]: V_i \in \mv{V}, \pa(V_i) \in \Sigma^{|\Pa(V_i)|}\} \rangle$ where: (i) $\mv{V}$ is a set of variables over alphabet $\Sigma$, (ii) $G$ is a directed acyclic graph with nodes corresponding to the elements of $\mv{V}$, and (iii) $\Pr[V_i \mid \pa(V_i)]$ is the conditional distribution of variable $V_i$ given that its parents $\Pa(V_i)$ in $G$ take the values $\pa(V_i)$. 

The Bayesian Network 
$\mathcal{N} =  \langle \mv{V}, G, \{\Pr[V_i \mid \pa(V_i)]\} \rangle$ 
defines a unique probability distribution $P_{\mathcal{N}}$ over $\Sigma^{|\mv{V}|}$, as follows. For all $\mv{v} \in \Sigma^{|\mv{V}|}$,
$$P_{\mathcal{N}}[\mv{v}] = \prod_{V_i \in \mv{V}} \Pr[v_i  \mid  \pa(V_i)].$$
In this distribution, each variable $V_i$ is independent of its non-descendants given its parents in $G$. 
\end{definition}

Conditional independence relations in graphical models are captured by the following definitions.

\begin{definition}
Given a DAG $G$, a (not necessarily directed) path $p$ in $G$ is said to be {\em blocked} by a set of nodes $\mathbf{Z}$, if (i) $p$ contains a chain node $B$ ($A \rightarrow B \rightarrow C$) or a fork node $B$ ($A \leftarrow B \rightarrow C$) such that $B \in \mathbf{Z}$ (or) (ii) $p$ contains a collider node $B$ ($A \rightarrow B \leftarrow C$) such that $B \notin \mathbf{Z}$ and no descendant of $B$ is in $\mathbf{Z}$. 
\end{definition}

\begin{definition}[d-separation]
For a given DAG $G$ on $\mv{V}$, two disjoint sets of vertices $\mathbf{X,Y} \subseteq \mv{V}$ are said to be {\em d-separated} by $\mathbf{Z}$ in $G$, if every (not necessarily directed) path in $G$ between $\mathbf{X}$ and $\mathbf{Y}$ is blocked by $\mathbf{Z}$.
\end{definition}

\begin{lemma}[Graphical criterion for independence] 
For a given $\bn$ $\mathcal{N} = \langle \mv{V}, G, \{\Pr[V_i  \mid  \pa(V_i)]\}\rangle$ and $\mv{X,Y,Z} \subset \mv{V}$, if $\mathbf{X}$ and $\mathbf{Y}$ are {d-separated} by $\mathbf{Z}$ in $G$, then $\mathbf{X}$ is {\em independent} of $\mathbf{Y}$ given $\mathbf{Z}$ in $P_{\mathcal{N}}$, denoted by $[\mathbf{X} \indep \mathbf{Y}  \mid  \mathbf{Z}]$ in $P_{\mathcal{N}}$.
\end{lemma}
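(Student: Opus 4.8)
The statement is the classical soundness of $d$-separation, and the plan is to prove it by moralizing the ancestral subgraph. First I would restrict attention to the ancestral set $\mv{A} := \An(\mv{X} \cup \mv{Y} \cup \mv{Z})$ (with the convention that each set contains itself). The key preliminary observation is that the $P_{\mathcal{N}}$-marginal on $\mv{A}$ is again a Bayesian network, over the induced subgraph $G[\mv{A}]$: if $V_i \notin \mv{A}$ then every child of $V_i$ also lies outside $\mv{A}$ (otherwise $V_i$ would be an ancestor of $\mv{X}\cup\mv{Y}\cup\mv{Z}$), so summing out the variables of $\mv{V}\setminus\mv{A}$ in reverse topological order successively eliminates factors $\Pr[v_i\mid\pa(V_i)]$ that each sum to $1$, leaving $P_{\mathcal{N}}[\mv{a}] = \prod_{V_i\in\mv{A}} \Pr[v_i\mid\pa(V_i)]$. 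Since $[\mv{X}\indep\mv{Y}\mid\mv{Z}]$ in the $\mv{A}$-marginal immediately implies $[\mv{X}\indep\mv{Y}\mid\mv{Z}]$ in $P_{\mathcal{N}}$, it suffices to argue inside $G[\mv{A}]$.

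Next I would pass to the moral graph $H := (G[\mv{A}])^m$, obtained by adding an undirected edge between every pair of vertices that share a common child and then dropping all orientations. Each factor $\Pr[v_i\mid\pa(V_i)]$ depends only on the vertex set $\{V_i\}\cup\Pa(V_i)$, which by construction is a clique of $H$; hence the $\mv{A}$-marginal of $P_{\mathcal{N}}$ factorizes as a product of clique potentials of $H$. The middle step is the standard graph-theoretic equivalence that $\mv{X}$ and $\mv{Y}$ are $d$-separated by $\mv{Z}$ in $G$ if and only if $\mv{X}$ and $\mv{Y}$ are separated by $\mv{Z}$ in $H$ in the ordinary sense that every path from $\mv{X}$ to $\mv{Y}$ meets $\mv{Z}$; I would prove this by tracking active paths, checking that a collider on an active path forces the existence of a moral edge between its parents (both of which are ancestors of some vertex in $\mv{X}\cup\mv{Y}\cup\mv{Z}$, hence in $\mv{A}$), and conversely that an undirected path in $H$ avoiding $\mv{Z}$ can be lifted to an active path in $G$.

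Finally, from ordinary separation in $H$, let $\mv{X}^\ast$ be the union of the connected components of $H$ with $\mv{Z}$ deleted that meet $\mv{X}$, and set $\mv{Y}^\ast := \mv{A}\setminus(\mv{Z}\cup\mv{X}^\ast)$; then $\mv{X}\subseteq\mv{X}^\ast$, $\mv{Y}\subseteq\mv{Y}^\ast$, and no edge of $H$ joins $\mv{X}^\ast$ to $\mv{Y}^\ast$, so every clique of $H$ lies entirely within $\mv{X}^\ast\cup\mv{Z}$ or entirely within $\mv{Y}^\ast\cup\mv{Z}$. Grouping the clique potentials accordingly writes the $\mv{A}$-marginal of $P_{\mathcal{N}}$ as $f(\mv{x}^\ast,\mv{z})\,g(\mv{y}^\ast,\mv{z})$, and a factorization of this product form is exactly the criterion for $[\mv{X}^\ast\indep\mv{Y}^\ast\mid\mv{Z}]$ in $P_\mathcal{N}$, whence $[\mv{X}\indep\mv{Y}\mid\mv{Z}]$. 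The main obstacle is the $d$-separation versus moral-graph-separation equivalence in the middle step; the reduction to the ancestral marginal and the final splitting of the product are essentially bookkeeping. As an alternative one could bypass moralization and induct on a topological order of $G[\mv{A}]$, peeling off a sink and case-analyzing whether it belongs to $\mv{X}$, $\mv{Y}$, $\mv{Z}$, or none of them, but the path analysis there is at least as delicate as the moralization argument.
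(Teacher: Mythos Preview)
The paper does not prove this lemma; it is stated without proof in the preliminaries as a classical fact about Bayesian networks. Your proposal is the standard moralization argument (restrict to the ancestral set, moralize, use the Lauritzen-type equivalence between $d$-separation and undirected separation in the moral graph, then split the product over cliques), and it is correct, so there is no paper proof to compare it to.
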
             

\subsection{Causality}
We describe Pearl's notion of causality from~\cite{Pearl95}. Central to his formalism is the notion of an {\em intervention}. Given a variable set $\mv{V}$ and a subset $\mv{X} \subset \mv{V}$, an intervention $do(\mv{x})$ is the process of fixing the set of variables $\mv{X}$ to the values $\mv{x}$. The {\em interventional distribution} $\Pr[\mv{V} \mid  do(\mv{x})]$ is the distribution on $\mv{V}$ after setting $\mv{X}$ to $\mv{x}$. As discussed in the introduction, an intervention is quite different from conditioning.

Another important component of Pearl's formalism is that some variables may be unobservable. The unobservable variables can neither be observed nor be intervened.  We partition our variable set into two sets $\mv{V}$ and $\mv{U}$, where the variables in $\mv{V}$ are {\em observable} and the variables in $\mv{U}$ are {\em unobservable}. Given a directed acyclic graph $H$ on $\mv{V \cup U}$ and a subset $\mv{X} \subseteq (\mv{V \cup U})$, we use $\bm{\Pi}_H(\mv{X}), \Pa_H(\mv{X})$, $\An_H(\mv{X})$, and $\De_H(\mv{X})$  to denote the set of all parents, observable parents, observable ancestors and observable descendants respectively of $\mv{X}$, excluding $\mv{X}$, in $H$. When the graph $H$ is clear, we may omit the subscript. As usual, small letters, $\bm{\pi}(\mv{X}),$ $\pa(\mv{X})$, $\an(\mv{X})$ and $\de(\mv{X})$  are used to denote their corresponding values. And, we use $H_{\overline{\mv{X}}}$ and $H_{\underline{\mv{X}}}$ to denote the graph obtained from $H$ by removing the incoming edges to $\mv{X}$ and outgoing edges from $\mv{X}$ respectively.  

\begin{definition}[Causal Bayesian Network] \label{def:causal Bayesnet}
A {\em causal Bayesian network} (\cbn) is a collection of interventional distributions that can be defined in terms of a tuple $\langle \mv{V}, \mv{U}, G,$ $\{\Pr[V_i \mid \bm{\pi}(V_i)] : V_i \in \mv{V}, \bm{\pi}(V_i) \in \Sigma^{|\bm{\Pi}(V_i)|}\}, \{\Pr[U_i \mid \bm{\pi}(U_i)] : U_i \in \mv{U}, \bm{\pi}(U_i) \in \Sigma^{|\bm{\Pi}(U_i)|}\}\rangle$, where (i) $\mv{V}$ and $\mv{U}$ are the sets of observable and unobservable variables respectively, (ii) $G$ is a directed acyclic graph on $\mv{V} \cup \mv{U}$, and (iii) $\Pr[V_i \mid \bm{\pi}(V_i)]$  and $\Pr[U_i \mid \bm{\pi}(U_i)]$ are the conditional probability distributions of $V_i$  and $U_i$ resp.~given that its parents $\bm{\Pi}(V_i)$ and $\bm{\Pi}(U_i)$ resp.~take the values $\bm{\pi}(V_i)$ and $\bm{\pi}(U_i)$) resp. 

A \cbn\ 
$\mathcal{M} = \langle \mv{V}, \mv{U}, G,$ $\{\Pr[V_i \mid \bm{\pi}(V_i)]\}, \{\Pr[U_i \mid \bm{\pi}(U_i)]\}\rangle$ 
defines a unique interventional distribution $P_{\mathcal{M}}[\mv{V}  \mid  do(\mv{x})]$ for every subset $\mv{X} \subseteq \mv{V}$ (including $\mv{X} = \emptyset$) and assignment $\mv{x} \in \Sigma^{|\mv{X}|}$, as follows. For all $\mv{v} \in \Sigma^{|\mv{V}|}$:
$$P_{\mathcal{M}}[\mv{v}  \mid  do(\mv{x})] = 
\begin{cases}
\sum_{\mv{u}} \prod_{V_i \in \mv{V}\setminus \mv{X}} \Pr[v_i \mid \bm{\pi}(V_i)] \cdot \prod_{U_i \in \mv{U}} \Pr[u_i  \mid \bm{\pi}(U_i)]
& \text{if }\mv{v} \text{is consistent with }\mv{x}\\
0 & \text{otherwise.}
\end{cases}
$$
We say that $G$ is the {\em causal graph} corresponding to the {\cbn} $\mathcal{M}$.
\end{definition}

Another equivalent way to define a {\cbn} is by specifying the set of interventional distributions $P_{\mathcal{M}}[\mv{V}  \mid  do(\mv{x})]$ for all subsets $\mv{X}$ and assignments $\mv{x}$. To connect to the preceding definition, we require that each $P_{\mathcal{M}}[\mv{V}\mid do(\mv{x})]$ is defined by the Bayesian network described by $G_{\overline{\mv{X}}}$ with the conditional probability distributions obtained by setting the variables in $\mv{X}$ to the constants $\mv{x}$. 

It is standard in the causality literature to work with causal graphs of a particular structure:
\begin{definition}[Semi-Markovian causal graph and Semi-Markovian Bayesian network] \label{defn:SMCG}
A {\em semi-Markovian causal graph} (\smcg) $G$ is a directed acyclic graph on $\mv{V \cup U}$ where every unobservable variable is a root node and has exactly two children, both observable. A {\em semi-Markovian Bayesian network (\smbn)} is a causal Bayesian network where the causal graph is semi-Markovian. 
\end{definition}
There exists a known reduction (described formally in Appendix~\ref{section:general-graph-reduction}) from general causal Bayesian networks to semi-Markovian Bayesian networks that preserves all the properties we use in our analysis, so that henceforth, we will restrict only to \smbn s.

In \smcg s, the divergent edges $V_i \leftarrow U_k \rightarrow V_j$ are usually represented by {\em bi-directed edges} $V_i \leftrightarrow V_j$.  A bi-directed edge between two observable variables implicitly represents the presence of an unobservable parent. 
\begin{definition}[c-component] \label{defn:c-component}
For a given {\smcg} $G$, $\mv{S} \subseteq \mv{V}$ is a {\em c-component} of $G$, if $\mv{S}$ is a maximal set such that between any two vertices of $\mv{S}$ , there exists a path that uses only bi-directed edges.
\end{definition}

Since a c-component forms an equivalence relation, the set of all c-components forms a partition of $\mv{V}$, the observable vertices of $G$.  We use the notation $C(\mv{V})=\{\mv{S}_1,\mv{S}_2,\ldots,\mv{S}_k\}$ to denote the partition of $\mv{V}$ into the c-components of $G$, where each $\mv{S}_i \subseteq \mv{V}$ is a c-component of $G$.  

Also, for $\mv{X} \subseteq \mv{V}$, the induced subgraph $G[\mv{X}]$ is the subgraph obtained by removing the vertices $\mv{V \setminus X}$ and their corresponding edges from $G$.  We use the notation $C(\mv{X}) = \{\mv{S}_1,\mv{S}_2,\ldots,\mv{S}_k\}$ to denote the set of all c-components of $G[\mv{X}]$, that is each $\mv{S}_i \subseteq \mv{X}$ is a c-component of $G[\mv{X}]$. The next two lemmas capture the factorizations of distributions in \smbn. 

\begin{lemma}\label{lemma:trimUnnecessaryActions}
Let $\mathcal{M}$ be a given {\smbn} with respect to the {\smcg} $G$.  For any set $\mv{S}\subseteq \mv{V}$, and a subset $\mv{D}$ such that $(\mv{V} \setminus \mv{S}) \supseteq \mv{D} \supseteq \Pa(\mv{S})$, and for any assignment $\mv{s,d}$,
$$
P_{\mathcal{M}}[\mv{s}  \mid  do(\mv{d})] = P_{\mathcal{M}}[ \mv{s}  \mid  do(\pa(\mv{S}))]
$$
where $\pa(\mv{S})$ is consistent with the assignment $\mv{d}$.
\end{lemma}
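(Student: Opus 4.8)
The statement says that in a semi-Markovian Bayesian network, an interventional distribution on a set $\mv{S}$ only depends on the values forced on the observable parents of $\mv{S}$; forcing additional variables (as long as they are outside $\mv{S}$ and include $\Pa(\mv{S})$) or assigning them differently has no effect, provided we keep the assignment to $\Pa(\mv{S})$ fixed. The plan is to reduce the claim to a $d$-separation argument in the mutilated graph $G_{\overline{\mv{D}}}$ and invoke the graphical criterion for independence stated earlier in the excerpt.

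First I would write out $P_{\mathcal{M}}[\mv{s} \mid do(\mv{d})]$ using Definition~\ref{def:causal Bayesnet}: it is the marginal on $\mv{S}$ of the Bayesian network defined by the mutilated graph $G_{\overline{\mv{D}}}$, where the variables in $\mv{D}$ have been clamped to $\mv{d}$. In $G_{\overline{\mv{D}}}$ every vertex of $\mv{D}$ is a source (all incoming edges removed), so the clamped values are simply external constants feeding into their children. The key observation is that, since $\mv{D} \supseteq \Pa(\mv{S})$, in $G_{\overline{\mv{D}}}$ the set $\mv{S} \cup \An_{G_{\overline{\mv{D}}}}(\mv{S})$ has, among the vertices of $\mv{D}$, only $\Pa(\mv{S})$ as immediate inputs — any other vertex of $\mv{D}$ is either a non-ancestor of $\mv{S}$ in the mutilated graph, or is itself separated from $\mv{S}$ once we condition on (i.e., fix) $\Pa(\mv{S})$. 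Concretely, I would argue that $\mv{S}$ is $d$-separated from $\mv{D} \setminus \Pa(\mv{S})$ by $\Pa(\mv{S})$ in the graph $G_{\overline{\mv{D}}}$: any path from $\mv{S}$ to a vertex of $\mv{D}\setminus\Pa(\mv{S})$ must leave $\mv{S}$, and since parents of $\mv{S}$ that lie in $\mv{V}$ are in $\Pa(\mv{S})\subseteq \mv{D}$ (hence have no incoming edges in $G_{\overline{\mv{D}}}$ and act as blocking fork/chain nodes when conditioned on), while the bi-directed (unobservable) parents of $\mv{S}$ only connect $\mv{S}$ to other vertices through colliders, the path is blocked. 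This needs to be done carefully by case analysis on the first edge incident to $\mv{S}$ along the path.

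Given that $d$-separation, the graphical criterion for independence yields $[\mv{S} \indep (\mv{D}\setminus\Pa(\mv{S})) \mid \Pa(\mv{S})]$ in the distribution defined by $G_{\overline{\mv{D}}}$; but in that distribution the variables of $\mv{D}$ are deterministically set, so conditioning on $\Pa(\mv{S}) = \pa(\mv{S})$ already determines the full assignment $\mv{d}$, and the marginal of $\mv{S}$ is unaffected by which values the remaining coordinates of $\mv{D}$ take. Hence $P_{\mathcal{M}}[\mv{s}\mid do(\mv{d})]$ depends on $\mv{d}$ only through $\pa(\mv{S})$. To finish, I would identify this common value with $P_{\mathcal{M}}[\mv{s}\mid do(\pa(\mv{S}))]$, which is the special case $\mv{D} = \Pa(\mv{S})$; this is immediate once we observe that intervening on $\Pa(\mv{S})$ alone, versus intervening on the larger set $\mv{D}$ with the same values on $\Pa(\mv{S})$, produces the same mutilated subgraph over $\mv{S}\cup\An(\mv{S})$ after marginalization.

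The main obstacle I anticipate is making the $d$-separation step airtight in the presence of bi-directed edges: one has to handle paths that exit $\mv{S}$ through an unobservable confounder $U$ (i.e., $S \leftarrow U \rightarrow S'$), confirm that such $U$ has no other role since in an \smcg\ it has exactly two children and no parents, and check that any way of reaching $\mv{D}\setminus\Pa(\mv{S})$ from there must pass through a collider that is not opened by conditioning on $\Pa(\mv{S})$ (whose descendants, in the mutilated graph, cannot include such colliders). An alternative, perhaps cleaner, route that avoids explicit $d$-separation is a direct manipulation of the summation in Definition~\ref{def:causal Bayesnet}: expand $P_{\mathcal{M}}[\mv{v}\mid do(\mv{d})]$ for $\mv{v}$ consistent with $\mv{s}$, group the product over $V_i \in \mv{V}\setminus\mv{D}$ according to whether $V_i \in \An(\mv{S}\cup\mv{S})$, and sum out the non-ancestors — a standard "pushing sums through products" argument — leaving an expression that manifestly depends on $\mv{d}$ only through $\pa(\mv{S})$. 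I would likely present whichever of these two is shorter, but both hinge on the same structural fact that $\Pa(\mv{S})$ screens $\mv{S}$ off from the rest of $\mv{D}$ in the mutilated graph.
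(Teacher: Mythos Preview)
Your proposal is correct and, in fact, far more detailed than the paper's own argument: the paper's proof is a single sentence asserting the intuition (``When the parents of $\mv{S}$, $\Pa(\mv{S})$, are targeted for intervention, the distribution on $\mv{S}$ remains the same irrespective of whether the other vertices in $(\mv{V} \setminus \mv{S})$ are intervened or not''), with no explicit $d$-separation or sum-pushing. Both routes you outline are valid ways to make that sentence rigorous; the direct ``push sums through products'' argument is probably the cleaner of the two, since in $G_{\overline{\mv{D}}}$ the observable ancestors of $\mv{S}$ are exactly $\mv{S}\cup\Pa(\mv{S})$ (any observable path into $\mv{S}$ is cut at $\Pa(\mv{S})\subseteq\mv{D}$, and unobservables are roots), so summing out the non-ancestors in reverse topological order leaves an expression depending on $\mv{d}$ only through $\pa(\mv{S})$.
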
        

\begin{proof}
When the parents of $\mv{S}$, $\Pa(\mv{S})$, are targeted for intervention, the distribution on $\mv{S}$ remains the same irrespective of whether the other vertices in $(\mv{V} \setminus \mv{S})$ are intervened or not. 
\end{proof}

\begin{lemma}[c-component factorization, \cite{tian-pearl}] \label{lemma:c-component-factorization}
Given a {\smbn} $\mathcal{M}$ with respect to the causal graph $G$ and a subset $\mathbf{X} \subseteq \mathbf{V}$, let $C(\mathbf{V \backslash X}) = \{\mathbf{S}_1, \ldots ,\mathbf{S}_k\}$.  For any given assignment $\mathbf{v}$, $$P_{\mathcal{M}}[\mathbf{v \setminus x} \mid do(\mathbf{x})] = \prod_i P_{\mathcal{M}}[\mathbf{s}_i  \mid  do(\mathbf{v \setminus s}_i)].$$
\end{lemma}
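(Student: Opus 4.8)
The plan is to prove both sides equal by expanding them directly from the definition of the interventional distributions of a semi-Markovian Bayesian network (Definition~\ref{def:causal Bayesnet}) and matching the resulting finite sums of products. Write $\mathbf{x} = \mathbf{v}|_{\mathbf{X}}$ for the restriction of $\mathbf{v}$ to $\mathbf{X}$, and recall that in an \smcg\ every unobservable vertex is a root, so $\Pr[u \mid \bm{\pi}(U)] = \Pr[u]$. Then Definition~\ref{def:causal Bayesnet} gives
\[
P_{\mathcal{M}}[\mathbf{v}\setminus\mathbf{x} \mid do(\mathbf{x})] \;=\; \sum_{\mathbf{u}} \Bigl(\prod_{V \in \mathbf{V}\setminus\mathbf{X}} \Pr[v \mid \bm{\pi}(V)]\Bigr)\Bigl(\prod_{U \in \mathbf{U}} \Pr[u]\Bigr),
\]
where each $\bm{\pi}(V)$ is read off from the joint assignment $(\mathbf{v},\mathbf{u})$.

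First I would organize the unobservables. For each c-component $\mathbf{S}_i$ of $G[\mathbf{V}\setminus\mathbf{X}]$, let $\mathbf{U}_i$ be the set of unobservables having at least one child in $\mathbf{S}_i$, and set $\mathbf{U}_0 = \mathbf{U}\setminus\bigcup_{i\ge 1}\mathbf{U}_i$. The crucial structural observation is that $\mathbf{U}_0,\mathbf{U}_1,\dots,\mathbf{U}_k$ partition $\mathbf{U}$: an unobservable $U$ has exactly two observable children; if both lie in $\mathbf{V}\setminus\mathbf{X}$ then $U$ witnesses a bi-directed edge between them in $G[\mathbf{V}\setminus\mathbf{X}]$, so they lie in a common c-component and $U$ belongs to a unique $\mathbf{U}_i$; if exactly one child lies outside $\mathbf{X}$, that child determines the unique $\mathbf{U}_i$ containing $U$; otherwise $U\in\mathbf{U}_0$. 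The same observation shows that every unobservable parent of a vertex in $\mathbf{S}_i$ lies in $\mathbf{U}_i$, and that every observable parent of a vertex in $\mathbf{S}_i$ — whether it sits in $\mathbf{S}_i$, in another c-component, or in $\mathbf{X}$ — is assigned its value from $\mathbf{v}$. Hence $\prod_{V\in\mathbf{S}_i}\Pr[v\mid\bm{\pi}(V)]$ depends on $\mathbf{u}$ only through its restriction $\mathbf{u}_i$ to $\mathbf{U}_i$; denote this quantity $F_i(\mathbf{u}_i)$ (with $\mathbf{v}$ fixed throughout).

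Next I would regroup the big product by c-components, split the unobservable factors along $\mathbf{U}_0,\dots,\mathbf{U}_k$, and push the sum over $\mathbf{u}$ inside. Since the $i$-th group depends only on $\mathbf{u}_i$ and $\sum_{\mathbf{u}_0}\prod_{U\in\mathbf{U}_0}\Pr[u] = 1$, this yields
\[
P_{\mathcal{M}}[\mathbf{v}\setminus\mathbf{x}\mid do(\mathbf{x})] \;=\; \prod_{i=1}^k \Bigl(\sum_{\mathbf{u}_i} F_i(\mathbf{u}_i)\prod_{U\in\mathbf{U}_i}\Pr[u]\Bigr).
\]
Finally, expanding $P_{\mathcal{M}}[\mathbf{s}_i \mid do(\mathbf{v}\setminus\mathbf{s}_i)]$ from Definition~\ref{def:causal Bayesnet} — here the intervention is on $\mathbf{V}\setminus\mathbf{S}_i$, which contains $\mathbf{X}$, so the only free observables are those of $\mathbf{S}_i$, each given its value under $\mathbf{v}\setminus\mathbf{s}_i$ — produces exactly $\sum_{\mathbf{u}}\bigl(\prod_{V\in\mathbf{S}_i}\Pr[v\mid\bm{\pi}(V)]\bigr)\bigl(\prod_{U\in\mathbf{U}}\Pr[u]\bigr)$, which collapses to $\sum_{\mathbf{u}_i}F_i(\mathbf{u}_i)\prod_{U\in\mathbf{U}_i}\Pr[u]$ by the same structural facts (the observable parent values agree with those above, and only $\mathbf{u}_i$ matters). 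Comparing the two displays finishes the proof.

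I expect the only genuinely delicate points to be the partition claim for the unobservables and the check that observable parents of $\mathbf{S}_i$ receive the same values in the two expansions; both follow from the defining property of an \smcg\ (every unobservable is a root with exactly two observable children) together with the fact that a shared unobservable parent forces its two children into the same c-component of $G[\mathbf{V}\setminus\mathbf{X}]$. Everything else is bookkeeping on a finite sum of products.
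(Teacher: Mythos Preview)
The paper does not actually prove this lemma; it is stated with a citation to Tian and Pearl and used as a black box. So there is no ``paper's own proof'' to compare against.

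Your argument is correct and is essentially the standard direct proof from the definition of interventional distributions in an \smbn. The partition of $\mathbf{U}$ into $\mathbf{U}_0,\mathbf{U}_1,\dots,\mathbf{U}_k$ is well-defined for exactly the reason you give (an unobservable has two observable children, and if both survive in $G[\mathbf{V}\setminus\mathbf{X}]$ they are forced into the same c-component), and the factorization of the sum over $\mathbf{u}$ then follows because the $\mathbf{S}_i$-block of conditional probabilities depends only on $\mathbf{u}_i$. The matching of each factor with $P_{\mathcal{M}}[\mathbf{s}_i\mid do(\mathbf{v}\setminus\mathbf{s}_i)]$ is also right: the observable parents of a vertex in $\mathbf{S}_i$ take their values from $\mathbf{v}$ in both expansions, and the irrelevant unobservables marginalize to $1$. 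Nothing is missing.
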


For a given {\smcg} $G$, the in-degree and out-degree of an observable vertex $V_i \in \mv{V}$ denote the number of observable parents and observable children of $V_i$ in $G$ respectively.  The maximum in-degree of a {\smcg} $G$ is the maximum in-degree over all the observable vertices.  The maximum degree of a {\smcg} $G$ is the maximum of the sum of the in-degree and out-degree over all the observable vertices.

\begin{definition}[Graphs with {\em bounded in-degree} and {\em bounded c-component}]
$\mathcal{G}_{d,\ell}$ denotes the class of {\smcg}s with maximum in-degree at most $d$ and the size of the largest c-component at most $\ell$.
\end{definition}

\subsection{Problem Definitions}
\noindent Here we define the testing and learning problems considered in the paper.  Let $\mathcal{M}$ and $\mathcal{N}$ be two \smbn s.  We say that $\mathcal{M} = \mathcal{N}$, if   
\begin{align*}
P_{\mathcal{M}}[\mv{V} \setminus \mv{T} \mid do(\mv{t})] = P_{\mathcal{N}}[\mv{V} \setminus \mv{T} \mid do(\mv{t})] \qquad \forall \mv{T} \subseteq \mv{V}, \mv{t} \in \Sigma^{|\mv{T}|}.
\end{align*}

\noindent And we say that $\Delta(\mathcal{M},\mathcal{N}) > \epsilon $, if there exists $\mv{T} \subseteq \mv{V}$ and $\mv{t} \in \Sigma^{|\mv{T}|}$ such that 
\begin{align*}
\delta_{TV}(P_{\mathcal{M}}[\mv{V} \setminus \mv{T} \mid do(\mv{t})], P_{\mathcal{N}}[\mv{V} \setminus \mv{T} \mid do(\mv{t})]) > \epsilon.
\end{align*}

\begin{definition}[Causal Goodness-of-fit Testing ($\cgft$)]
Given a {\smcg} $G$, a (known) {\smbn} $\mathcal{M}$ on $G$, and $\epsilon > 0$. Let $\mathcal{X}$ denote an unknown {\smbn} on $G$.  The objective of $\cgft$ is to distinguish between $\mathcal{X} = \mathcal{M}$ versus $\Delta(\mathcal{X},\mathcal{M}) > \epsilon$ with probability at least 2/3, by performing interventions and taking samples from the resulting interventional distributions of $\mathcal{X}$.
\end{definition}

\begin{definition}[Causal Two-sample Testing ($\cit$)]
Given a {\smcg} $G$, and $\epsilon > 0$.  Let  $\mathcal{X}$ and $\mathcal{Y}$ be two unknown \smbn s on $G$. The objective of $\cit$ is to distinguish between $\mathcal{X}=\mathcal{Y}$ versus $\Delta(\mathcal{X},\mathcal{Y}) > \epsilon$ with probability at least 2/3, by performing interventions and taking samples from the resulting interventional distributions of $\mathcal{X}$ and $\mathcal{Y}$.
\end{definition}

\begin{definition}[Learning SMBNs ($\clp$)]\label{def:learn}
Given a {\smcg} $G$ and $\epsilon > 0$.  Let $\mathcal{X}$ be an unknown  {\smbn} on $G$.  The objective of $\clp$ is to perform  interventions and taking samples from the resulting interventional distributions of $\mathcal{X}$, and return an oracle that for any $\mv{T} \subseteq \mv{V}$ and $\mv{t} \in \Sigma^{|\mv{T}|}$ returns an estimated interventional distribution $P_{ES}[\mv{V} \setminus \mv{T} \mid do(\mv{t})]$ such that \[ \delta_{TV}([P_{\mathcal{X}}[\mv{V} \setminus \mv{T} \mid do(\mv{t})], P_{ES}[\mv{V} \setminus \mv{T} \mid do(\mv{t})]) < \epsilon.\]
\end{definition}

We emphasize that in all three problems, the causal graph $G$ is known explicitly in advance. 

\section{Testing and Learning Algorithms for \smbn s} \label{sec:main results}
Before we discuss our algorithms, we begin by defining \emph{covering intervention sets.} 

\begin{definition} \label{defn:covering_interventions}
A set of interventions $\mv I$ is a \emph{covering intervention set} if for every subset $\mv S$ of every c-component, and every assignment $\pa(\mv S)\in \Sigma^{|\Pa(\mv S)|}$ there exists an $I\in \mv I$ such that,
\begin{itemize}
\item
No node in $\mv S$ is intervened in $I$. 
\item
Every node in $\Pa(S)$ is intervened.
\item
$I$ restricted to $\Pa(S)$ has the assignment $\pa(S)$. 
\end{itemize}
\end{definition}

\noindent Our algorithms comprise of two key arguments.
\begin{itemize}
\item 
A procedure to compute a covering intervention set ${\mv I}$ of \emph{small size}.
\item
A sub-additivity result for \cbn s that allows us to localize the distances: where we show that two \cbn s are far implies there exist a marginal distribution of some intervention in ${\mv I}$ such that the marginals are far.
\end{itemize}
These two results are formalized in Section~\ref{sec:covering-sets}, and Section~\ref{sec:aub-additivity} respectively.
\subsection{Testing}
Our main testing result is the following upper bound for testing of causal models.

\begin{theorem}[Algorithm for $\cit$] \label{outTestingResult} 
Let $G$ be a {\smcg} $\in \mathcal{G}_{d,\ell}$ with $n$ vertices. Let the variables take values over a set $\Sigma$ of size $K$.  Then, there is an algorithm to solve $\cit$,
that makes $O(K^{\ell d} (3d)^\ell \log n )$ interventions to each of the unknown {\smbn}s $\mathcal{X}$ and $\mathcal{Y}$, taking $\tilde{O}(K^{\ell(d+7/4)} n {\epsilon^{-2}})$ samples per intervention, in time $\tilde{O}(2^{\ell} K^{\ell(2d+7/4)} n^2 \epsilon^{-2})$.

When the maximum degree (in-degree plus out-degree) of $G$ is bounded by $d$, then our algorithm uses $O(K^{\ell d} (3d)^\ell \ell d^2 \log K)$ interventions with the same sample complexity and running time as above.
\end{theorem}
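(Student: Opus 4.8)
The plan is to combine the two structural ingredients advertised in the paper: a small covering intervention set, and a subadditivity inequality for squared Hellinger distance on CBNs. First I would invoke the reduction to semi-Markovian Bayesian networks, so that $G$ decomposes into c-components each of size at most $\ell$, and by Lemma~\ref{lemma:c-component-factorization} every interventional distribution factors as a product of ``local'' distributions $P_{\mathcal{X}}[\mv{s}_i \mid do(\mv{v}\setminus \mv{s}_i)]$. By Lemma~\ref{lemma:trimUnnecessaryActions} each such local distribution depends only on the observable parents of $\mv{S}_i$, so it is a distribution over $\Sigma^{|\mv{S}_i|}$, a domain of size at most $K^\ell$, parametrized by the assignment $\pa(\mv{S}_i)\in\Sigma^{|\Pa(\mv{S}_i)|}$, a set of size at most $K^{\ell d}$. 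The algorithm computes a covering intervention set $\mv I$ (via the procedure of Section~\ref{sec:covering-sets}, which I will assume gives $|\mv I| = O(K^{\ell d}(3d)^\ell \log n)$, matching the intervention count in the statement), performs each $I\in\mv I$ on both $\mathcal{X}$ and $\mathcal{Y}$, and for every subset $\mv S$ of a c-component and every parent assignment $\pa(\mv S)$ picks the guaranteed $I\in\mv I$ that exposes the pair of local distributions $P_{\mathcal{X}}[\mv S\mid do(\pa(\mv S))]$, $P_{\mathcal{Y}}[\mv S\mid do(\pa(\mv S))]$. On each such pair it runs the Hellinger two-sample tester of Lemma~\ref{hellingerTest} with threshold $\gamma := c\,\eps^2/n$ for a suitable constant $c$, boosted (via the $O(\log(1/\delta))$ overhead) to failure probability $\delta$ small enough to union-bound over all $O(K^{\ell d}(3d)^\ell)$ tested pairs. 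The tester rejects iff some pair is declared far.

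For correctness: if $\mathcal{X}=\mathcal{Y}$, every local pair is identical, so every invocation of Lemma~\ref{hellingerTest} accepts except with probability $\delta$, and a union bound gives overall acceptance w.p.\ $\ge 2/3$. If $\Delta(\mathcal{X},\mathcal{Y})>\eps$, then by definition some intervention $do(\mv t)$ yields $\delta_{TV}>\eps$ between the joint interventional distributions; by Lemma~\ref{Hellinger-Tv-Inequality} their squared Hellinger distance is $>\eps^2/2$. Here is where the subadditivity theorem of Section~\ref{sec:aub-additivity} enters in contrapositive form: if every local pair were within $\gamma$ in squared Hellinger distance, then every joint interventional distribution would be within $O(n\gamma)$ in squared Hellinger distance. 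Choosing $c$ so that $O(n\gamma) = O(\eps^2) < \eps^2/2$ forces the existence of some local pair $\mv S, \pa(\mv S)$ with $H^2 > \gamma$; the covering property guarantees this pair is among those tested, and Lemma~\ref{hellingerTest} detects it (w.p.\ $\ge 1-\delta$), so the algorithm rejects.

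For the resource bounds: the number of interventions is $|\mv I|=O(K^{\ell d}(3d)^\ell\log n)$. Per intervention, the sample complexity is driven by running the Hellinger tester on domains of size $D\le K^\ell$ with accuracy $\gamma^{1/2}=\Theta(\eps/\sqrt n)$: plugging $D=K^\ell$ and $\eps_{\text{test}}=\eps/\sqrt n$ into the $\tilde O(D^{3/4}/\eps_{\text{test}}^2)$ bound of Lemma~\ref{hellingerTest} gives $\tilde O(K^{3\ell/4}\, n/\eps^2)$ per pair; but on a single intervention $I$ one must simultaneously estimate all local distributions whose parents $I$ fixes — there are $O(K^{\ell d})$ relevant $(\mv S,\pa(\mv S))$ pairs total across all interventions, and bookkeeping the per-intervention sample requirement together with the $\log(1/\delta)$ boosting for the union bound yields the stated $\tilde O(K^{\ell(d+7/4)} n\eps^{-2})$ samples per intervention; the running time is quasi-linear in the total sample size times the number of tested pairs, giving $\tilde O(2^\ell K^{\ell(2d+7/4)} n^2\eps^{-2})$. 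For the bounded-degree variant, one observes that if the total degree is bounded by $d$ then a single vertex has $\le d$ observable children, so the covering-set construction can be made more efficient and deterministic with the stated $O(K^{\ell d}(3d)^\ell \ell d^2\log K)$ count, without changing the sampling or timing analysis.

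The main obstacle I expect is the subadditivity step: controlling $H^2$ of a joint interventional distribution by the sum of $H^2$ of the local distributions, in the presence of unobservable confounders, is exactly the ``new proof technique'' the authors flag — naively, Hellinger does not tensorize across a product when the factors are coupled through shared conditioning variables and do-operators, so one must set up the minimization program over local-closeness constraints described in the techniques overview and lower-bound its objective. A secondary (but more routine) difficulty is making the covering-set size and the per-intervention sample accounting match the precise exponents $K^{\ell(d+7/4)}$ and $2^\ell$ in the statement; that is careful bookkeeping rather than a conceptual hurdle.
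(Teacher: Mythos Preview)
Your proposal is correct and follows essentially the same approach as the paper: construct a covering intervention set, run local Hellinger two-sample tests on each pair $P_{\mathcal{X}}[\mv S\mid do(\pa(\mv S))]$, $P_{\mathcal{Y}}[\mv S\mid do(\pa(\mv S))]$, and invoke the subadditivity theorem in contrapositive for soundness.

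One point worth sharpening: your threshold $\gamma = c\,\eps^2/n$ and the claim that subadditivity gives an $O(n\gamma)$ bound on the joint $H^2$ are only correct if $c$ is allowed to depend on $K,d,\ell$. The actual subadditivity theorem (Theorem~\ref{theorem:mainTheorem}) yields $H^2(\text{joint}) \le \gamma\, K^{\ell(d+1)}\, n$, so the paper sets $\gamma = \eps^2/(2K^{\ell(d+1)}n)$. This extra $K^{\ell(d+1)}$ factor, combined with the $K^{3\ell/4}$ from the domain size $D\le K^\ell$ in Lemma~\ref{hellingerTest}, is precisely what produces the $K^{\ell(d+7/4)}$ exponent in the per-intervention sample count. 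Similarly, the union bound is over $n\cdot 2^\ell\cdot K^{\ell d}$ sub-tests (one per c-component subset and parent assignment), not $O(K^{\ell d}(3d)^\ell)$ as you wrote; this is where the $2^\ell$ factor in the running time comes from. So the ``bookkeeping'' you flagged is indeed routine, but it is not a free constant---it is exactly the mechanism generating the stated exponents.
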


This result gives Theorem~\ref{thm:maininf} as a corollary, since two sample tests are harder than one sample tests. 

\begin{proof}[Proof of Theorem~\ref{outTestingResult}]
Our algorithm is described in Algorithm~\ref{alg:1}. 

The algorithm starts with a covering intervention set $\mv I$. Lemma~\ref{lemma:counting_argument} gives an $\mv I$ with $O(K^{\ell d}(3d)^\ell(\log n+\ell d \log K))$ interventions.  When the maximum degree is bounded by $d$, then Lemma~\ref{lemma:counting_lovaszLocal} gives an $\mv I$ of size $O(K^{\ell d} (3d)^\ell \ell d^2 \log K)$. Moreover, by the remarks following Lemmas \ref{lemma:counting_argument} and \ref{lemma:counting_lovaszLocal}, $\mv I$ can be found in $\tilde{O}(n)$ time.

\begin{center}
\framebox{
\begin{minipage}[ht]{0.9\linewidth}
\captionof{algocf}{Algorithm for $\cit$}
\label{alg:1}
$\mv I$: Covering intervention set
\begin{enumerate}
\item 
Under each intervention $I \in {\mv I}$:
\begin{enumerate}
\item
Obtain $\tilde{O}(K^{\ell(d+7/4)} n  \epsilon^{-2})$ samples from the interventional distribution of $I$ in both models $\mathcal{X}$ and $\mathcal{Y}$. 
\item
For any subset $\mv{S}$ of a c-component of $G$, if $I$ does not set $\mv{S}$ but sets $\Pa(\mv{S})$ to $\pa(\mv{S})$, then using Lemma~\ref{hellingerTest}, Lemma~\ref{lemma:trimUnnecessaryActions} and the obtained samples, test (with error probability at most $1/(3 K^{\ell d} 2^{\ell} n )$):
\begin{align*}
P_{\mathcal{X}}[\mv{S} \mid do(\pa(\mv{S}))] = P_{\mathcal{Y}}[\mv{S}  \mid  do(\pa(\mv{S}))]  \text { versus }
H^2\left(\begin{array}{l} P_{\mathcal{X}}[\mv{S} \mid do(\pa(\mv{S}))], \\ \newline P_{\mathcal{Y}}[\mv{S}  \mid  do(\pa(\mv{S}))] \end{array}\right) \geq \dfrac{\epsilon^2}{2 K^{\ell (d+1)} n}
\end{align*}
Output ``$\Delta(\mathcal{X},\mathcal{Y})>\eps$'' if the latter.
\end{enumerate}
\item
Output ``$\mathcal{X}=\mathcal{Y}$''.
\end{enumerate}
\end{minipage}
}
\end{center}

We will now analyze the performance of our algorithm.
 
\medskip
\noindent\textbf{Number of interventions, time, and sample requirements.} The number of interventions is the size of $\mv I$, bounded from Lemma~\ref{lemma:counting_argument} or Lemma~\ref{lemma:counting_lovaszLocal}. The number of samples per intervention is given in the algorithm. The algorithm performs $n 2^{\ell} K^{\ell d}$ sub-tests.  And for each such sub-test, the algorithm's running time is quasi-linear in the sample complexity (Lemma \ref{hellingerTest}), therefore taking a total time of $\tilde{O}(2^{\ell} K^{\ell(2d+7/4)} n^2  \epsilon^{-2})$. 

\medskip
\noindent\textbf{Correctness.} In Theorem~\ref{theorem:mainTheorem}, we show that when $\Delta(\mathcal{X}, \mathcal{Y})>\eps$, there exists a subset $\mv S$ of some c-component, and an $I\in \mv I$ that does not intervene any node in $\mv S$ but intervenes $\Pa(\mv{S})$ with some assignment $\pa(\mv{s})$ such that
\[
H^2(P_{\mathcal{X}}[\mv{S} \mid do(\pa(\mv{S}))], P_{\mathcal{Y}}[\mv{S} \mid do(\pa(\mv{S}))]) > \epsilon^2/(2 K^{\ell (d+1)} n).
\]
This structural result is the key to our algorithm.  This together with Lemma~\ref{Hellinger-Tv-Inequality} proves that $P_{\mathcal{X}}$ and $P_{\mathcal{Y}}$ are far in terms of the total variation distance.  To bound the error probability, note that the number of total sub-tests we run is bounded by $K^{\ell d} n 2^{\ell}$, and the error probability for each subset is at most $1/(3 K^{\ell d} 2^{\ell} n)$, by the union bound, we will have an error of at most $1/3$ over the entire algorithm.
\end{proof} 

In some cases, the underlying $\smcg$ might not be known. We will now consider the problem of two sample testing, where $\mathcal{X}$ and $\mathcal{Y}$ are still on the same common $\smcg$ $G$, but $G$ is unknown. We now show an algorithm that uses the same number of interventions and samples as Theorem~\ref{outTestingResult} for the known $G$ case, however requiring $O(n^{\ell+1}K^{\ell (2d+7/4)} \epsilon^{-2})$ time. 


\begin{theorem}[Algorithm for $\cit$ -- Unknown graph] \label{outTestingResult:unknownGraph} 
Consider the same set-up as Theorem~\ref{outTestingResult}, except that the $\smcg$ $G\in\mathcal{G}_{d,\ell}$ is unknown.
Then, there is an algorithm to this problem, that makes $O(K^{\ell d} (3d)^\ell \log n )$ interventions to $\mathcal{X}$ and $\mathcal{Y}$, taking $\tilde{O}(K^{\ell(d+7/4)} n {\epsilon^{-2}})$ samples per intervention, in time $\tilde{O}({n}^{\ell} K^{\ell(2d+7/4)} n \epsilon^{-2})$.
\end{theorem}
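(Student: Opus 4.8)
The plan is to reuse the known-graph tester (Algorithm~\ref{alg:1}) essentially verbatim, replacing only its two graph-dependent ingredients --- the covering intervention set and the enumeration of subsets of c-components --- by graph-oblivious substitutes; the sole cost will be an extra $n^{\ell}$ factor in the running time. The key observation is that the covering intervention set of Lemma~\ref{lemma:counting_argument} is produced by pure random sampling (independently for each vertex of $\mv V$, leave it un-intervened with probability $q$ and otherwise set it to a uniformly random symbol of $\Sigma$), and $G$ enters the analysis only through the union bound over the targets that must be covered. So I would sample $O(K^{\ell d}(3d)^{\ell}\log n)$ interventions by the same procedure, with $q$ chosen as in the proof of Lemma~\ref{lemma:counting_argument} (so that a single sampled intervention hits a given target with probability $q^{\ell}((1-q)/K)^{\ell d}=\Omega((3d)^{-\ell}K^{-\ell d})$), but now take the ``targets'' to be all triples $(\mv S,\mv P,\mv p)$ with $\mv S\subseteq\mv V$, $|\mv S|\le\ell$, $\mv P\subseteq\mv V\setminus\mv S$, $|\mv P|\le\ell d$, and $\mv p\in\Sigma^{|\mv P|}$; there are at most $n^{\ell}\cdot n^{\ell d}\cdot K^{\ell d}$ of them, so the same number of samples as in the known-graph case suffices (the extra $\poly(\ell d)$ factor multiplying $\log n$ is absorbed into the $O(\cdot)$, exactly as the $\ell d\log K$ term of Lemma~\ref{lemma:counting_argument} is absorbed when it is quoted in Theorem~\ref{outTestingResult}). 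With probability, say, $\ge 9/10$ the resulting set $\mv I$ covers every triple, and hence, since any subset of a c-component of any $G'\in\mathcal G_{d,\ell}$ has size $\le\ell$ and observable-parent set of size $\le\ell d$, the set $\mv I$ is a covering intervention set in the sense of Definition~\ref{defn:covering_interventions} for \emph{every} $G'\in\mathcal G_{d,\ell}$ simultaneously --- in particular for the unknown true graph $G$.

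Given such an $\mv I$, I would run Algorithm~\ref{alg:1} with one change: in Step~1(b) the inner loop ranges over \emph{all} $\mv S\subseteq\mv V$ with $|\mv S|\le\ell$ not intervened by the current $I$ (the c-components of $G$ being unavailable), and for each such $\mv S$ we feed the Hellinger tester of Lemma~\ref{hellingerTest} the empirical marginals on $\mv S$ of the samples drawn under $do(I)$ from $\mathcal X$ and from $\mathcal Y$, testing $P_{\mathcal X}[\mv S\mid do(I)]=P_{\mathcal Y}[\mv S\mid do(I)]$ versus $H^2\ge\epsilon^2/(2K^{\ell(d+1)}n)$ with error probability $1/(3\,|\mv I|\,n^{\ell})$, and we output ``$\Delta(\mathcal X,\mathcal Y)>\epsilon$'' the first time a gap is detected. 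Soundness is immediate: if $\mathcal X=\mathcal Y$ then every such marginal agrees, and a union bound over the at most $|\mv I|\,n^{\ell}$ sub-tests gives the correct answer with probability $\ge2/3$. For completeness, suppose $\Delta(\mathcal X,\mathcal Y)>\epsilon$; since $\mv I$ is a covering set for $G$, Theorem~\ref{theorem:mainTheorem} produces a subset $\mv S^{*}$ of a c-component of $G$ and an $I^{*}\in\mv I$ that touches no node of $\mv S^{*}$, intervenes on a set $\mv D\subseteq\mv V\setminus\mv S^{*}$ with $\mv D\supseteq\Pa(\mv S^{*})$, and has $H^2(P_{\mathcal X}[\mv S^{*}\mid do(\pa(\mv S^{*}))],P_{\mathcal Y}[\mv S^{*}\mid do(\pa(\mv S^{*}))])>\epsilon^2/(2K^{\ell(d+1)}n)$. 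By Lemma~\ref{lemma:trimUnnecessaryActions} (applied with this $\mv D$), $P_{\mathcal X}[\mv S^{*}\mid do(I^{*})]=P_{\mathcal X}[\mv S^{*}\mid do(\pa(\mv S^{*}))]$ and likewise for $\mathcal Y$, so the sub-test that the algorithm performs on the pair $(\mv S^{*},I^{*})$ --- it does perform it, because $|\mv S^{*}|\le\ell$ --- sees a squared-Hellinger gap of at least $\epsilon^2/(2K^{\ell(d+1)}n)$ and rejects with high probability; with Lemma~\ref{Hellinger-Tv-Inequality} this certifies $\Delta(\mathcal X,\mathcal Y)>\epsilon$. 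The conceptual point is that the tester never needs $\Pa(\mv S)$: it only ever examines full interventional marginals $P[\mv S\mid do(I)]$, and Lemma~\ref{lemma:trimUnnecessaryActions} silently collapses $do(I^{*})$ to $do(\pa(\mv S^{*}))$ on exactly the subset flagged by Theorem~\ref{theorem:mainTheorem}.

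It then remains to tally resources. The number of interventions is $|\mv I|=O(K^{\ell d}(3d)^{\ell}\log n)$, as in Theorem~\ref{outTestingResult}. Each sub-test sees a domain of size at most $K^{\ell}$, uses threshold $\eta^2=\epsilon^2/(2K^{\ell(d+1)}n)$ and error parameter $\delta=1/(3|\mv I|n^{\ell})$ with $\log(1/\delta)=\tilde O(1)$, so Lemma~\ref{hellingerTest} requires $\tilde O(K^{3\ell/4}\cdot K^{\ell(d+1)}n/\epsilon^2)=\tilde O(K^{\ell(d+7/4)}n\,\epsilon^{-2})$ samples, drawn once per intervention and reused across all sub-tests on that intervention --- the same as in Theorem~\ref{outTestingResult}. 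The running time is the number of interventions, times the at most $n^{\ell}$ sub-tests per intervention, times the quasi-linear cost of one Hellinger tester, i.e.\ $\tilde O(|\mv I|\cdot n^{\ell}\cdot K^{\ell(d+7/4)}n\,\epsilon^{-2})=\tilde O(n^{\ell}K^{\ell(2d+7/4)}n\,\epsilon^{-2})$, as claimed; the $\le 1/10$ chance that the random construction of $\mv I$ fails is folded into the overall error budget. I expect the only real subtlety to be verifying that the graph-oblivious covering set still has size $O(K^{\ell d}(3d)^{\ell}\log n)$ despite its union bound now ranging over $n^{\ell(d+1)}K^{\ell d}$ targets instead of over the c-component subsets of a fixed graph; everything else is a transcription of the proof of Theorem~\ref{outTestingResult} with $\binom{n}{\le\ell}$ standing in for the number of c-component subsets.
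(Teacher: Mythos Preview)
Your proposal is correct and follows essentially the same approach as the paper: build a graph-oblivious covering intervention set, then for each intervention enumerate all $\mv S\subseteq\mv V$ with $|\mv S|\le\ell$ that are left un-intervened and run the Hellinger sub-test on the marginals $P[\mv S\mid do(I)]$, invoking Lemma~\ref{lemma:trimUnnecessaryActions} and Theorem~\ref{theorem:mainTheorem} exactly as you do for correctness. The algorithm, the sample count, and the running-time accounting match the paper's proof.

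There is one point where you work harder than necessary. You ask the random covering set to hit \emph{every} triple $(\mv S,\mv P,\mv p)$ with $|\mv S|\le\ell$, $|\mv P|\le\ell d$, so that $\mv I$ is simultaneously covering for all graphs in $\mathcal G_{d,\ell}$. The paper's observation is simpler: the random construction in Lemma~\ref{lemma:counting_argument} never looks at $G$, so its success-probability analysis applies verbatim to the true (unknown but fixed) $G$. One therefore gets a covering set for $G$ with the \emph{same} size bound as in the known-graph case, without union-bounding over $n^{\ell(d+1)}K^{\ell d}$ targets. Your route still works, but the union bound over all triples costs an extra $\ell(d+1)$ factor multiplying $\log n$; since $\ell$ and $d$ appear explicitly in the stated bounds, ``absorbed into the $O(\cdot)$'' is not entirely clean. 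Replacing your all-triples argument with the one-line remark that Lemma~\ref{lemma:counting_argument} is already graph-oblivious removes this slack and matches the paper exactly.
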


\begin{proof}
We first use Lemma~\ref{lemma:counting_argument} and obtain a set of interventions $\mv{I}$, such that $\mv{I}$ is a covering set with error probability at most $1/6$.  Note that Lemma~\ref{lemma:counting_argument} holds even when the underlying graph $G$ is unknown.    
\begin{center} 
\framebox{
\begin{minipage}[ht]{0.9\linewidth}
\captionof{algocf}{Algorithm for $\cit$ -- Unknown graph}
$\mv I$: Covering intervention set
\begin{enumerate}
\item 
Under each intervention $I = \Pr[\mv{V \setminus T} \mid do(\mv{t})] \in {\mv I}$:
\begin{enumerate}
\item
Obtain $\tilde{O}(K^{\ell(d+7/4)} n  \epsilon^{-2})$ samples from the interventional distribution of $I$ in both models $\mathcal{X}$ and $\mathcal{Y}$.
\item
For each subset $\mv{S} \subseteq \mv{V \setminus T}$ of size $\leq \ell$, using Lemma~\ref{hellingerTest}, Lemma~\ref{lemma:trimUnnecessaryActions} and the obtained samples, test (with error probability at most $1/(6 K^{\ell d} 2^{\ell} n )$):
\begin{align*}
P_{\mathcal{X}}[\mv{S} \mid do(\mv{t})] = P_{\mathcal{Y}}[\mv{S}  \mid  do(\mv{t})] \quad \text {versus } \quad
H^2\left(\begin{array}{l} P_{\mathcal{X}}[\mv{S} \mid do(\mv{t})], \\ \newline P_{\mathcal{Y}}[\mv{S}  \mid  do(\mv{t})] \end{array}\right) \geq \dfrac{\epsilon^2}{2 K^{\ell (d+1)} n}
\end{align*}
Output ``$\Delta(\mathcal{X},\mathcal{Y})>\eps$'' if the latter.
\end{enumerate}
\item
Output ``$\mathcal{X}=\mathcal{Y}$''.
\end{enumerate}
\end{minipage}
}
\end{center}
For each intervention, we go over all subsets $S$ of size $\leq \ell$.  Therefore we perform at most $\binom{n}{\leq \ell} = O(n^\ell)$ sub-tests for an intervention.  For each sub-test, the algorithm's running time is quasi-linear in the sample complexity (Lemma \ref{hellingerTest}), therefore taking a total time of $O({n}^{\ell} K^{\ell(2d+7/4)} n \epsilon^{-2})$.  The number of interventions follow from Lemma~\ref{lemma:counting_argument} and the number of samples follow from the algorithm.  

\paragraph{Correctness.} As in the proof of Theorem~\ref{outTestingResult}, we use Theorem \ref{theorem:mainTheorem} to show that when $\Delta(\mathcal{X},\mathcal{Y}) > \epsilon$, then there exists a subset $\mv{S}$ of some c-component and an $I \in \mv I$ that does not intervene any node in $\mv S$ but intervenes $\Pa(\mv{S})$ with some assignment $\pa(\mv{s})$ such that 
$$ H^2(P_{\mathcal{X}}[\mv{S} \mid do(\pa(\mv{S}))],P_{\mathcal{Y}}[\mv{S} \mid do(\pa(\mv{S}))]) > \epsilon^2 / (2 K^{\ell (d+1)} n).$$
This together with Lemma~\ref{theorem:mainTheorem} proves that $P_{\mathcal{X}}$ and $P_{\mathcal{Y}}$ are far in terms of the total variation distance.  Since the error probability of each sub-test is bounded by at most $1/(6K^{\ell d} 2^{\ell} n)$ and the error probability of $\mv{I}$ being a covering intervention set is at most $1/6$, by union bound, we will have an error of at most $1/3$ over the entire algorithm.
\end{proof}

\subsection{Learning}
Our next result is on learning \smbn s over a known causal graph. Our algorithm is improper, meaning that it does not output a causal model in the form of an \smbn, but rather outputs an oracle which succinctly encodes all the  interventional distributions. See Definition \ref{def:learn} for a rigorous formulation of the problem.

\begin{theorem}[Algorithm for $\clp$] \label{ourLearningResult}
For any given \smcg~$G \in \mathcal{G}_{d,\ell}$ with $n$ vertices and a parameter $\eps > 0$, there exists an algorithm that takes as input an unknown {\smbn} $\mathcal{X}$ over $G$, that performs $O(K^{\ell d} (3d)^\ell \log n )$ interventions to $\mathcal{X}$, taking $\tilde{O}(K^{\ell(2d+3)} n^2 \epsilon^{-4})$ samples per intervention, that runs in time $\tilde{O}\Paren{2^{\ell} K^{\ell (3d+3)} n^3 \epsilon^{-4}}$, and that with probability at least $2/3$, outputs an oracle $\mathcal{N}$ with the following behavior. Given as input any $\mv{T}\subseteq \mv{V}$ and assignment $\mv{t} \in \Sigma^{|\mv{T}|}$, $\mathcal{N}$ outputs an interventional distribution $P_{\mathcal{N}}[\mv{V} \setminus \mv{T}|do(\mv{t})]$ such that:
\[
\delta_{TV}(P_{\mathcal{X}}[\mv{V} \setminus \mv{T} \mid do(\mv{t})], P_{\mathcal{N}}[\mv{V} \setminus \mv{T} \mid do(\mv{t})]) < \epsilon
\] 

When the maximum degree (in-degree plus out-degree) of $G$ is bounded by $d$, then our algorithm uses $O(K^{\ell d} (3d)^\ell \ell d^2 \log K)$ interventions with the same sample complexity and running time as above.
\end{theorem}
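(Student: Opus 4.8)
The plan is to reduce the learning problem $\clp$ to estimating, to high accuracy in squared Hellinger distance, all the \emph{local} interventional distributions $P_{\mathcal X}[\mv S\mid do(\pa(\mv S))]$ ranging over subsets $\mv S$ of c-components of $G$ and assignments $\pa(\mv S)\in\Sigma^{|\Pa(\mv S)|}$, and then to assemble those estimates into the oracle $\mathcal N$ via the c-component factorization. The interventions and samples come essentially for free from the covering-set machinery already built for testing: by Lemma~\ref{lemma:counting_argument} (or Lemma~\ref{lemma:counting_lovaszLocal} in the bounded-degree regime) one can, in $\tilde O(n)$ time, produce a covering intervention set $\mv I$ of the stated size whose interventions, performed on $\mathcal X$, expose every local distribution at once; indeed whenever $I\in\mv I$ avoids $\mv S$ and pins $\Pa(\mv S)$ to $\pa(\mv S)$, Lemma~\ref{lemma:trimUnnecessaryActions} says the marginal on $\mv S$ of the samples drawn under $I$ is distributed exactly as $P_{\mathcal X}[\mv S\mid do(\pa(\mv S))]$. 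So the first step is: compute $\mv I$, and under each $I\in\mv I$ draw $\tilde O(K^{\ell(2d+3)}n^2\epsilon^{-4})$ samples from $\mathcal X$.

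Next I would fix a target local accuracy $\gamma=\Theta\!\big(\epsilon^2/(K^{\ell(d+1)}n)\big)$ --- the reciprocal of the blow-up factor of the subadditivity theorem, with an $\epsilon^2/2$ slack for the final conversion --- and, for each of the $O(n\,2^\ell K^{\ell d})$ pairs $(\mv S,\pa(\mv S))$, form the empirical distribution $\widehat P[\mv S\mid do(\pa(\mv S))]$ on $\Sigma^{|\mv S|}$ from the relevant samples. Since $|\mv S|\le\ell$, Lemma~\ref{lemma:TVlearn} guarantees $H^2\big(P_{\mathcal X}[\mv S\mid do(\pa(\mv S))],\widehat P[\mv S\mid do(\pa(\mv S))]\big)\le\gamma$ with failure probability $\delta$ using $\Theta(K^\ell/\gamma^2+\log(1/\delta)/\gamma^2)$ samples; taking $\delta\asymp 1/(n\,2^\ell K^{\ell d})$ and union-bounding over all pairs makes the overall failure probability at most $1/3$, and substituting the value of $\gamma$ yields the claimed $\tilde O(K^{\ell(2d+3)}n^2\epsilon^{-4})$ samples per intervention.

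The oracle $\mathcal N$ then simply stores this table. On a query $\mv T\subseteq\mv V$, $\mv t\in\Sigma^{|\mv T|}$ and a point $\mv v\setminus\mv t$, it computes the c-components $C(\mv V\setminus\mv T)=\{\mv S_1,\dots,\mv S_k\}$ of the induced subgraph $G[\mv V\setminus\mv T]$ --- each of size $\le\ell$, because deleting vertices can only split c-components of $G$, so every entry it needs is present --- and returns $P_{\mathcal N}[\mv v\setminus\mv t\mid do(\mv t)]\defeq\prod_i\widehat P[\mv s_i\mid do(\pa(\mv S_i))]$, with $\pa(\mv S_i)$ the restriction of $(\mv v,\mv t)$ to $\Pa(\mv S_i)$. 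For correctness, fix $(\mv T,\mv t)$: Lemma~\ref{lemma:c-component-factorization} followed by Lemma~\ref{lemma:trimUnnecessaryActions} writes $P_{\mathcal X}[\mv v\setminus\mv t\mid do(\mv t)]=\prod_i P_{\mathcal X}[\mv s_i\mid do(\pa(\mv S_i))]$ --- the same product shape as $P_{\mathcal N}$ but with the true factors in place of the empirical ones. Since the corresponding pairs of factors are all $\gamma$-close in squared Hellinger distance, Theorem~\ref{theorem:mainTheorem}, applied to $(\mathcal X,\mathcal N)$, gives $H^2\big(P_{\mathcal X}[\mv V\setminus\mv T\mid do(\mv t)],P_{\mathcal N}[\mv V\setminus\mv T\mid do(\mv t)]\big)\le O(K^{\ell(d+1)}n)\cdot\gamma\le\epsilon^2/2$, and Lemma~\ref{Hellinger-Tv-Inequality} converts this to $\delta_{TV}<\epsilon$. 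The running time is the number of interventions times the per-intervention cost of drawing the samples and forming the $O(n\,2^\ell)$ relevant empirical distributions, which works out to $\tilde O(2^\ell K^{\ell(3d+3)}n^3\epsilon^{-4})$; the bounded-degree bound on interventions is obtained by swapping in Lemma~\ref{lemma:counting_lovaszLocal}.

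The step I expect to be the main obstacle is the invocation of Theorem~\ref{theorem:mainTheorem} with $\mathcal N$ in the second slot: that theorem is stated for two \smbn s, whereas $\mathcal N$ is presented as a product oracle, so one must either check that its proof uses nothing about $\mathcal N$ beyond the c-component factorization with the stored local distributions, or exhibit a genuine \smbn\ on $G$ whose Q-factors are exactly $\{\widehat P[\mv S\mid do(\pa(\mv S))]\}$ and declare that to be $\mathcal N$. Closely tied to this is the error accounting: because the subadditivity step loses a factor $\Theta(K^{\ell(d+1)}n)$, the local distributions must be learned to squared-Hellinger accuracy $\gamma\sim\epsilon^2/n$, and one has to confirm that this $\gamma$ is simultaneously small enough for the final total-variation guarantee and large enough to stay within the stated sample budget --- this is precisely what upgrades the testing algorithm's $\epsilon^{-2}$ to $\epsilon^{-4}$ here.
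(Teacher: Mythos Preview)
Your proposal is correct and follows essentially the same route as the paper: build a covering intervention set via Lemma~\ref{lemma:counting_argument} (or Lemma~\ref{lemma:counting_lovaszLocal}), learn each local distribution $P_{\mathcal X}[\mv S\mid do(\pa(\mv S))]$ to squared-Hellinger accuracy $\epsilon^2/(2K^{\ell(d+1)}n)$ with failure probability $\asymp 1/(n\,2^\ell K^{\ell d})$ via Lemma~\ref{lemma:TVlearn}, define the oracle by the c-component product, and conclude via Theorem~\ref{theorem:mainTheorem} plus Lemma~\ref{Hellinger-Tv-Inequality}. Regarding your flagged obstacle, the paper applies Theorem~\ref{theorem:mainTheorem} to $\mathcal N$ without further comment; if you inspect that theorem's proof you will see it uses nothing about the two models beyond the c-component factorization of the interventional distributions (which $\mathcal N$ satisfies by construction), so your first resolution goes through.
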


\begin{center}
\framebox{
\begin{minipage}[ht]{0.9\linewidth}
\captionof{algocf}{Algorithm for $\clp$}
$\mv I$: Covering intervention set
\begin{enumerate}
\item 
Under each intervention $I \in {\mv I}$:
\begin{enumerate}
\item
Obtain $\tilde{O}(n^2 K^{\ell(2d+3)} \epsilon^{-4})$ samples from the interventional distribution of $I$ in $\mathcal{X}$. 
\item
For each subset $\mv S$ of a c-component, if $I$ does not set $\mv{S}$ but sets $\Pa(\mv{S})$ to $\pa(\mv{S})$, use Lemma \ref{lemma:TVlearn}, Lemma~\ref{lemma:trimUnnecessaryActions} and the obtained samples to learn:
\begin{align*}
P_{\mathcal{N}}[\mv{S} \mid do(\pa(\mv{S}))] \text{ such that } H^2(P_{\mathcal{N}}[\mv{S} \mid do(\pa(\mv{S}))], P_{\mathcal{X}}[\mv{S}  \mid  do(\pa(\mv{S}))]) \leq \dfrac{\epsilon^2}{2 K^{\ell (d+1)} n}
\end{align*}
with probability of error at most $1/(3 K^{\ell d} 2^{\ell} n)$.
\end{enumerate}
\item
Return the following oracle $\mathcal{N}$ that takes as {input}: $\mv{T} \subseteq \mv{V}$ and $\mv{t} \in \Sigma^{|\mv{T}|}$\\[1em]
\framebox{
\begin{minipage}[ht]{0.7\linewidth}
\begin{enumerate}
\item[(i)]Let $C(\mv{V}\setminus\mv{T}) = \{\mv{S}_1, \dots, \mv{S}_p\}$. 
\item[(ii)]Output the distribution $P_{\mathcal{N}}[\mv{V}\setminus\mv{T} \mid do(\mv{t})]$ where for any assignment $\mv{v}\setminus \mv{t}$:
$$P_{\mathcal{N}}[\mv{v}\setminus \mv{t} \mid do(\mv{t})] = \prod_{i=1}^p P_{\mathcal{N}}[\mv{s}_i \mid do(\mv{v}\setminus \mv{s}_i)]$$
\end{enumerate}
\end{minipage}
}
\end{enumerate}
\end{minipage}
}
\end{center}

The covering intervention set used in the algorithm above is as defined in Definition \ref{defn:covering_interventions}.

\medskip
\noindent\textbf{Number of interventions, time, and sample requirements.} 
The number of interventions is obtained using the bound on the size of the covering intervention set from Lemma \ref{lemma:counting_argument}.  When the maximum degree is bounded, we can use Lemma~\ref{lemma:counting_lovaszLocal}.  The number of samples per intervention is obtained from Lemma \ref{lemma:TVlearn}.  Since the algorithm learns at most $n K^{\ell d} 2^{\ell}$ interventions (subroutines), and each subroutine takes time linear in the sample size, the time complexity follows.

\paragraph{Correctness.}  
For any given $\mv{T}$, $do(\mv{t})$, let $C(\mv{V \setminus T}) = \{ \mv{S}_1,\ldots, \mv{S}_p\}$.  Lemma \ref{lemma:c-component-factorization} justifies that $$P_{\mathcal{N}}[\mv{v \setminus t} \mid do(\mathbf{t})] = \prod_i P_{\mathcal{N}}[\mathbf{s}_i  \mid  do(\mv{v} \setminus \mv{s}_i)].$$
Similar to the proof of Theorem \ref{outTestingResult}, using Theorem~\ref{theorem:mainTheorem} and Lemma~\ref{Hellinger-Tv-Inequality}, we get:
\begin{align*}
H^2(P_{\mathcal{N}}[\mv{V} \setminus \mv{T} \mid do(\mv{t})], P_{\mathcal{X}}[\mv{V} \setminus \mv{T} \mid do(\mv{t})]) &< \epsilon^2/2 \\
\implies \delta_{TV} (P_{\mathcal{N}}[\mv{V} \setminus \mv{T} \mid do(\mv{t})], P_{\mathcal{X}}[\mv{V} \setminus \mv{T} \mid do(\mv{t})]) &< \epsilon.
\end{align*}

\section{Main Ingredients of the Analysis}

\subsection{Covering Intervention Sets}
\label{sec:covering-sets}
\begin{lemma}[Counting Lemma: bounded in-degree] \label{lemma:counting_argument}
Let $G \in \mathcal{G}_{d,\ell}$ be a {\smcg} with $n$ vertices and $\Sigma$ be an alphabet set of size $K$.  Then, there is a randomized algorithm that outputs a set $\mv{I}$ of size $O(K^{\ell d} (3d)^\ell (\log n + \ell d \log K + \log(1/\delta)))$.
such that, with probability at least $1-\delta$, $\mv{I}$ is a covering intervention set.
\ignore{
for each c-component $\mv{C}$, for each subset $\mv{S} \subseteq \mv{C}$ and for each assignment $\pa(\mv{S}) \in \Sigma^{|\Pa(\mv{S})|}$, there is a labelling $L \in \mathbf{L}$ that is consistent with $L(\mv{S}) = *^{|\mv{S}|}$ and $L(\Pa(\mv{S})) = \pa(\mv{S})$.
}
\end{lemma}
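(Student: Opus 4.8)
The plan is to build the covering intervention set $\mv{I}$ by a random construction followed by a union bound, exactly the ``probabilistic method'' template. Each intervention I will be specified by a labelling $L : \mv{V} \to \Sigma \cup \{*\}$, where $L(V) = *$ means $V$ is not intervened, and $L(V) = \sigma \in \Sigma$ means $V$ is set to $\sigma$. A random labelling is drawn by independently, for each vertex $V$, setting $L(V) = *$ with some probability $q$ and otherwise setting $L(V)$ to a uniformly random element of $\Sigma$. I will draw $m$ such labellings independently, where $m$ is to be chosen as the claimed bound $O(K^{\ell d}(3d)^\ell(\log n + \ell d\log K + \log(1/\delta)))$, and argue that with probability $\geq 1-\delta$ the resulting set is covering.

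**The key probability estimate.** Fix a c-component $\mv{C}$, a subset $\mv{S} \subseteq \mv{C}$ (so $|\mv{S}| \le \ell$), and an assignment $\pa(\mv{S}) \in \Sigma^{|\Pa(\mv{S})|}$. Since $G \in \mathcal{G}_{d,\ell}$, we have $|\Pa(\mv{S})| \le |\mv{S}| d \le \ell d$. For a single random labelling $L$ to ``witness'' this triple, I need (i) every node of $\mv{S}$ to get label $*$ — probability $q^{|\mv{S}|} \ge q^\ell$; (ii) every node of $\Pa(\mv{S})$ to be intervened with the correct value — probability $\left(\tfrac{1-q}{K}\right)^{|\Pa(\mv{S})|} \ge \left(\tfrac{1-q}{K}\right)^{\ell d}$. (One must note $\mv{S}$ and $\Pa(\mv{S})$ are disjoint, which holds because $\Pa(\mv S)$ excludes $\mv S$ by definition, so these events are independent.) Thus the witnessing probability is at least $p_0 := q^\ell \left(\tfrac{1-q}{K}\right)^{\ell d}$. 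Optimizing $q$: choosing $q = \tfrac{1}{d+1}$ gives $p_0 = \Omega\!\left((d+1)^{-\ell} \left(\tfrac{d}{(d+1)K}\right)^{\ell d}\right)$, and a crude bound gives $p_0 \geq \big((3d)^\ell K^{\ell d}\big)^{-1}$ for $d \ge 1$, which is exactly the reciprocal of the leading factor in the target size — so this choice of $q$ is what the statement is tuned for. Hence the probability that none of the $m$ independent labellings witnesses this fixed triple is at most $(1 - p_0)^m \le e^{-p_0 m}$.

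**Union bound and counting the triples.** The number of (c-component, subset, assignment) triples is at most: (number of subsets $\mv{S}$ of size $\le \ell$) $\times$ (number of assignments to $\Pa(\mv S)$) $\le n^\ell \cdot K^{\ell d}$. Taking a union bound, the failure probability is at most $n^\ell K^{\ell d} e^{-p_0 m}$. Setting this to be $\le \delta$ requires $m \ge \tfrac{1}{p_0}\big(\ell \log n + \ell d \log K + \log(1/\delta)\big)$, and since $\tfrac{1}{p_0} \le (3d)^\ell K^{\ell d}$, the choice $m = O\big((3d)^\ell K^{\ell d}(\log n + \ell d \log K + \log(1/\delta))\big)$ suffices (absorbing the factor $\ell$ in front of $\log n$ into the $O(\cdot)$, or noting $\ell \log n \le \log n \cdot \ell$ and keeping it — actually one should be slightly careful and write $\ell\log n$, but the stated bound has just $\log n$; I would either note $\ell = O(1)$ is a standing assumption in the ``bounded c-component'' regime or keep the $\ell$ and observe it is dominated). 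This establishes the existence of the desired randomized algorithm; the runtime of sampling $m$ labellings on $n$ vertices is $\tilde O(n)$ per the remark following the lemma, since $m$ is polylogarithmic in $n$ for constant $d,\ell,K$.

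**Main obstacle.** I expect the only real subtlety to be the bookkeeping in the probability estimate — specifically, getting the constant $q = 1/(d+1)$ right so that $1/p_0$ lands below $(3d)^\ell K^{\ell d}$ rather than something larger like $(Cd)^{\ell d}$, and making sure the bound $|\Pa(\mv S)| \le \ell d$ (not $\ell d$ per-node double counting) is applied correctly. A secondary point is justifying that we only need to handle subsets of c-components (bounded support $\le \ell$), rather than arbitrary subsets — this is where the ``bounded c-component'' hypothesis is essential and ties back to the c-component factorization (Lemma~\ref{lemma:c-component-factorization}) that motivates the definition of covering intervention sets in the first place.
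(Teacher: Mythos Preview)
Your approach is essentially the same as the paper's: independent random labellings with $q = 1/(d+1)$, the same lower bound $p_0 \ge \big((3d)^\ell K^{\ell d}\big)^{-1}$ on the per-trial success probability, and a union bound over all triples $(\mv C, \mv S, \pa(\mv S))$.

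The one place where you lose against the stated bound is the counting in the union bound. You bound the number of relevant subsets by $n^\ell$, which yields an $\ell \log n$ term in $m$ rather than the claimed $\log n$. The paper instead uses that $\mv S$ is a subset of a single c-component of size $\le \ell$, and that the c-components partition $\mv V$; hence the total number of such subsets is at most $n \cdot 2^\ell$ (and the number of triples at most $n\,2^\ell K^{\ell d}$). This gives only $\log n + O(\ell)$ in the exponent, matching the lemma as stated for general $\ell$. Your hedge that ``$\ell = O(1)$ is a standing assumption'' recovers the bound in the paper's intended regime, but to prove the lemma as written (with $\ell$ a parameter inside the $O(\cdot)$) you should use this tighter count.
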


\begin{proof}
Let $t = K^{\ell d} (3d)^\ell (\log n + 2\ell d \log K + \log(1/\delta))$.  The interventions in $\mv{I}$ are chosen by the following procedure:  For each $j \in [t]$ and for each $V_i \in V$, $V_i$ is observed in $I_j$ with probability $1/(d+1)$ and otherwise, $V_i$ is intervened with the assignment chosen uniformly from $\Sigma$.  Let $V_i = \ast$ denotes that $V_i$ is  not intervened.
Consider a fixed c-component $\mv{C}$, a fixed subset $\mv{S} \subseteq \mv{C}$, a fixed assignment $\pa(\mv{S}) \in \Sigma^{|\Pa(\mv{S})|}$ and a fixed $j \in [t]$.  Now,
\begin{align*}
\Pr[I_j(\mv{S}) = \ast^{|\mv{S}|} \wedge I_j(\Pa(\mv{S})) = \pa(\mv{S})] 
	&= \left(\frac{1}{d+1}\right)^{|\mv{S}|} \cdot \left(\frac{d}{K(d+1)}\right)^{|\Pa(\mv{S})|} \\
&\geq (d+1)^{-\ell} K^{-\ell d} e^{-\ell} \qquad  \text{[Since $|\Pa(\mv{S})| \leq \ell d$ and  $|\mv{S}|\leq \ell$]}\\
&\geq (3d)^{-\ell}K^{-\ell d}.
\end{align*}
This implies that
\begin{align*}
\Pr[\forall j \in [t], (I_j(\mv{S}) \neq \ast^{|\mv{S}|} \vee I_j(\Pa(\mv{S})) \neq \pa(\mv{S}))] 
	\leq \left(1 - (3d)^{-\ell}K^{-\ell d}\right)^t \leq \frac\delta n K^{-2\ell d}. 
\end{align*}
Hence,  
	\begin{align*}
&\quad\Pr[\exists~C\text{-component }\mv{C}, \exists \mv{S} \subseteq C, \exists \pa(\mv{S}) \in \Sigma^{|\Pa(\mv{S})|} , \forall j \in [t], (I_j(\mv{S}) \neq \ast^{|\mv{S}|} \vee I_j(\Pa(\mv{S})) \neq \pa(\mv{S})) ] \\
	&\leq n 2^{\ell}K^{\ell d} \cdot \frac\delta n K^{-2\ell d}  \leq \delta
\end{align*}
by the union bound.
\end{proof}
\begin{remark}
The above proof can be made deterministic by using explicit deterministic constructions of almost $\ell d$-wise independent random variables \cite{AGHP92,EGLNV92}.
\end{remark}

\begin{lemma}[Counting Lemma: bounded total degree] \label{lemma:counting_lovaszLocal}
Let  $G \in \mathcal{G}_{d,\ell}$ be an {\smcg} with $n$ vertices, whose variables take values in $\Sigma$ with $|\Sigma|=K$, and whose maximum degree is bounded by $d$.  Then, there exists covering intervention set $\mv{I}$ of size $O(K^{\ell d} (3d)^\ell \ell d^2 \log K)$.
\end{lemma}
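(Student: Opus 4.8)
The plan is to replace the union bound over all $\Theta(n)$ c-components in the proof of Lemma~\ref{lemma:counting_argument} with an application of the Lov\'asz Local Lemma, exploiting the fact that --- unlike bounded in-degree alone --- bounded \emph{total} degree limits how far the ``witness'' constraint for one pair $(\mv S,\pa(\mv S))$ can reach into the graph.

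Concretely, I would sample $t$ interventions $I_1,\dots,I_t$ exactly as in Lemma~\ref{lemma:counting_argument}: each observable vertex is left un-intervened (written $\ast$) with probability $1/(d+1)$, and otherwise intervened with a value drawn uniformly from $\Sigma$, all choices independent. For each subset $\mv S$ of a c-component and each assignment $\pa(\mv S)\in\Sigma^{|\Pa(\mv S)|}$, let $A_{\mv S,\pa(\mv S)}$ be the bad event that \emph{no} $I_j$ is a witness for $(\mv S,\pa(\mv S))$, i.e.\ for every $j$ either $I_j$ intervenes on some node of $\mv S$ or $I_j$ restricted to $\Pa(\mv S)$ differs from $\pa(\mv S)$. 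If no bad event occurs then $\mv I=\{I_1,\dots,I_t\}$ is a covering intervention set, so it suffices to show the bad events can be simultaneously avoided. The same calculation as in Lemma~\ref{lemma:counting_argument} gives $\Pr[A_{\mv S,\pa(\mv S)}]\le\bigl(1-(3d)^{-\ell}K^{-\ell d}\bigr)^t=:p\le\exp\bigl(-t(3d)^{-\ell}K^{-\ell d}\bigr)$.

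The key new ingredient is the dependency structure. The event $A_{\mv S,\pa(\mv S)}$ is a function only of the restrictions of $I_1,\dots,I_t$ to $\mv S\cup\Pa(\mv S)$, so $A_{\mv S,\pa(\mv S)}$ and $A_{\mv S',\pa(\mv S')}$ are dependent only when $(\mv S\cup\Pa(\mv S))\cap(\mv S'\cup\Pa(\mv S'))\ne\emptyset$. Here is where bounded total degree enters: fix a vertex $W$; I claim $W$ lies in $\mv S'\cup\Pa(\mv S')$ for only $O(d\,2^\ell)$ choices of $\mv S'$. Indeed, if $W\in\mv S'$ then $\mv S'$ is one of the $\le 2^\ell$ subsets of the unique c-component containing $W$; and if $W\in\Pa(\mv S')$ then $\mv S'$ meets the c-component of some out-neighbour of $W$, and there are at most $d$ such out-neighbours (total degree $\le d$) and $\le 2^\ell$ subsets per c-component. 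Each such $\mv S'$ has $|\Pa(\mv S')|\le\ell d$, hence at most $K^{\ell d}$ associated assignments. Since $|\mv S\cup\Pa(\mv S)|\le\ell(d+1)$, the number of bad events dependent with $A_{\mv S,\pa(\mv S)}$ is $D=O(\ell d^2\,2^\ell K^{\ell d})$, and in particular $\ln(D+1)=O(\ell d\log K)$.

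Finally I would invoke the symmetric Lov\'asz Local Lemma: it suffices that $e\,p\,(D+1)\le 1$, which holds once $t\ge(3d)^\ell K^{\ell d}\bigl(1+\ln(D+1)\bigr)=O\bigl((3d)^\ell K^{\ell d}\,\ell d^2\log K\bigr)$. Then with positive probability no bad event occurs, producing a covering intervention set of the claimed size; the argument is an existence proof, but if desired the Moser--Tardos algorithm makes it constructive. The main obstacle is the dependency-degree bound in the previous paragraph: one must argue carefully that the ``$W\in\Pa(\mv S')$'' case does not blow up, which is exactly the failure mode under bounded in-degree only --- a single vertex can then be a parent of $\Omega(n)$ others, the dependency degree becomes $\Omega(n)$, and LLL gives nothing --- and is precisely the reason Lemma~\ref{lemma:counting_argument} must pay the $\log n$ factor that this lemma avoids.
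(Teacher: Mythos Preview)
Your approach is correct and matches the paper's: sample $t$ random interventions exactly as in Lemma~\ref{lemma:counting_argument}, define the bad events $A_{\mv S,\pa(\mv S)}$, bound their probability by $\bigl(1-(3d)^{-\ell}K^{-\ell d}\bigr)^t$, and apply the symmetric Lov\'asz Local Lemma to replace the $\log n$ union-bound penalty by a term depending only on the local dependency degree. The one place where you diverge is the dependency count: the paper states (somewhat loosely) that $A_{\mv S,\pa(\mv S)}$ and $A_{\mv T,\pa(\mv T)}$ are independent whenever $\Pa(\mv S)\cap\Pa(\mv T)=\emptyset$ and bounds the dependency degree by $2^{\ell d^2}K^{\ell d}$, whereas your criterion $(\mv S\cup\Pa(\mv S))\cap(\mv S'\cup\Pa(\mv S'))\neq\emptyset$ is the accurate one and your per-vertex argument yields the sharper $D=O(\ell d^2\,2^{\ell}K^{\ell d})$. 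Both routes give the stated $O\bigl(K^{\ell d}(3d)^{\ell}\ell d^2\log K\bigr)$ size; in fact your own estimate $\ln(D+1)=O(\ell d\log K)$ would shave a factor of $d$, though you (harmlessly) record $\ell d^2\log K$ to match the lemma.
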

\ignore{
Now let us count the number of possibilities of such $\mv{S}^{\prime},pa(\mv{S}^{\prime})$ (where $\mv{S}^{\prime}$ is a subset of some c-component) that includes at least one vertex from $\mv{S} \cup Pa(\mv{S})$.  For each vertex $V_i$, there are at most $2^{\ell}$ many subsets that include $V_i$ and for each such subset there are at most $K^{ld}$ many possible assignments to the corresponding parents.  Also, for each children of $V_i$, there are at most $2^l K^{ld}$ many possibilities of events that include $V_i$ and $V_i$ has at most $d$ children.  Since there are at most $\ell d +1$ many such $V_i$'s, for a fixed $\mv{C}$,$\mv{S}$ and $pa(\mv{S})$, the number of such $\mv{S}^{\prime},pa(\mv{S}^{\prime})$ that includes at least one vertex from $\mv{S} \cup Pa(\mv{S})$ is at most $(\ell d + 1) (d+1) K^{\ell d}$.  Now using the Lov\'asz local lemma we get, there exists a labelling $L \in \mathbf{L}$ that is consistent with $\mv{S = *}$ and $\Pa(\mv{S}) = \pa(\mv{S})$, for each $\mv{C}_i$, for each $\mv{S} \subseteq \mv{C}_i$ and for each $\pa(\mv{S}) \in \Sigma^{|\Pa(\mv{S})|}$.}

\begin{proof}
Let $t = K^{\ell d} (3d)^\ell (\ell d^2 + \ell d \log K + 2)$.  The interventions in $\mv{I}$ are chosen by the following procedure:  For each $j \in [t]$ and for each $V_i \in V$, $V_i$ is observed in $I_j$ with probability $1/(d+1)$ and otherwise, $V_i$ is intervened with the assignment chosen uniformly from the set $\Sigma$.  Let $V_i = \ast$ denotes that $V_i$ is observed (not intervened).  

For a fixed set $\mv{S}$ that is a subset of a c-component and a fixed assignment  $\pa(\mv{S}) \in \Sigma^{|\Pa(\mv{S})|}$, let $A_{\mv{S}, \pa(\mv{S})}$ be the event: $\forall j \in [t], (I_j(\mv{S}) \neq \ast^{|\mv{S}|} \vee I_j(\Pa(\mv{S})) \neq \pa(\mv{S}))$.
Similar to the proof of Lemma \ref{lemma:counting_argument}, for any fixed $\mv{S}$ and $\pa(\mv{S})$:
$\Pr[A_{\mv{S}, \pa(\mv{S})}] \leq 1/({4 2^{\ell d^2}K^{\ell d}})$.

Now, note that $A_{\mv{S},\pa(\mv{S})}$ and $A_{\mv{T},\pa(\mv{T})}$ are independent if $\Pa(\mv{S})$ and $\Pa(\mv{T})$ are disjoint. For a fixed $\mv{S}$, the number of subsets $\mv{T}$ such that $\Pa(\mv{S}) \cap \Pa(\mv{T}) \neq \emptyset$ is at most $2^{\ell d^2}$ (since, the number of children of the parents of $S$ is at most $\ell d^2$).  Therefore, for a fixed $\mv{S}$ and $\pa(\mv{S})$, $A_{\mv{S}, \pa(\mv{S})}$ is independent of all $A_{\mv{T}, \pa(\mv{T})}$'s except for at most $2^{\ell d^2} K^{\ell d}$ many of them (taking into account the number of possible assignments $\pa(\mv{T})$).
Hence, the Lov\'asz Local Lemma \cite[Chapter 5]{AlonS04} guarantees that there exists a set of $t$ interventions such that $\neg A_{\mv{S}, \pa(\mv{S})}$ for all $\mv{S}$ and $\pa(\mv{S})$.
\end{proof}

\begin{remark}[Explicitness]
Although Lemma \ref{lemma:counting_lovaszLocal} only asserts the existence of a covering intervention, its proof can be turned into a linear time algorithm using the constructive proofs of the Lov\'asz Local Lemma \cite{Mos09, MT10}. 
\end{remark}

\subsection{Subadditivity Theorem for \smbn s}
\label{sec:aub-additivity}
The next theorem states that if two causal models are ``far'', then they must be ``far'' under some ``local'' intervention. 

\begin{theorem} \label{theorem:mainTheorem}
Let $\mathcal{M}$ and $\mathcal{N}$ be two {\smbn}s defined on a {\em known and common} {\smcg} $G \in \mathcal{G}_{d,\ell}$.  
Let $\mv{V}$ be the vertices of $G$.  For a given intervention $do(\mv{t})$, let $\mv{V} \setminus \mv{T}$ partition into $\mcl{C} = \{ \mv{C}_1, \mv{C}_2, \ldots, \mv{C}_{p} \} $, the c-components with respect to the induced graph $G[\mv{V} \setminus \mv{T}]$. 
Suppose
\begin{align}
H^2( P_{\mathcal{M}}[\mv{C}_j  \mid  do(\pa(\mv{C}_j))] , P_{\mathcal{N}}[\mv{C}_j  \mid  do(\pa(\mv{C}_j))] ) \leq \gamma \qquad \forall j \in [p],\forall \pa(\mv{C}_j) \in \Sigma^{|\Pa(\mv{C}_j)|}. \label{hypothesis1}
\end{align}
Then
\begin{align}
H^2 \left( P_{\mathcal{M}}[\mv{V \setminus T}  \mid  do(\mv{t}) ] , P_{\mathcal{N}}[\mv{V \setminus T}  \mid  do(\mv{t}) ] \right) \leq \epsilon \qquad \forall \mv{t} \in \Sigma^{|\mv{T}|}\label{hypothesis2}
\end{align}
where $\epsilon = {\gamma} {|\Sigma|^{\ell (d+1)} n}$.
\end{theorem}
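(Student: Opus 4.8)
The plan is to peel the squared Hellinger distance between the two full interventional distributions down to the level of single observable vertices, charge each resulting local discrepancy against the hypothesis~\eqref{hypothesis1}, and reassemble while tracking the combinatorial overhead. First I would fix $do(\mv{t})$ and, for each of $\mathcal{M}$ and $\mathcal{N}$, expand $P[\mv{v}\setminus\mv{t}\mid do(\mv{t})]$ via the c-component factorization (Lemma~\ref{lemma:c-component-factorization}) and then apply Tian--Pearl's $Q$-decomposition inside each c-component: in a fixed topological order of $G$, every observable $V_i\in\mv{C}_j$ contributes a factor equal to a conditional $P_{\mathcal{M}}[v_i\mid\mv{m}_i]$ of the observational distribution, where the ``blanket'' $\mv{M}_i$ is a set of vertices preceding $V_i$ with $\mv{M}_i\subseteq\mv{C}_j\cup\Pa(\mv{C}_j)$ --- intervening on $\mv{T}$ severs the confounding arcs into $\mv{T}$, so the blanket cannot escape the c-component together with its parents --- and hence $|\mv{M}_i|\le\ell(d+1)$. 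Multiplying the factors gives a single product expansion $P_{\mathcal{M}}[\mv{v}\setminus\mv{t}\mid do(\mv{t})]=\prod_i P_{\mathcal{M}}[v_i\mid\mv{m}_i]$, and the same for $\mathcal{N}$, in a common order (the $V_i$ in $\mv{T}$ are frozen to $\mv{t}$ and drop out).

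Next I would apply the chain rule for squared Hellinger distance to this expansion, obtaining $H^2(P_{\mathcal{M}}[\mv{V}\setminus\mv{T}\mid do(\mv{t})],P_{\mathcal{N}}[\mv{V}\setminus\mv{T}\mid do(\mv{t})])=\sum_i\sum_{\mv{m}_i}W_i(\mv{m}_i)\,H^2(P_{\mathcal{M}}[V_i\mid\mv{m}_i],P_{\mathcal{N}}[V_i\mid\mv{m}_i])$, where the overlap weights satisfy $\sum_{\mv{m}_i}W_i(\mv{m}_i)\le1$ and, by Cauchy--Schwarz, $W_i(\mv{m}_i)\le\sqrt{P_{\mathcal{M}}[\mv{m}_i\mid do(\mv{t})]\,P_{\mathcal{N}}[\mv{m}_i\mid do(\mv{t})]}$. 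For the matching upper bound on each inner term, write $\mv{m}_i$ as a pair $(\mv{a},\mv{b})$ with $\mv{a}$ inside $\mv{C}_j$ and $\mv{b}$ among $\Pa(\mv{C}_j)$, complete $\mv{b}$ to a full assignment $\pa(\mv{C}_j)$ (the factor $P_{\mathcal{M}}[V_i\mid\mv{m}_i]$ is insensitive to the coordinates of $\pa(\mv{C}_j)$ outside $\mv{M}_i$), and apply the chain rule inside $\mv{C}_j$ to the hypothesis~\eqref{hypothesis1} for $P_{\mathcal{M}}[\mv{C}_j\mid do(\pa(\mv{C}_j))]$ versus $P_{\mathcal{N}}[\mv{C}_j\mid do(\pa(\mv{C}_j))]$; this yields $H^2(P_{\mathcal{M}}[V_i\mid\mv{m}_i],P_{\mathcal{N}}[V_i\mid\mv{m}_i])\le\gamma/\sqrt{P_{\mathcal{M}}[\mv{a}\mid do(\pa(\mv{C}_j))]\,P_{\mathcal{N}}[\mv{a}\mid do(\pa(\mv{C}_j))]}$.

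The heart of the argument is then to combine these two estimates so that the weight $W_i(\mv{m}_i)$ absorbs the denominator and only a $|\Sigma|^{\ell(d+1)}$ factor remains: this is precisely what the ``sequence of transformations of an optimization program'' accomplishes. One relaxes $1-H^2$ of the full interventional distributions to the optimum of a minimization program whose variables are the overlap tables, whose constraints are the local $\gamma$-closeness bounds of~\eqref{hypothesis1} together with the normalization and consistency constraints inherited from $\mathcal{M},\mathcal{N}$ being \smbn s, and whose objective the transformations progressively simplify; the decisive step is to choose the completion of $\mv{b}$ and to use the closeness of $\mathcal{M}$ and $\mathcal{N}$ (together with $\min(H^2,1)$) so that $W_i(\mv{m}_i)\le\sqrt{P_{\mathcal{M}}[\mv{a}\mid do(\pa(\mv{C}_j))]\,P_{\mathcal{N}}[\mv{a}\mid do(\pa(\mv{C}_j))]}$, whence $\sum_{\mv{m}_i}W_i(\mv{m}_i)\,H^2(\cdots)\le|\Sigma|^{|\mv{M}_i|}\gamma\le|\Sigma|^{\ell(d+1)}\gamma$; summing over the at most $n$ observable vertices gives $H^2\le\gamma|\Sigma|^{\ell(d+1)}n=\epsilon$. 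I expect this cancellation to be the main obstacle: conditioning on a rare blanket value can inflate a squared Hellinger distance arbitrarily, so keeping the overhead at $|\Sigma|^{\ell(d+1)}$ --- rather than something scaling with the smallest probability in the models --- is delicate, and is exactly the accounting that the program relaxation is designed to carry out. A secondary subtlety, which is why one descends to the vertex level in the first place, is that the c-component factors by themselves admit no topological ordering (the induced ``graph on c-components'' can have directed cycles), so a direct telescoping over c-components is not available.
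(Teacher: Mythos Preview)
Your high-level plan matches the paper: factorize $P[\mv{V}\setminus\mv{T}\mid do(\mv{t})]$ via Lemma~\ref{lemma:c-component-factorization}, chain-rule each c-component factor into per-vertex conditionals indexed by blankets $\mv{M}_i\subseteq\mv{C}_j\cup\Pa(\mv{C}_j)$ of size at most $\ell(d+1)$ (this is Lemma~\ref{lemma:independenceLemma}), and bound the resulting nested sum. You also correctly isolate the central difficulty: the hypothesis controls the joint $H^2$ on each $\mv{C}_j$, not the individual conditionals inside it, and conditioning on a rare blanket value can make a conditional Hellinger distance explode.

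The gap is in the cancellation step. First, the displayed ``chain rule for squared Hellinger'' that expresses $H^2$ as an \emph{equality} $\sum_i\sum_{\mv{m}_i}W_i(\mv{m}_i)\,H^2(\cdots)$ is not a standard identity; at best one has inequalities, and the precise form of the weights matters. Second, and more seriously, the key inequality
\[
W_i(\mv{m}_i)\ \le\ \sqrt{P_{\mathcal{M}}[\mv{a}\mid do(\pa(\mv{C}_j))]\;P_{\mathcal{N}}[\mv{a}\mid do(\pa(\mv{C}_j))]}
\]
is not justified and in general fails. Your own earlier bound gives $W_i(\mv{m}_i)\le\sqrt{P_{\mathcal{M}}[\mv{m}_i\mid do(\mv{t})]\,P_{\mathcal{N}}[\mv{m}_i\mid do(\mv{t})]}$, a marginal under $do(\mv{t})$ in which parents of $\mv{C}_j$ outside $\mv{T}$ are \emph{random}; the right side instead fixes \emph{all} of $\Pa(\mv{C}_j)$ to a particular completion. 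There is no monotonicity relating these two quantities, and ``choosing the completion'' cannot rescue it: the same completion must simultaneously validate the upper bound $H^2\le\gamma/\sqrt{\cdots}$ (which prefers a completion making $\mv{a}$ likely) and the weight bound (which is a statement about a different distribution altogether). Invoking ``closeness of $\mathcal{M}$ and $\mathcal{N}$'' does not help either, since $\gamma$-closeness of $P[\mv{C}_j\mid do(\pa(\mv{C}_j))]$ says nothing about how $P[\mv{a}\mid do(\mv{t})]$ compares to $P[\mv{a}\mid do(\pa(\mv{C}_j))]$.

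The paper's proof avoids this entirely: it never bounds an individual conditional Hellinger distance and never attempts a weight/denominator cancellation. Instead it writes the Bhattacharyya coefficient directly as the nested sum
\[
\sum_{w_1}X_1\sum_{w_2}X_2\cdots\sum_{w_r}X_r\;\prod_{j}Y_j,
\qquad X_i=\sqrt{P_{\mathcal{M}}[\,\cdot\mid\cdot\,]\,P_{\mathcal{N}}[\,\cdot\mid\cdot\,]},
\]
records the two constraints this object obeys (normalization $\sum_{w_i}X_i\le1$ and the c-component hypothesis rewritten as $\sum_{\mv{c}_j}\prod_{i\in\mv{C}_j}X_i\cdot Y_j\ge 1-\gamma$), and lower-bounds the resulting optimization program $P_{r,p}$ by induction on $r$ (Lemmas~\ref{lemma:dependent-set-reduction}--\ref{lemma:R-Elimination}). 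Eliminating the top variable $w_r$ costs a factor of $|\Sigma|$ in $b_h$ for each $h$ with $r\in dep(\mv{C}_h)$, via the crude estimate $\min_{a_r}Y_h\ge\sum_{a_r}Y_h-(|\Sigma|-1)$ inside the Y-R Reduction; since each $dep(\mv{C}_j)$ starts with at most $\ell(d+1)$ elements and only shrinks, $b_j$ is inflated by at most $|\Sigma|^{\ell(d+1)}$ over the whole induction, yielding $BC\ge(1-|\Sigma|^{\ell(d+1)}\gamma)^{p}$ and hence $H^2\le p\,|\Sigma|^{\ell(d+1)}\gamma\le\epsilon$. The $|\Sigma|^{\ell(d+1)}$ overhead thus comes from a worst-case bound on $Y_h$ over the eliminated coordinate, not from any probability cancellation.
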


\begin{proof}
Let $\mv{W} = \mv{V \setminus T}= \{ W_1, \ldots, W_r \}$, where the indices are arranged in a topological ordering.   
Here we focus only on distributions on $\mv{W}$ after the intervention $do(\mv{t})$.  That is, our focus is restricted to the graph $G_{\overline{\mv{T}}}$, the intervention $do(\mv{t})$ and the vertices $\mv{W = V \setminus T}$.  We know that
\begin{align}
H^2 \left( P_{\mathcal{M}}[\mv{W} \mid do(\mv{t})],P_{\mathcal{N}}[\mv{W} \mid do(\mv{t})] \right) 
&= 1 - \sum_{\mv{w}} \sqrt{P_{\mathcal{M}}[\mv{w} \mid do(\mv{t})],P_{\mathcal{N}}[\mv{w} \mid do(\mv{t})]}  \nonumber \\
&= 1 - BC \left( P_{\mathcal{M}}[\mv{W} \mid do(\mv{t})],P_{\mathcal{N}}[\mv{W} \mid do(\mv{t})] \right) \label{hellinger-formula}
\end{align} 
where $BC(P_{\mathcal{M}}[\mv{W} \mid do(\mv{t})],P_{\mathcal{N}}[\mv{W} \mid do(\mv{t})])$ is the Bhattacharya coefficient of $P_{\mathcal{M}}[\mv{W} \mid do(\mv{t})]$ and $P_{\mathcal{N}}[\mv{W} \mid do(\mv{t})]$ (see~\eqref{eqn:squaredHellinger}). 

\noindent For each $j \in [p]$, identify the vertices in $\mv{C}_j$ as $\{W_{n_{j,1}}, \dots, W_{n_{j,s_j}}\}$ where $s_j = |\mv{C}_j|$ and $n_{j,1} < \cdots < n_{j, s_j}$.  Using Lemma~\ref{lemma:c-component-factorization}, we express the distributions in terms of the product $\prod_{j=1}^{p} \Pr [\mv{c}_j \mid do(\mv{w} \setminus \mv{c}_j)]$~\cite{tian-pearl}, 
\begin{align*}
 &\quad BC(P_{\mathcal{M}}[\mv{W} \mid do(\mv{t})], P_{\mathcal{N}}[\mv{W} \mid do(\mv{t})])\\
 &= \sum_{\mv{w}} \sqrt{\begin{array}{l} P_{\mathcal{M}}[\mv{w} \mid do(\mv{t})] \\ \newline P_{\mathcal{N}}[\mv{w} \mid do(\mv{t})] \end{array}} \\
 &= \sum_{\mv{w}} \sqrt{ \prod_{j=1}^{p} 
	\begin{array}{l}
	P_{\mathcal{M}}[\mv{c}_j \mid do(\mv{w} \setminus \mv{c}_j)] \\ \newline
	P_{\mathcal{N}}[\mv{c}_j \mid do(\mv{w} \setminus \mv{c}_j)]  
	\end{array}} \\
&= \sum_{\mv{w}} \sqrt{ \prod_{j=1}^{p} \prod_{i=1}^{s_j} 
	\begin{array}{l}
	P_{\mathcal{M}}[ w_{n_{j,i}} \mid w_{n_{j,1}}, \ldots, w_{n_{j,i-1}} , do(\mv{w} \setminus \mv{c}_j) ] \\ \newline
	P_{\mathcal{N}}[ w_{n_{j,i}} \mid w_{n_{j,1}}, \ldots, w_{n_{j,i-1}} , do(\mv{w} \setminus \mv{c}_j) ]
	\end{array}
	} \\
	&= \sum_{\mv{w}} \sqrt{ \prod_{j=1}^{p} \prod_{i=1}^{s_j} 
	\begin{array}{l}
	P_{\mathcal{M}}[ w_{n_{j,i}} \mid w_{n_{j,1}}, \ldots, w_{n_{j,i-1}} , do(\pa( W_{n_{j,1}},\ldots, w_{n_{j,i-1}})) ] \\ \newline
	P_{\mathcal{N}}[ w_{n_{j,i}} \mid w_{n_{j,1}}, \ldots, w_{n_{j,i-1}} , do(\pa( W_{n_{j,1}},\ldots, w_{n_{j,i-1}})) ]
	\end{array}
	} \quad (\text{using Lemma \ref{lemma:independenceLemma}}).
\end{align*}

\noindent For $i \in [s_j]$, let \[
dep(n_{j,i}) := \{ W_{n_{j,1}}, \ldots, W_{n_{j,i}} \} \cup (\Pa(\{ W_{n_{j,1}}, \ldots, W_{n_{j,i}} \}) \setminus \mv{T}).\] For $j \in [p], i \in [s_j]$, let $X_{n_{j,i}}: \Sigma^{|dep(n_{j,i})|} \to [0,1]$ be
\[
X_{n_{j,i}}(\mv{w}_{dep(n_{j,i})}) := \sqrt{
\begin{array}{l}
P_{\mathcal{M}}[w_{n_{j,i}} \mid w_{n_{j,1}}, \ldots, w_{n_{j,i-1}} , do(\pa( W_{n_{j,1}},\ldots, w_{n_{j,i-1}} ))] \\ \newline
P_{\mathcal{N}}[w_{n_{j,i}} \mid w_{n_{j,1}}, \ldots, w_{n_{j,i-1}} , do(\pa( W_{n_{j,1}},\ldots, w_{n_{j,i-1}} ))]
\end{array}
}.
\]
Recall that indices of $\mv W$ follow a topological ordering.  Using this topological ordering and plugging in the expression above, we obtain
\begin{align*}
 &\quad BC(P_{\mathcal{M}}[\mv{W} \mid do(\mv{t})], P_{\mathcal{N}}[\mv{W} \mid do(\mv{t})])= 
\sum_{w_{1}} X_{1}(\mv{w}_{dep(1)}) \sum_{w_{2}} X_{2}(\mv{w}_{dep(2)}) \ldots \sum_{w_{r}} X_{r}(\mv{w}_{dep(r)})
 \end{align*}
where $r = |\mv{W}|$.  In order to prove the theorem, it will suffice to prove that this expression is at least $1-\varepsilon$, whenever~\eqref{hypothesis1} holds. To prove this, we will take the following path, which is essentially an induction on $r$.
For $j \in [p]$, let $b_j = 1$, $dep(\mv{C}_j) = \mv{C}_j \cup (\Pa(\mv{C}_j) \setminus \mv{T})$ and $Y_j(\cdot) = 1$ (a constant function).  Set $\mv{b} = (b_1, \dots, b_p)$, $\mv{dep} := (dep(1), \dots,$  $dep(r),$ $dep(\mv{C}_1),$ $\dots,$ $dep(\mv{C}_p))$, and $\mv{Y}=(Y_1, \dots, Y_p)$. 

In Definition~\ref{programP}, we define an optimization program, $P_{r,p}(\Sigma, \gamma, \mcl{C}, \mv{b}, \mv{dep}, \mv{Y})$ whose objective value is equal to $BC(P_{\mathcal{M}}[\mv{W} \mid do(\mv{t})],P_{\mathcal{N}}[\mv{W} \mid do(\mv{t})])$. In Section~\ref{sec:prog}, we provide the steps to prove a lower bound on the objective of the program, thereby proving a lower bound on $BC(P_{\mathcal{M}}[\mv{W} \mid do(\mv{t})],P_{\mathcal{N}}[\mv{W} \mid do(\mv{t})])$.

\ignore{
\textcolor{orange}{
To check feasibility, note that for each $j \in [p], i \in [s_j]$, and for each assignment $w_{n_{j,1}}, \ldots, w_{n_{j,i-1}},$  $pa( w_{n_{j,i}},\ldots, w_{n_{j,i}} )$,
\begin{align*}
\sum_{w_{n_{j,i}}} X_{n_{j,i}} &=  BC \left( 
\begin{array}{l}
P_{\mathcal{M}}[ W_{n_{j,i}} \mid w_{n_{j,1}}, \ldots, w_{n_{j,i-1}} , do( \pa( W_{n_{j,1}}, \ldots, w_{n_{j,i-1}} ) ) ],\\ \newline
P_{\mathcal{N}}[ W_{n_{j,i}} \mid w_{n_{j,1}}, \ldots, w_{n_{j,i-1}} , do( \pa( W_{n_{j,1}}, \ldots, w_{n_{j,i-1}} ) ) ]  
\end{array}\right)\leq 1
\end{align*}
satisfying \eqref{constraint1}.
SARAVANAN: DO WE NEED THIS THING IN ORANGE, IE, THE FEASIBILITY OR IS IS OBVIOUS? 
}
}

Also, from \eqref{hypothesis1} and \eqref{hellinger-formula}, for all $j \in [p]$ and for all $\mv{w}_{{dep(\mv{C}_j)} \setminus {\mv{C}_j}}$,
\begin{align*}
\sum_{w_{n_{j,1}}} X_{n_{j,1}}(\mv{w}_{dep(n_{j,1})}) \sum_{w_{n_{j,2}}} X_{n_{j,2}}(\mv{w}_{dep(n_{j,2})}) \ldots \sum_{w_{n_{j,s_j}}} X_{n_{j,s_j}}(\mv{w}_{dep(n_{j,s_j})}) \geq 1 - \gamma
\end{align*}
satisfying \eqref{constraint2}. 
Note that $P_{r,p}(\Sigma, \gamma, \mcl{C}, \mathbf{b} , \mv{dep}, \mathbf{Y})$ is a program such that $\max_{j} |dep(C_j)| \leq \ell (d+1) $.  By Lemma \ref{programPMainLemma}, 
\begin{align*}
BC \left( P_{\mathcal{M}}[\mv{W} \mid do(\mv{t})],  P_{\mathcal{N}}[\mv{W} \mid do(\mv{t})] \right) \geq \Opt(P_{r,p}) \geq (1 - |\Sigma|^{\ell (d+1)} \gamma)^{p}. 
\end{align*}
Using this in \eqref{hellinger-formula}, we get
\[
H^2 \left( P_{\mathcal{M}}[\mv{W} \mid do(\mv{t})],  P_{\mathcal{N}}[\mv{W} \mid do(\mv{t})] \right) \leq 1-(1 - |\Sigma|^{\ell (d+1)} \gamma)^{p} \leq p \gamma |\Sigma|^{\ell (d+1)}\leq \epsilon.\qedhere
\]
\end{proof}

\section{Lower Bound on Interventional Complexity}

 Recall that in Section~\ref{sec:main results} we provided non-adaptive algorithms for $\cit$, and $\clp$. In this section we provide lower bounds on the number of interventions that any algorithm must make to solve these problems. Our lower bounds nearly match the upper bounds in Theorem~\ref{outTestingResult}, and Theorem~\ref{ourLearningResult}, even when the algorithm is allowed to be adaptive (namely future interventions are decided based upon the samples observed from the past interventions). In other words, these lower bounds show that adaptivity cannot reduce the interventional complexity.

\begin{theorem}
\label{thm:lb}
There exists a $\smcg$ $G \in \mathcal{G}_{d,\ell}$ with $n$ nodes such that $\Omega(K^{\ell d - 2} \log n)$ interventions are necessary for any algorithm (even adaptive) that solves $\cit$ or $\clp$.   
\end{theorem}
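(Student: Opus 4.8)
The plan is to argue via Yao's principle together with an adversary that reveals query answers on the fly, so that it suffices to lower bound deterministic algorithms against a distribution over hard instances and adaptivity is automatically ruled out. I would fix a \smcg\ $G\in\mathcal{G}_{d,\ell}$ on $n$ vertices --- chosen by the probabilistic method, as described below --- and a random pair of \smbn s on $G$ with the property that, until an algorithm performs one particular ``hard'' local intervention, the interventional distributions of the two models are \emph{identical}. Hence the transcript of any (even adaptive) algorithm that misses that intervention carries no information distinguishing $\mathcal{X}=\mathcal{Y}$ from $\Delta(\mathcal{X},\mathcal{Y})=1>\epsilon$, and the same construction defeats learning since an $\epsilon$-close oracle must reproduce that local distribution. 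The whole argument thus reduces to showing that, for a suitable $G$, no collection of $o(K^{\ell d-2}\log n)$ interventions is a covering intervention set in the sense of Definition~\ref{defn:covering_interventions}, plus a ``local gadget'' that turns a missed covering requirement into two genuinely far \smbn s.

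For the gadget, take a c-component $\mv{C}$ of size $\ell$ in which two of the $\ell d$ observable in-edges are internal to $\mv{C}$ (its remaining pairwise bidirected connectivity being carried by hidden parents, which do not consume observable in-degree), so that $|\Pa(\mv{C})|=\ell d-2$ and $\mv{C}$ has $K^{\ell d-2}$ parent assignments --- the $K^{2}$ gap from the $K^{\ell d}$ of Theorem~\ref{outTestingResult} being exactly what isolating a single local interventional distribution costs. Fixing a target assignment $\pa(\mv{C})$, I would define two \smbn s on $G$ that use a common default conditional for $P[\mv{C}\mid do(\pa(\mv{C})')]$ at every $\pa(\mv{C})'\neq\pa(\mv{C})$ and differ, by total variation distance $1$, only on $P[\mv{C}\mid do(\pa(\mv{C}))]$. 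The c-component factorization (Lemma~\ref{lemma:c-component-factorization}) together with the factorization~\eqref{eqn:factor} guarantee that such a pair agrees on every interventional distribution except those that leave $\mv{C}$ untouched while fixing $\Pa(\mv{C})$ to $\pa(\mv{C})$ --- precisely the interventions ``covering'' this requirement; checking that the altered conditional can be chosen so that both networks remain legal \smbn s over the same $G$, simultaneously consistent with all other interventions, is the most delicate point.

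To obtain the $\log n$ factor I would take $G$ to be a union of $m=\Theta(n)$ such gadgets sharing a common pool of $t$ external parent vertices, with each gadget's parent set $\Pa(\mv{C}^{(i)})$ a \emph{random} $(\ell d-2)$-subset of the pool (only out-degrees grow, and $\mathcal{G}_{d,\ell}$ bounds only in-degree), and plant in gadget $i$ an independent uniformly random assignment $\sigma_i$ carrying the non-default conditional. An algorithm making $q$ interventions reveals, in gadget $i$, the conditional only at the $\le q$ assignments its interventions induce on $\Pa(\mv{C}^{(i)})$, so it reveals $\sigma_i$ with probability at most $q/K^{\ell d-2}$; since one wrong local distribution already makes the two hypotheses far (resp.\ makes the learned oracle $\epsilon$-far), a correct algorithm must reveal $\sigma_i$ in \emph{every} gadget, and whether $q$ interventions can do this amounts to asking whether $q$ points of $\Sigma^{t}$ project surjectively onto all $m$ random $(\ell d-2)$-coordinate subsets --- a covering-code/perfect-hash-family question. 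A union bound over deterministic (hence, by the transcript argument, also adaptive) strategies against the random choice of subsets shows that $q=\Omega\big(K^{\ell d-2}(\log m+\ell d\log K)\big)=\Omega(K^{\ell d-2}\log n)$ is necessary and simultaneously pins down one hard $G$. The main obstacle is twofold: the exact gadget construction of two legal \smbn s over a common \smcg\ that disagree on one chosen local interventional distribution and on nothing else, and the covering-code estimate establishing that the random parent-subsets genuinely force the extra $\log n$ rather than only $\Theta(K^{\ell d-2})$.
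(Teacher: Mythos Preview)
Your high-level plan --- adversary argument plus a ``local gadget'' plus a probabilistic graph construction --- is the right shape and matches the paper's strategy.  But the proposal contains a genuine gap at exactly the point you yourself flag as ``the most delicate,'' and your justification for why the gadget works is incorrect.

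You write that the c-component factorization ``guarantee[s] that such a pair agrees on every interventional distribution except those that leave $\mv{C}$ untouched while fixing $\Pa(\mv{C})$ to $\pa(\mv{C})$.''  This is false.  The quantity $P[\mv{C}\mid do(\pa(\mv{C}))]$ is not a free parameter of an \smbn; the primitives are the per-vertex conditionals $\Pr[V_i\mid\bm{\pi}(V_i)]$ and the hidden-variable priors.  More importantly, when an intervention hits a vertex \emph{inside} $\mv{C}$, the c-component $\mv{C}$ breaks into smaller pieces $\mv{C}_1,\dots,\mv{C}_q$, and Lemma~\ref{lemma:c-component-factorization} then expresses the interventional distribution in terms of $P[\mv{C}_i\mid do(\cdot)]$ for these \emph{sub}-c-components.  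These are not determined by the collection $\{P[\mv{C}\mid do(\pa(\mv{C})')]\}_{\pa(\mv{C})'}$, so agreeing on the latter for all $\pa(\mv{C})'\neq\pa(\mv{C})$ does \emph{not} force agreement on interventions that touch $\mv{C}$.  The paper handles this with an explicit construction: all hidden variables are fair coins, each $V_i\in\mv{C}$ computes $\mathsf{XOR}$ of its hidden parents together with an independent random bit $R_i$ whenever its observable parents are \emph{not} at the target value, and the two models differ only in whether one designated $V_1$ uses $\mathsf{XOR}$ or $\mathsf{XNOR}$ of its hidden parents at the target value.  The point of the $R_i$'s and of the tree structure on the bidirected edges is precisely that whenever either a vertex of $\mv{C}$ is intervened or some parent is off-target, every surviving sub-c-component acquires a dangling independent fair bit and becomes uniform in \emph{both} models --- this is what makes the adversary's answers model-independent and hence defeats adaptivity.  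None of this follows from the factorization lemma alone; it is the technical heart of the lower bound and is missing from your proposal.

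On the combinatorial side, your random-$(\ell d{-}2)$-subset construction and the covering-code/perfect-hash appeal are different from the paper's, which instead fixes $\ell d{-}2$ parents common to all gadgets (immediately yielding the $K^{\ell d-2}$ factor via a partition of the intervention set) and draws only two random parents per gadget; the $\log n$ then comes from a concrete cell argument showing that $t$ interventions partition the random-parent pool into $K^t$ cells and, if $t=o(\log n)$, some gadget's two random parents land in the same cell, leaving one of their joint assignments uncovered.  Your surjectivity/perfect-hash sketch is plausible but not worked out, and note that the relevant event is not ``project surjectively onto $\Sigma^{\ell d-2}$'' (that would need $K^{\ell d-2}$ interventions per gadget regardless) but rather ``hit a single adversarially chosen point per gadget,'' with the $\log n$ arising because the gadgets share coordinates and so one intervention cannot independently target all of them.
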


This theorem is proved via the following ingredients. 
\begin{itemize}
\item[] \textbf{Necessary Condition.}
We obtain a necessary condition on the set of interventions ${\mv I}$ of any algorithm that solves $\cit$ or $\clp$.

We will consider {\smcg}s $G$ with a specific structure, and prove the necessary condition for these graphs: The vertices of $G$ are the union of two disjoint sets $\mv A$, and $\mv B$, such that $G$ contains directed edges from $\mv A$ to $\mv B$, and bidirected edges within $\mv B$. Further, all edges in $G$ are one of these two types. The next lemma, proved later in the section, is for graphs with this structure.  

\begin{lemma}\label{lemma:StructureToDistinguish}
Suppose an adaptive algorithm uses a sequence of interventions $\mathbf{I}$ to solve $\cit$ or $\clp$.  Let $\mv{C} \subseteq \mv{B}$ be a c-component of $G$. Then, for any assignment $\pa(\mv{C}) \in \Sigma^{|\Pa(\mv{C})|}$, there is an intervention $I\in \mv{I}$ such that the following conditions hold:
\begin{itemize}
\item [{\bf C1.}] $I$ intervenes $\Pa(\mv{C})$ with the corresponding assignment of $\pa(\mv{C})$,\footnote{In our construction, $\Pa(C)$ always take $\mathbf{0}$ in the natural distribution.  Henceforth, the interventions where some vertices in $\Pa(C)$ are not intervened are not considered here, as they are equivalent to the case when those vertices are intervened with $\mv{0}$.} 
\item [{\bf C2.}] $I$ does not intervene any node in $\mv{C}$. 
\end{itemize}
\end{lemma}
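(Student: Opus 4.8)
The plan is a standard indistinguishability (adversary) argument. Fix a c-component $\mv C \subseteq \mv B$ and an assignment $\pa(\mv C)$, and suppose, towards a contradiction, that the algorithm's interventions $\mv I$ contain no $I$ satisfying both \textbf{C1} and \textbf{C2} (the case $\pa(\mv C) = \mathbf{0}$ is trivial, since the observational intervention $do(\emptyset)$ is always in $\mv I$ and meets \textbf{C1}, \textbf{C2}). I will build two {\smbn}s $\mathcal{M}_0, \mathcal{M}_1$ on $G$ that (a) induce the \emph{same} interventional distribution under every intervention failing \textbf{C1} or \textbf{C2}, yet (b) satisfy $\Delta(\mathcal{M}_0,\mathcal{M}_1) = 1$. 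Feeding these to the algorithm contradicts correctness: for $\cit$, running it on $(\mathcal{M}_0,\mathcal{M}_0)$ versus $(\mathcal{M}_0,\mathcal{M}_1)$ forces conflicting outputs; for $\clp$, the returned oracle would have to be $\epsilon$-close (with $\epsilon < 1/2$) to two interventional distributions at total variation distance $1$.

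Because of the restricted shape of $G$ (edges only $\mv A \to \mv B$, bidirected inside $\mv B$), $\Pa(\mv C) \subseteq \mv A$, there are no directed edges among the nodes of $\mv C$, and the confounders incident to $\mv C$ have both endpoints in $\mv C$; I may take the bidirected graph on $\mv C$ to be a path $C_1 \leftrightarrow C_2 \leftrightarrow \cdots \leftrightarrow C_\ell$ with confounders $U_1, \dots, U_{\ell-1}$, each uniform on $\Sigma \cong \Z_K$. $\mathcal{M}_0$ and $\mathcal{M}_1$ agree on every mechanism outside $\mv C$, and inside $\mv C$ both use the default mechanism $\mv C \equiv \mathbf{0}$ whenever $\Pa(\mv C) \neq \pa(\mv C)$; when $\Pa(\mv C) = \pa(\mv C)$ they set $C_1 = U_1$, $C_i = U_{i-1} + U_i$ for $1 < i < \ell$, and $C_\ell = U_{\ell-1}$ in $\mathcal{M}_0$ but $C_\ell = U_{\ell-1} + 1$ in $\mathcal{M}_1$. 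One checks that the resulting ``gadget'' laws $Q_0, Q_1$ on $\Sigma^\ell$ have disjoint supports, so $\delta_{TV}(Q_0,Q_1) = 1$, while for every proper subset $\mv S \subsetneq \mv C$ and every assignment to $\mv C \setminus \mv S$ the interventional marginal $P[\mv S \mid do(\mv C \setminus \mv S)]$ is the same under $Q_0$ and $Q_1$: deleting or intervening on any $C_j$ breaks the path, and on each surviving sub-path the linear map from the relevant $U_i$'s is a bijection, so that sub-distribution is uniform on a full cube in both models.

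Claim (a) follows from the c-component factorization. For any intervention $I = do(\mv t)$, Lemma~\ref{lemma:c-component-factorization} together with Lemma~\ref{lemma:trimUnnecessaryActions} writes $P_{\mathcal{M}}[\mv V \setminus \mv T \mid do(\mv t)]$ as a product of local factors $P_{\mathcal{M}}[\mv S_i \mid do(\pa(\mv S_i))]$, one per c-component $\mv S_i$ of $G[\mv V \setminus \mv T]$. The only factor that can differ between $\mathcal{M}_0$ and $\mathcal{M}_1$ is one equal to $P_{\mathcal{M}}[\mv C \mid do(\pa(\mv C))]$ with argument our fixed $\pa(\mv C)$; this requires $\mv C$ to survive intact as a c-component of $G[\mv V \setminus \mv T]$, i.e.\ $\mv T \cap \mv C = \emptyset$ (this is \textbf{C2}), and the value forced on $\Pa(\mv C)$ to equal $\pa(\mv C)$ (this is \textbf{C1}, using the footnote convention that an un-intervened parent counts as intervened to $\mathbf{0}$). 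When $\mv T$ touches $\mv C$ the surviving c-components inside $\mv C$ are proper sub-paths, whose factors agree by the gadget property; when $\Pa(\mv C)$ is forced to a value $\neq \pa(\mv C)$ the shared default mechanism is used. Since by assumption no $I \in \mv I$ meets \textbf{C1} and \textbf{C2}, $\mathcal{M}_0$ and $\mathcal{M}_1$ are indistinguishable under $\mv I$. To handle adaptivity, couple the run of the algorithm against $\mathcal{M}_0$ with its run against $\mathcal{M}_1$ using the same coins and samples: until a \textbf{C1}-\textbf{C2} intervention is issued the two runs are identical, hence so are the outputs. Comparing with the required behavior on $(\mathcal{M}_0,\mathcal{M}_0)$ versus $(\mathcal{M}_0,\mathcal{M}_1)$, where $\Delta(\mathcal{M}_0,\mathcal{M}_1) = 1$ via $\mv T = \mv V \setminus \mv C$ and $\mv t$ that forces $\Pa(\mv C)$ to $\pa(\mv C)$ (Lemma~\ref{lemma:trimUnnecessaryActions}), shows the ``never issue a \textbf{C1}-\textbf{C2} intervention'' event cannot have probability close to $1$; hence with constant probability, and in particular for some execution, $\mv I$ contains such an intervention. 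The $\clp$ case is identical, querying the learned oracle at $\mv T = \mv V \setminus \mv C$ with $\Pa(\mv C)$ forced to $\pa(\mv C)$.

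The main work is the gadget: exhibiting, inside a purely confounded component, two joint laws at total variation distance $1$ that nevertheless agree on every interventional marginal of order below $|\mv C|$, and checking that this agreement is preserved by the c-component factorization of $G[\mv V \setminus \mv T]$ for \emph{every} $\mv T$ meeting $\mv C$ — this is exactly what the choice $C_i = U_{i-1} + U_i$ with a single shifted endpoint is engineered for. A secondary subtlety is making the coupling precise for adaptive randomized algorithms with only a $2/3$ success guarantee, so that the conclusion ``$\mv I$ contains a \textbf{C1}-\textbf{C2} intervention'' holds with the constant probability needed by the counting step of Theorem~\ref{thm:lb}.
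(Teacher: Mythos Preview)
Your approach matches the paper's: an adversary/indistinguishability argument based on two models that differ only in the mechanism of a single node of $\mv C$, built from an XOR/sum gadget over the hidden variables on a spanning subgraph of the bidirected edges, so that under $do(\pa(\mv C))$ the two laws on $\mv C$ have disjoint supports while under every other intervention they coincide. The paper uses a spanning \emph{tree} with $V_i=\mathsf{XOR}(\mv U^{V_i})$ and an auxiliary random bit $R_i$ when $\Pa(V_i)$ is inconsistent with $\pa(V_i)$; your path gadget with modular addition and a ``default to $0$'' rule is a close variant, and your coupling treatment of adaptivity is equivalent to the paper's adversary that reveals uniform distributions.

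There is one genuine gap. The clause ``both use the default mechanism $\mv C\equiv\mv 0$ whenever $\Pa(\mv C)\neq\pa(\mv C)$'' cannot be taken literally: the mechanism of $C_i$ may only inspect its own parents $\Pa(C_i)\subseteq\Pa(\mv C)$, not all of $\Pa(\mv C)$, and in the lower-bound graph each $\Pa(C_i)$ is a proper subset of $\Pa(\mv C)$. Under the only valid reading --- each $C_i=0$ when $\Pa(C_i)\neq\pa(C_i)$ --- your justification for the \textbf{C1}-failing case (``the shared default mechanism is used'') is incomplete: if $\Pa(C_\ell)$ happens to be set correctly while some other $\Pa(C_j)$ is not, then $C_\ell$ still runs the gadget formula (which differs between $\mathcal M_0$ and $\mathcal M_1$) and only $C_j$ defaults to $0$. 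You then need to re-run the broken-path argument here (the defaulting $C_j$ ignores its incident $U$'s, severing the path so the segment containing $C_\ell$ is a bijective image of its $U$'s and hence uniform in both models); you only sketched this for the \textbf{C2}-failing case. The paper sidesteps the asymmetry by making the defaulting node itself random, $V_j=\mathsf{XOR}(\mv U^{V_j},R_j)$, so that the uniformity argument is literally the same in both cases. A secondary point: assuming the bidirected graph on $\mv C$ is a path is not justified for arbitrary $G$ (a Hamiltonian path need not exist), whereas the paper's spanning-tree gadget always applies; in the context of Theorem~\ref{thm:lb}, where $G$ is constructed, one may simply choose the bidirected edges on each $\mv B_i$ to form a path, so this is harmless there.
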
 
\item[] \textbf{Existence.}
We then show that there is a graph with the structure mentioned above for which $\mv {I}$ must be $\Omega( K^{\ell d - 2} \log n)$ in order for the condition to be satisfied. More precisely, 
\begin{lemma}\label{lemma:lb_construction}
There exists a $G$, and a constant $c$ such that for any set of interventions $\mathbf{I}$ with $|\mathbf{I}|< c\cdot K^{\ell d - 2} \log n$, there is a $\mv{C} \subseteq \mv{B}$, which is a c-component of $G$, and an assignment $\pa(\mv{C})$ such that no intervention in $\mathbf{I}$ 
\begin{itemize}
\item assigns $\pa(\mv{C})$ to $\Pa(\mv{C})$, and
\item observes all variables in $\mv{C}$.
\end{itemize}
\end{lemma}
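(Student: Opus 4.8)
The plan is to argue by the probabilistic method: I will exhibit a \emph{random} graph $G$ of the bipartite form fixed just before Lemma~\ref{lemma:StructureToDistinguish} (vertices $\mv A\cup\mv B$, directed edges $\mv A\to\mv B$, bidirected edges inside $\mv B$) and show that, with positive probability over the construction, \emph{no} set of fewer than $c\,K^{\ell d-2}\log n$ interventions is a covering set for $G$; any $G$ in the support then witnesses the lemma. Concretely, let $\mv A$ consist of $a$ observable vertices (with $a$ chosen at the end, polynomially related to $n$), and partition $\mv B$ into $N$ c-components $\mv C_1,\dots,\mv C_N$, each a bidirected clique on $\ell$ vertices, so $n=a+\ell N$. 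For each $i$ independently, give the vertices of $\mv C_i$ in-degree $d$ with parents drawn from $\mv A$ so that $\Pa(\mv C_i)$ is essentially a uniformly random subset of $\mv A$ of size $\ell d$; a constant number of forced coincidences among these parents (so $|\Pa(\mv C_i)|=\ell d-2$) is the slack that turns the exponent $\ell d$ into $\ell d-2$ in the final bound. All remaining edges are of the two allowed types, so $G\in\mathcal{G}_{d,\ell}$ and has the structure required by Lemma~\ref{lemma:StructureToDistinguish}.

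Next I would translate ``$\mv I$ is a covering set for $G$'' into a clean combinatorial statement. Using Definition~\ref{defn:covering_interventions} together with the $\mv 0$–padding convention from the footnote to Lemma~\ref{lemma:StructureToDistinguish}, $\mv I$ covers $G$ iff for every $i$ and every $\pa(\mv C_i)\in\Sigma^{|\Pa(\mv C_i)|}$ there is an $I\in\mv I$ that does not intervene on any vertex of $\mv C_i$ and whose \emph{effective} assignment to $\Pa(\mv C_i)$ (intervened coordinates take their set value, non-intervened coordinates count as $\mv 0$) equals $\pa(\mv C_i)$. Since $\Pa(\mv C_i)\subseteq\mv A$ is disjoint from $\mv B$, truncating any $I$ to $I\cap\mv A$ keeps every covering it provided; hence we may assume every $I\in\mv I$ intervenes only inside $\mv A$, i.e.\ is a string $\bar I\in\Sigma^{\mv A}$ (after $\mv 0$-padding). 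Writing $P_i=\Pa(\mv C_i)$, ``$\mv I$ covers $G$'' becomes: for every $i$, the projections $\{\bar I|_{P_i}:I\in\mv I\}$ equal all of $\Sigma^{P_i}$. So the minimum covering-set size equals the minimum size of a ``covering code'' $\mv I\subseteq\Sigma^{\mv A}$ whose projection onto each $P_i$ is surjective.

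It then remains to lower bound this covering code. For a single $i$, surjectivity onto $\Sigma^{P_i}$ already forces $|\mv I|\ge K^{|P_i|}=K^{\ell d-2}$, since $m$ strings have at most $m$ distinct projections on $P_i$; this is the source of the $K^{\ell d-2}$ factor. To also extract the $\log n$ factor I would rule out codes of size $m=c\,K^{\ell d-2}\log n$ that are surjective \emph{simultaneously} on all $N$ of the (randomly chosen) sets $P_i$. For a fixed $\mv I$ of size $m$, surjectivity on $P_i$ depends only on $P_i$, and the $P_i$ are independent, so $\Pr_G[\mv I\text{ covers }G]=\prod_i q_i$ with $q_i=\Pr_{P_i}[\mv I \text{ surjects onto }\Sigma^{P_i}]$. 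The key estimate is a counting/concentration bound showing $q_i\le q<1$ for every such $\mv I$ — intuitively, a random subset $P_i$ is, with constant probability, ``bad'' for $\mv I$, which one shows by tracking patterns on $P_i$ with no zero coordinates (a string of $\mv I$ can realize such a pattern only by explicitly setting all of $P_i$) and bounding, via the symbol-frequency profiles of the strings of $\mv I$, the chance that all of them are hit. Given $q_i\le q$, a union bound over the at most $(K+1)^{am}$ candidate codes of size $m$ gives
\[
\Pr_G[\exists\,\mv I,\ |\mv I|\le m,\ \mv I\text{ covers }G]\ \le\ (K+1)^{am}\,q^{N},
\]
which is $<1$ once $N\log(1/q)>am\log(K+1)$; choosing $a$ polynomially smaller than $N$ (so $\log N=\Theta(\log n)$) and $N$ large enough versus $a$ with $n=a+\ell N$ makes this hold precisely for $m\le c\,K^{\ell d-2}\log n$, yielding a graph $G$ as claimed.

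I expect the main obstacle to be exactly that ``key estimate'': proving $q_i\le q<1$ for $m$ as large as $c\,K^{\ell d-2}\log n$, i.e.\ showing that even an adversarially chosen code of that size fails, with constant probability, to surject onto a random $(\ell d-2)$-subset. This is the step that separates the desired $\Omega(K^{\ell d-2}\log n)$ bound from the trivial $\Omega(K^{\ell d-2})$ one, and it is what dictates both the precise balance between $a$ and $N$ in the construction and the constant-factor slack that produces the exponent $\ell d-2$ rather than $\ell d$; a cruder version of it (or a more structured, non-random choice of the family $\{P_i\}$, e.g.\ a suitable combinatorial covering design) may be needed to make the probability bookkeeping go through cleanly.
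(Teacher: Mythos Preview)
Your overall plan (probabilistic method on a random bipartite $\mv A\!\to\!\mv B$ graph, then reducing a covering set to a code in $\Sigma^{\mv A}$ that must surject onto $\Sigma^{P_i}$ for each $P_i=\Pa(\mv C_i)$) is in the right spirit, but both the construction and the analysis differ from the paper's in a way that matters.

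The paper does \emph{not} take $|\Pa(\mv C_i)|=\ell d-2$ with all parents random. Instead it splits $\mv A=\mv A_f\cup\mv A_r$ where $|\mv A_f|=\ell d-2$ and \emph{every} vertex of $\mv A_f$ is a parent of \emph{every} c-component, while only two additional parents per c-component are drawn at random from $\mv A_r$ (so $|\Pa(\mv C_i)|=\ell d$). This decoupling is the whole trick. The $K^{\ell d-2}$ factor now comes for free: since all of $\mv A_f$ lies in every $\Pa(\mv C_i)$, any covering $\mv I$ splits by its restriction to $\mv A_f$ into $K^{\ell d-2}$ parts $\mv I_1,\dots,\mv I_{K^{\ell d-2}}$, and each part must on its own realize all $K^2$ patterns on the two random parents of every c-component. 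The $\log n$ factor then comes from a much lighter cell-collision argument applied only to those two random parents inside a single $\mv I_j$: with $|\mv I_j|=t$ and $K=2$, the columns over $\mv A_r$ take at most $2^t$ values, and if the two random parents of some $\mv C_i$ land in the same cell then both $(0,1)$ and $(1,0)$ are missing. So the ``$-2$'' in the exponent is the number of parents \emph{sacrificed to the random part} that drives the $\log n$, not a reduction of $|\Pa(\mv C_i)|$ via forced coincidences.

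Your ``key estimate''---that for \emph{every} code $\mv I$ of size $m=cK^{\ell d-2}\log n$, a uniformly random $(\ell d-2)$-subset $P$ fails to be surjected onto with probability at least $1-q$ for some fixed constant $q<1$---is not merely hard; as literally stated it is false. Covering arrays of strength $\ell d-2$ and size $O(K^{\ell d-2}\log a)$ exist and surject onto \emph{all} $(\ell d-2)$-subsets, so at the threshold there are codes with $q_i\equiv 1$; and even well below the threshold, a balanced (say random) code of size $m$ has $q_i\to 1$ as $a\to\infty$, since each of the $K^{\ell d-2}$ patterns is missing on a random $P$ with probability roughly $(1-K^{-(\ell d-2)})^{m}$. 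A version with $q\to 1$ at a controlled rate might conceivably be pushed through the union bound, but that is a substantially more delicate argument than what you outlined, and it is exactly the difficulty the paper's fixed/random-parent split is designed to avoid.
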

\end{itemize}
\begin{proof}[Proof of Lemma~\ref{lemma:lb_construction}]
We show existence of such a $G$ using a probabilistic argument. We consider $\mv A =\mv A_r\cup \mv A_f$, where $\mv{A}_r:=\{A_1,\ldots,A_{n}\}$, and $\mv{A}_f := \{ A_{n+1},\ldots,A_{n+(\ell d)-2} \}$. We consider $\mv{B}:=\mv{B}_1 \cup \mv{B}_2 \cup \ldots \cup \mv{B}_{n/\ell}$, where for each $i \in [n/\ell]$, $\mv{B}_{i} = \{ B_{i,1},B_{i,2},\ldots,B_{i,\ell} \}$.  $\mv{V}=\mv A\cup \mv B$ will be the set of observable nodes in the graph. Therefore, the number of nodes is $|\mv V|= 2n+\ell d-2 = O(n)$. 

 The set of unobservable nodes are such that the following is satisfied:
\begin{itemize}
\item  $\mv{B}_i$ is a c-component in $G$, for each $\mv B_i$. 
\end{itemize}

We consider random directed bipartite graphs on $\mv{V}$ generated as follows, where all the edges go from $\mv A$ to $\mv B$. Each c-component $\mv{B}_i$ has exactly $\ell d$ parents, chosen as follows:
\begin{itemize}
\item 
$\mv{A}_f \subset \Pa(\mv{B}_i)$, namely every vertex of $\mv A_f$ is the parent of at least one node in $\mv B_i$. 
\item
The remaining two parents of $\mv{B}_i$ are chosen randomly from $\mv{A}_{r}$ with edge density $p:=2/n$.
\end{itemize}

Let $\mathbf{I}$ be a set of interventions that satisfies the conditions of  Lemma~\ref{lemma:StructureToDistinguish}. 
 Let $\mv I'\subseteq\mv I$ be the interventions that intervene \emph{all the nodes} in $\mv A$. The nodes in $\mv A_f$ can be intervened in $|\Sigma|^{|\mv A_f|}= K^{\ell d - 2}$ ways. This induces a partition of $\mathbf{I'}$ into $K^{\ell d - 2}$ parts, where the interventions in each partition intervenes $\mv A_f$ with the same assignment. Let $\{\mv I_1, \ldots , \mv I_{K^{\ell d-2}}\}$ such that $\mv I' = \mv I_1\cup \ldots\cup \mv I_{K^{\ell d-2}}$ be this partition. We will show that for each $j$, $|\mv I_{j}|=\Omega(\log n)$, implying that 
\[
|\mv I|\ge |\mv I'| \ge K^{\ell d - 2}\cdot \Omega(\log n) = \Omega(K^{\ell d - 2}\log n).
\]

Consider a $\mv I_j$, with $|\mv I_j|=t$. Further, for simplicity we assume that $K=2$ for this part, and that $\Sigma = \{0,1\}$. Since all the nodes in $\mv A_r$ are intervened, consider one such node. For any node in $\mv A_r$ consider the $t$ bit binary string denoting whether it is intervened with 0 or 1 in the $t$ interventions. This divides the set $\mv{A}_r$ into $2^{t}$ cells $\mv Z_1, \ldots, \mv Z_{2^{t}}$, where two nodes are in the same cell if they are intervened with the identical value by each intervention in $\mv I_j$. The expected number of pairs of vertices in $\mv Z_h$ that are both parents of some vertex in $\mv{B}$ is $O(p |\mv{Z}_h|^2)$.  Therefore, the expected number of pairs of vertices that are both parents of some vertex in $\mv{B}$ and also belong to the same cell is $O\left(\sum_{h} p |\mv{Z}_h|^2\right)$, which is at least $O(p  n^2  2^{-{t}})$ (since $\sum_{h} \mv{Z}_h = n$).  Now for any such pair of vertices $A,A^{\prime} \in \mv{A}_{r}$ that belong to the same cell, there exists no intervention such that $A = 0$ and $A^{\prime} = 1$, contradicting to our requirement.  Therefore, $p n^2  2^{-{t}} < 1$ which implies $t$ is at least $\Omega(\log n)$.
\end{proof} 

Combining these two lemmas, we obtain the lower bound for the adaptive versions of $\cit$ and $\clp$.  Now we proceed to prove Lemma~\ref{lemma:StructureToDistinguish}. 

\begin{proof}[Proof of Lemma \ref{lemma:StructureToDistinguish}] 
 
In our construction we consider models where $\mv{A}$ is assigned $\mv{0}^{|\mv{A}|}$ with probability one in the observable distribution. In other words, each $A_i\in \mv A$ takes value 0 with probability one. Consider any intervention $I$ that targets a $\mv A'\subseteq\mv A$. Consider the intervention $I'$ that intervenes $\mv A'$ the same way as $I$, but intervenes the nodes in $\mv A\setminus \mv A'$ with 0's. Since there are no incoming arrows to $\mv A$, the distribution of $\mv I'$ will be the same as $\mv I$. Therefore, we assume that each intervention $I$ we make intervenes \emph{all the vertices} in $\mv A$. 

Suppose there is an algorithm that makes a series of interventions $\mv{I}$ that do not satisfy the conditions of Lemma~\ref{lemma:StructureToDistinguish}. In other words, there exists a c-component $\mv{C} \subseteq \mv{B}$ and an assignment $\pa(\mv{C})$, such that no intervention in $\mv{I}$ satisfies {\bf C1} and {\bf C2}.  Let $\mv{C} = \{ V_1, V_2, \ldots, V_{\ell} \}$ and $\Pa(\mv{C}) = \{ W_1,W_2, \ldots, W_{s} \}$.


Let $G^{\prime}$ be a subgraph of $G$ on the vertices $\mv{C} \cup \Pa(\mv{C})$ whose edge set satisfies the following: 
\begin{itemize}
\item $\mv{C}$ contains exactly $\ell - 1$ bidirected edges that form a tree.
\item each of the parent vertices $W_i$ has exactly one child node in $\mv{C}$. 
\end{itemize}



In our construction, we consider models where the distribution on the rest of the vertices of $G$ (\textit{i.e.,} $\mv{V} \setminus (\mv{C \cup \Pa(C)})$) will be independent of the distribution on $\mv{C \cup \Pa(C)}$. 
Therefore, we can restrict our focus on $G^{\prime}$.  We will show the existence of two models $\mathcal{M}$ and $\mathcal{N}$ on $G^{\prime}$ such that: 


\begin{enumerate}[label=\textbf{S.\arabic*},ref=S.\arabic*]
\item Let $\mv{T} \subseteq (\mv{C} \cup \Pa(\mv{C}))$, $\mv{t} \in \Sigma^{|\mv{T}|}$.  Let $\{\mv{C}_1,\ldots,\mv{C}_q\}$ be the c-components of the induced graph $G^{\prime}[\mv{C} \setminus \mv{T}]$.  Suppose under the intervention $do(\mv{t})$, the conditions {\bf C1}, or {\bf C2} is not satisfied, then, the distributions over ${\mv{C} \setminus \mv {T}}$ in $\mathcal{M}$ and $\mathcal{N}$ are identical under $do(\mv{t})$, namely, 
\ignore{
\begin{align*}
P_{\mathcal{M}}[\mv{C \setminus T}  \mid  do(\mv{t})] = P_{\mathcal{N}}[\mv{C \setminus T}  \mid  do(\mv{t})].
\end{align*}
Suppose, the intervention $do(\mv{t})$ does not satisfy the two conditions of the Lemma statement.  Then, }
\begin{align*}
P_{\mathcal{M}}[\mv{C \setminus T}  \mid  do(\mv{t})] = \prod_{i} P_{\mathcal{M}}[\mv{C}_i  \mid  do(\mv{t})] = \prod_{i} P_{\mathcal{N}}[\mv{C}_i  \mid  do(\mv{t})] = P_{\mathcal{N}}[\mv{C \setminus T}  \mid  do(\mv{t})]
\end{align*}
where for each $i$, $P_{\mathcal{M}}[\mv{C}_i  \mid  do(\mv{t})] = P_{\mathcal{N}}[\mv{C}_i  \mid  do(\mv{t})]$ and is a uniform distribution over $\{0,1\}^{|\mv{C}_i|}$,   \label{itemtwo}
\item $\delta_{TV}(P_{\mathcal{M}}[\mv{C}  \mid  do(\pa(\mv{C}))], P_{\mathcal{N}}[\mv{C}  \mid  do(\pa(\mv{C}))])=1$.\footnote{Recall that $\pa(\mv{C})$ is the assignment that gets fixed after the algorithm fixes the sequence $\mv{I}$.} \label{itemthree} 
\end{enumerate} 

Recall that the sequence of interventions performed by an (adaptive) algorithm is denoted by $\mv{I}$.  The assignment $\pa(\mv{C})$ gets fixed only after the algorithm fixes {\em all} the interventions in $\mv{I}$.  However, we know that any intervention in $\mv{I}$ belongs to the category \ref{itemtwo}.  And for each such intervention in $\mv{I}$, the corresponding distributions on models $\mathcal{M}$ and $\mathcal{N}$ are equal, and is defined by a set of uniform distributions over the c-components.  Therefore, we can construct an adversary that, for each intervention in $\mv{I}$ performed by the algorithm (sequentially), outputs a distribution\footnote{We consider the worst case, where the algorithm is provided with infinite samples.} based on \ref{itemtwo}.  When the algorithm terminates, the assignment $\pa(\mv{C})$ gets fixed, and we can show the existence of two models $\mathcal{M}$ and $\mathcal{N}$ such that 
\begin{itemize}
\item the models agree on all the interventional distributions in $\mv{I}$, and all such distributions also match the corresponding distributions that were revealed by the adversary.
\item $\delta_{TV}(P_{\mathcal{M}}[\mv{C}  \mid  do(\pa(\mv{C}))], P_{\mathcal{N}}[\mv{C}  \mid  do(\pa(\mv{C}))])=1$. 
\end{itemize}


Moreover, we can construct such an adversary that outputs distributions in the same way, for all the c-components $\mv{C} \subseteq \mv{B}$ of $G$.  Thus, an explicit construction of two models $\mathcal{M}$ and $\mathcal{N}$ on $G^{\prime}$ that generates distributions according to \ref{itemtwo} and~\ref{itemthree} would conclude our proof.  The remainder of the proof is dedicated towards this goal.

Let $\mv{U}$ be the set of all unobservable variables in $G^{\prime}$.  Let $\mv{U}^{V_i} \subseteq \mv{U}$ represent the bidirected edges incident to $V_i$ in $G^{\prime}$.  Also, for each variable $V_i$ we have an additional boolean random variable $R_i$ that provides randomness to $V_i$. All the randomness in the models $\mathcal{M}$ and $\mathcal{N}$ we construct are in the hidden variables $U_i$'s and the $R_i$'s. In other words, the observable variables are a deterministic function of these. 
The models $\mathcal{M}$ and $\mathcal{N}$ are defined as follows:  
\begin{enumerate}
\item \begin{enumerate}
\item For each bidirected edge $U_i \in \mv{U}$, $U_i$ is a $\textrm{Bern(0.5)}$ random variable in both $\mathcal M$, and $\mathcal N$. 
\item In each model, $R_i$'s  are also independent $\textrm{Bern(0.5)}$ random variables. 
\end{enumerate}
\item For each $i \in [s]$, $W_i = 0$ with probability one in both $\mathcal M$, and $\mathcal N$.
\item For each $V_i \in \mv{C}$, with probability one: 
\begin{enumerate}
\item when $\Pa(V_i)$ is not consistent with $\pa(V_i)$, then $V_i=\mathsf{XOR}(\mv{U}^{V_i},R_i)$ in both both $\mathcal M$, and $\mathcal N$. 
\item when $\Pa(V_i)$ is consistent with $\pa(V_i)$ and $i \neq 1$, then $V_i = \mathsf{XOR}(\mv{U}^{V_i})$ in both $\mathcal M$, and $\mathcal N$.
\item when $\Pa(V_i)$ is consistent with $\pa(V_i)$ and $i = 1$, $V_i$ takes 
\begin{itemize}
\item $V_i = \mathsf{XOR}(\mv{U}^{V_i})$ in $\mathcal{M}$, and  
\item $V_i = \mathsf{XNOR}(\mv{U}^{V_i})$ in $\mathcal{N}$.
\end{itemize}
\end{enumerate}
\end{enumerate}

\textbf{Case 1: When $I$ respects \ref{itemtwo}.} Consider an intervention $I$, identified by $do(\mv{t})$,  that respects \ref{itemtwo}.  That is, either $I$ intervenes some node in $\mv{C}$, or $I$ does not intervene $\Pa(\mv{C})$ with the assignment $\pa(\mv{C})$.  Let $\{\mv{C}_1,\ldots,\mv{C}_q\}$ be the c-components of the graph induced by $\mv{C} \setminus \mv{T}$.  Note that the models $M$, and $N$, differ only on the function $V_1$.  Therefore, when $V_1$ is intervened in $I$, it is easy to see that the required distributions are equal, and is a product of uniform distributions over the c-components.\footnote{Recall that our objective is to prove: $P_{\mathcal{M}}[\mv{C \setminus T}  \mid  do(\mv{t})] = \prod_{i} P_{\mathcal{M}}[\mv{C}_i  \mid  do(\mv{t})] = \prod_{i} P_{\mathcal{N}}[\mv{C}_i  \mid  do(\mv{t})] = P_{\mathcal{N}}[\mv{C \setminus T}  \mid  do(\mv{t})]$, where for each $i$, $P_{\mathcal{M}}[\mv{C}_i  \mid  do(\mv{t})]=P_{\mathcal{N}}[\mv{C}_i  \mid  do(\mv{t})]$ is a uniform distribution over $\{0,1\}^{|\mv{C}_i|}$.}  Suppose $V_1$ is not intervened in $I$, and without loss of generality let $\mv{C}_1$ be the c-component that contains $V_1$.  Since the models differ only on $V_1$, it is easy to see that $P_{\mathcal{M}}[\mv{C}_i  \mid do(\mv{t})]= P_{\mathcal{N}}[\mv{C}_i  \mid  do(\mv{t})]$ for all $i \neq 1$, and is uniform over $\{0,1\}^{|\mv{C}_i|}$.  Hence, it is sufficient to prove that $P[\mv{C}_1  \mid  do(\mv{t})]$'s are equal and uniform in both models.  Let $\mv{S}$ be the set of all $U_i$'s and $R_j$'s of the following type: a) $U_i$'s that have one child in $\mv{C}_1$ and another child in $\mv{T}$; b) $R_j$'s with respect to $V_j\in \mv{C}_1$ such that $\pa(V_j)$\footnote{We refer $\pa(V_j)$ with respect to the assignment $\pa(\mv{C})$.} is inconsistent with $\mv{t}$ (\textit{i.e.,} $V_j \in \mv{C}_1$ that computes $\mathsf{XOR}(\mv{U}^{V_j},R_j)$).  Now, for any fixed assignment $\mv{a}_{\mv{{C}^{-1}_{1}}}$ to $\mv{C}_1 \setminus \{V_1\}$ and $\mv{a}_{\mv{S}}$ to $\mv{S}$, because the bi-directed edges within $\mv{C}_1$ form a `{\em tree}', the value of every unobservable variable within $\mv{C}_1$\footnote{We refer to the unobservable variables $U_i$'s where both the children of $U_i$ lie in $\mv{C}_1$.} can be computed.  Note that $V_1$ computes $\mathsf{XOR}(\mv{a}_{\mv{{C}^{-1}_{1}}},\mv{a}_{S})$ in $\mathcal{M}$, and $\mathsf{XNOR}(\mv{a}_{\mv{{C}^{-1}_{1}}},\mv{a}_{S})$ in $\mathcal{N}$.  However, we know that $\mv{S}$ is a non-empty set and the bit parities of $\mv{S}$ are uniformly distributed in both the models.  This implies $P_{\mathcal{M}}[\mv{C}_1  \mid do(\mv{t})]= P_{\mathcal{N}}[\mv{C}_1  \mid  do(\mv{t})]$, and is a uniform distribution over $\{0,1\}^{|\mv{C}_1|}$.              

\textbf{Case 2: When $I$ respects \ref{itemthree}.} Consider an intervention $I$ that respects \ref{itemthree}.  That is, $\Pa(\mv{C})$ is intervened with the assignment $\pa(C)$ in $I$, and no node of $\mv{C}$ is intervened in $I$.  Consider the set of variables $\mv{S}$ as defined before for the \ref{itemtwo} case.  Note that $\mv{S}$ is empty here.  This implies, for any fixed assignment $\mv{a}_{\mv{{C}^{-1}}}$ to $\mv{C} \setminus \{V_1\}$, $V_1$ computes $\mathsf{XOR}(\mv{a}_{\mv{{C}^{-1}}})$ in $\mathcal{M}$, and $V_1$ computes $\mathsf{XNOR}(\mv{a}_{\mv{{C}^{-1}}})$ in $\mathcal{N}$.  This implies, the supports of $P_{\mathcal{M}}[\mv{C} \mid do(\pa(\mv{C}))]$ and $P_{\mathcal{N}}[\mv{C} \mid do(\pa(\mv{C}))]$ are disjoint, and therefore the total variation distance is $1$.

Hence, irrespective of the number of samples taken from the interventions of $\mathbf{I}$, any adaptive algorithm that solves $\cit$ or $\clp$ must consider a sequence of interventions that satisfies the conditions {\bf C1} and {\bf C2}.

\ignore{
In Lemma~\ref{lemma:disparitylemma}, we show that when the parents of $\mv{C}$ are targeted for intervention with the assignment $\pa(\mv{C})$, the supports of $P_{\mathcal{M}}[\mv{C} \mid do(\pa(\mv{C}))]$ and $P_{\mathcal{N}}[\mv{C} \mid do(\pa(\mv{C}))]$ are disjoint.  This implies the total variation distance between $P_{\mathcal{M}}[\mv{V} \setminus \mv{T} \mid do(\mv{t})]$ and $P_{\mathcal{N}}[\mv{V} \setminus \mv{T} \mid do(\mv{t})]$ is $1$ when $\mv{t} = \pa(\mv{C})$.  Therefore, $\mathcal{M}$ and $\mathcal{N}$ respect \ref{itemthree}.  

Also, we know that when the nodes in $\Pa(\mv{C})$ are intervened with an assignment that is inconsistent with $\pa(\mv{C})$, or when some node in $\mv{C}$ is intervened, (\textit{i.e.,} when the intervention belongs to $\ref{itemtwo}$) the corresponding interventional distributions with respect to models $\mathcal{M}$ and $\mathcal{N}$ are equal (shown in Lemma~\ref{lemma:disparitylemma}).  Hence, irrespective of the number of samples taken from the interventions of $\mathbf{I}$, any adaptive algorithm that solves $\cit$ or $\clp$ must consider a sequence of interventions that satisfies the required conditions ({\bf C1} and {\bf C2}).
}
\end{proof}

\ignore{
\begin{lemma}
\label{lemma:disparitylemma} Let $T(\mv{C},\mv{E})$ be an undirected tree such that $\mv{C} = \{ V_1,V_2,\ldots,V_{\ell} \}$ and each edge $e \in \mv{E}$ takes value either $0$ or $1$ with equal probability.  Let $\mv{U}^{V_i}$ denote the edges incident to vertex $V_i$ in $T$.  Also for each $V_i$, besides the edges $\mv{U}^{V_i}$, there exists a special edge $R_i$ that is incident only to $V_i$ which takes value $0$ or $1$ with probability $1/2$.  

For a given $\mv{D} \subseteq \mv{C}$, and $\mv{Q} \subseteq (\mv{C} \setminus \mv{D})$, with respect to model $\mathcal{C} \in \{ \mathcal{M},\mathcal{N} \}$,  $D_i \in \mv{D}$ computes the function $f_i^{\mathcal{C}}$, $Q_j \in \mv{Q}$ computes the function $g_i^{\mathcal{C}}$. 
Then 
$$
P_{\mathcal{M}}[\mv{C}] \neq P_{\mathcal{N}}[\mv{C}] \quad \text{ if } \quad \mv{D} = \mv{C} $$
such that the supports of $P_{\mathcal{M}}[\mv{C}]$ and $P_{\mathcal{N}}[\mv{C}]$ are disjoint.  When $\mv{D} \neq \mv{C}$, let $\mv{C}_i$ be the connected components of the forest $T[\mv{D \cup Q}]$.  Then
\begin{align*}
P_{\mathcal{M}}[\mv{D \cup Q}] = \prod_{i} P_{\mathcal{M}}[\mv{C_i}]  = \prod_{i} P_{\mathcal{N}}[\mv{C_i}] = P_{\mathcal{N}}[\mv{D \cup Q}]
\end{align*}
such that $P_{\mathcal{M}}[\mv{C_i}]$ is a uniform distribution over $\{0,1\}^{|\mv{C}_i|}$, for each $i$.  Here 
$$
f_i^{\mathcal{C}}(\mv{U}^{D_i},R_i) = \begin{cases} \mathsf{XNOR}(\mv{U}^{D_i}) & \text{ when } \mathcal{C=N}, i=1 \\ 
\mathsf{XOR}(\mv{U}^{D_i}) & \text{otherwise} \end{cases} 
\quad \text{and} \quad g_i^{\mathcal{C}}(\mv{U}^{Q_i},R_i) = \mathsf{XOR}(\mv{U}^{Q_i},R_i).
$$ 
\end{lemma}

\begin{proof}
Let $\mv{O} = \mv{C} \setminus (\mv{D \cup Q})$.    

Suppose $\mv{D} \neq \mv{C}$.  
Consider the forest $T[\mv{D \cup Q}]$ induced on $(\mv{D \cup Q})$. Let $T^{\prime}[\mv{D}^{\prime} \cup \mv{Q^{\prime}},E^{\prime}]$ be the tree that contains $D_1=D^{\prime}$ (as the root node) in $T[\mv{D \cup Q}]$.  

Let $E_{\mv{O}^{\prime}}$ be the set of all edges with one end point in $\mv{D}^{\prime} \cup \mv{Q}^{\prime}$ and other end point in $\mv{O}$, that is the edges of the following form $(O_i,D_j^{\prime})$ or $(O_i,Q_j^{\prime})$.  And let $E_{\mv{R^{\prime}}}$ be the set of all $R_j$'s that are incident to $\mv{Q^{\prime}}$.

\noindent Note that: 
\begin{itemize}
\item Given the values of $E_{\mv{O}^{\prime}}$ and $E_{\mv{R^{\prime}}}$, $(\mv{D^{\prime} \cup Q^{\prime}})$ is independent of $((\mv{D} \cup \mv{Q}) \setminus (\mv{D}^{\prime} \cup \mv{Q}^{\prime}))$  
\item $((\mv{D} \cup \mv{Q}) \setminus (\mv{D}^{\prime} \cup \mv{Q}^{\prime}))$ is independent of $(E_{\mv{O}^{\prime}} \cup E_{\mv{R}^{\prime}})$
\item the distributions on $((\mv{D} \cup \mv{Q}) \setminus (\mv{D}^{\prime} \cup \mv{Q}^{\prime}))$ are equal in both models.
\end{itemize}  
Therefore it is sufficient to prove the following equality:
$$ \sum_{\mv{a}_{E_{\mv{O}^{\prime}}},\mv{a}_{E_{\mv{R}^{\prime}}}} \hspace{-0.5em} P_{\mathcal{M}}[\mv{a}_{E_{\mv{O}^{\prime}}},\mv{a}_{E_{\mv{R}^{\prime}}}] P_{\mathcal{M}}[\mv{D}^{\prime}, \mv{Q}^{\prime} \mid \mv{a}_{E_{\mv{O}^{\prime}}},\mv{a}_{E_{\mv{R}^{\prime}}}] = \sum_{\mv{a}_{E_{\mv{O}^{\prime}}},\mv{a}_{E_{\mv{R}^{\prime}}}} \hspace{-0.5em} P_{\mathcal{N}}[\mv{a}_{E_{\mv{O}^{\prime}}},\mv{a}_{E_{\mv{R}^{\prime}}}] P_{\mathcal{N}}[\mv{D}^{\prime} , \mv{Q}^{\prime}  \mid \mv{a}_{E_{\mv{O}^{\prime}}},\mv{a}_{E_{\mv{R}^{\prime}}}].$$
Let $|\mv{D}^{\prime}| = m$.  

For any fixed set of assignments $\mv{d}^{\prime}_{[2,m]}$, $\mv{q^{\prime}}$, $\mv{a}_{E_{\mv{O}^{\prime}}}$, $\mv{a}_{E_{\mv{R}^{\prime}}}$ respectively to the vertices $\mv{D}^{\prime} \setminus \{D_1\}$, $\mv{Q^{\prime}}$, $E_{\mv{O}^{\prime}}$, $E_{\mv{R}^{\prime}}$, the values of the rest of the edges (whose both end points lie within $T^{\prime}$) can be computed (using bottom up approach in the tree $T^{\prime}$ rooted at $D^{\prime}_1$).  We also know that $D_1^{\prime}$ computes $\mathsf{XOR}(\mv{d}^{\prime}_{[2,m]},\mv{q^{\prime}},\mv{a}_{E_{\mv{O}^{\prime}}},\mv{a}_{E_{\mv{R}^{\prime}}})$ in $\mathcal{M}$ and $\mathsf{XNOR}(\mv{d}^{\prime}_{[2,m]},\mv{q^{\prime}},\mv{a}_{E_{\mv{O}^{\prime}}},\mv{a}_{E_{\mv{R}^{\prime}}})$ in $\mathcal{N}$, since the parity of $\mv{U}^{D_1^{\prime}}$ is the parity of $\mv{d}^{\prime}_{[2,m]},\mv{q^{\prime}},\mv{a}_{E_{\mv{O}^{\prime}}},\mv{a}_{E_{\mv{R}^{\prime}}}$.  But the parities of $\mv{a}_{E_{\mv{O}^{\prime}}},\mv{a}_{E_{\mv{R}^{\prime}}}$ are uniformly and evenly distributed in both the models.  Therefore, we get the required equality.

Suppose $\mv{D} = \mv{C}$.  Here we know that $\mv{O} = \emptyset$ and hence $E_{\mv{O}^{\prime}} = \emptyset$.  Also $E_{\mv{R}^{\prime}} = \emptyset$ (since $\mv{Q}^{\prime} = \emptyset$).  For any fixed assignment $\mv{d}_{[2,m]}$ to $\mv{D}_{[2,m]}$, the values of the rest of the edges in $T[\mv{D}]$ can be computed.  This implies the value of $D_1$ in $\mathcal{M}$ is the complement of the value of $D_1$ in $\mathcal{N}$.  Therefore, the supports of $P_{\mathcal{M}}[\mv{D}]$ and $P_{\mathcal{N}}[\mv{D}]$ are disjoint.  Hence proved. 
\end{proof}
}

\section{Program $P_{r,p}$ and Properties}\label{sec:prog}

In this section, we gather the technical tools used to prove the subadditivity result, Theorem \ref{theorem:mainTheorem}. We formulate our claims at a higher level of abstraction than needed for our purposes, so that the essence of the argument becomes clearer.

We begin by defining the optimization problem, and then describe it at a high level. 
\begin{definition}[\textbf{Program} $P_{r,p}(\Sigma,\gamma,\mathcal{C},\mathbf{b},\mathbf{dep},\mathbf{Y})$]  \label{programP}
For integers $r,p\geq 0$, suppose the following are given:
\begin{enumerate}
\item
 an alphabet set $\Sigma$,
\item
 $\gamma \in (0,1)$,
\item
a partition\footnote{Here, we allow some members of $\mcl{C}$ to be empty sets.} $\mathcal{C}$ of $[r]$ into $\mv{C}_1, \mv{C}_2,\ldots,\mv{C}_p$, where for each $j \in [p]$, $s_j = |\mv{C}_j|$ and the elements of $\mv{C}_j$ are $\{n_{j,1}, \dots, n_{j, s_j}\}$ in increasing order,
\item
a vector $\mathbf{b}=(b_1,b_2,\ldots,b_p) \in {[0,1]}^p$,
\item
a vector of sets  $\mathbf{dep}=(dep(1),\ldots,dep(r),dep(\mv{C}_1),\ldots,dep(\mv{C}_p))$ such that:
\begin{align*}
[n_{j,i}] \supseteq dep(n_{j,i}) &\supseteq \{ n_{j,i} \} \cup dep(n_{j,i-1}) &\forall j \in [p], i \in [s_j] \\
s_j \neq 0 &\implies dep(\mv{C}_j) \supseteq dep(n_{j, s_j}) &\qquad \forall j \in [p]\\
s_j = 0 &\implies dep(\mv{C}_j)=\emptyset &\qquad \forall j \in [p]
\end{align*}
\item
a set of functions $\mathbf{Y} = (Y_1, Y_2, \ldots ,Y_p)$, where $Y_j\colon \Sigma^{|dep(\mv{C}_j)|} \to [0,1]$.
\end{enumerate}
The program $P_{r,p}(\Sigma,\gamma,\mcl{C},\mathbf{b},\mv{dep},\mathbf{Y})$ is  the following optimization problem over $\mv{X} = (X_1, \dots, X_r)$ where $X_i \colon \Sigma^{|dep(i)|} \to [0,1]$:
\begin{alignat}{2}
& \underset{\mv{X}}{\min} f_{r,p}(\mv{X}) \defeq
\sum_{a_1 \in \Sigma} X_1(\mv{a}_{dep(1)}) \sum_{a_2 \in \Sigma} X_2(\mv{a}_{dep(2)}) &&\cdots \sum_{a_r \in \Sigma} X_r(\mv{a}_{dep(r)})\cdot \prod_{j=1}^{p} Y_j(\mv{a}_{dep(\mv{C}_j)})  \nonumber \\
& \text{subject to} \nonumber \\
& \sum_{a_i \in \Sigma} X_i(\mv{a}_{dep(i)}) \leq 1 & & \qquad \quad \forall i \in [r], \forall \mv{a}_{dep(i) \setminus \{i\}} \in \Sigma^{|dep(i)\setminus \{i\}|}  \label{constraint1} \\
&
	\sum\limits_{a_{n_{j,1}} \in \Sigma} X_{n_{j,1}}(\mv{a}_{dep(n_{j,1})})  \sum\limits_{a_{n_{j,2}} \in \Sigma} X_{n_{j,2}}(\mv{a}_{dep(n_{j,2})}) \cdots &&\sum\limits_{a_{n_{j, s_j}} \in \Sigma} X_{n_{j,s_j}} (\mv{a}_{dep(n_{j,s_j})}) \cdot Y_j(\mv{a}_{dep(\mv{C}_j)}) \nonumber \\
	& && \geq 
	1-b_j \gamma \qquad \forall j \in [p], \forall \mv{a}_{dep(\mv{C}_j) \setminus \mv{C}_j}  \label{constraint2}
\end{alignat}
\end{definition}

The variables of this program are functions that are based on the Bhattacharya coefficients between distributions\footnote{The distributions may be interventional, or conditional, or a combination of condional and interventional distributions.} on certain variables, and were described in Section~\ref{sec:aub-additivity}.~\eqref{constraint1} captures the fact that the Bhattacharyya coefficient is at most one.  ~\eqref{constraint2} captures the closeness constraint in Theorem~\ref{theorem:mainTheorem}, \textit{i.e.,}~\eqref{hypothesis1}.
Proving a lower bound on the objective value of this program will suffice to prove Theorem~\ref{theorem:mainTheorem}. The remainder of this section is dedicated towards this goal.
Let $\Opt(P_{r,p})$ denote the optimal value of the program.  The next three lemmas (Lemmas \ref{lemma:dependent-set-reduction}, \ref{lemma:Y-Rreduction} and \ref{lemma:R-Elimination}) all have the following flavor:
\begin{itemize}
	\item They take as input an optimization problem (program $P_{r,p}$), and output a new program $P_{r',p}^{\mt{new}}$. 
	\item The optimal value of the program only goes down.
	\item The new program is \emph{simpler} to analyze.\footnote{We understand that this item is very subjective.}
\end{itemize} 
We pass the original program $P_{r,p}$ through the first lemma, and pass its output through the second.  The second lemma is applied multiple times until the output program satisfies a particular property.  The obtained program is then passed through the third lemma to obtain a new program $P_{r-1,p}$ (with a reduced value of $r$), and the steps repeat.  The above procedure reduces to a program with $r=0$, namely to a program of the form $P_{0,p}$.  We can lower bound the objective of this program by simply using~\eqref{constraint2}. Combining these will yield a  lower bound on the optimum of the original program $P_{r,p}$, thus proving Theorem~\ref{theorem:mainTheorem}.

The first lemma takes a program as input and outputs a new program with a smaller optimal value that satisfies $dep(r) = dep(\mv{C}_{f})$ (where $r \in \mv{C}_{f}$).

\begin{lemma}[Dependent Set Reduction] \label{lemma:dependent-set-reduction}
Suppose $r \in \mv{C}_f$. Let $P_{r,p}^{\mt{new}}$ be the program obtained from $P_{r,p}$ by replacing $dep(r)$ by $dep(\mv{C}_f)$, then
\begin{align*}
	\Opt(P_{r,p}) \geq \Opt(P_{r,p}^{\mt{new}}).
\end{align*}
\end{lemma}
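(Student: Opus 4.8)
The plan is to exhibit, for every feasible solution $\mv{X}$ of $P_{r,p}$, a feasible solution $\mv{X}^{\mt{new}}$ of $P_{r,p}^{\mt{new}}$ whose objective value is no larger. Since the only change is that $X_r$ is now allowed to depend on the larger set $dep(\mv{C}_f)$ instead of $dep(r)$, and we have $dep(r) \subseteq dep(\mv{C}_f) \subseteq [r]$ (the first inclusion being part of the hypothesis ensuring $P_{r,p}^{\mt{new}}$ is well-defined, the second from the constraints on $\mv{dep}$ in Definition~\ref{programP}), the natural move is simply to re-use the same functions: set $X_i^{\mt{new}} = X_i$ for $i < r$, and let $X_r^{\mt{new}}(\mv{a}_{dep(\mv{C}_f)}) := X_r(\mv{a}_{dep(r)})$, i.e.\ $X_r^{\mt{new}}$ ignores the extra coordinates. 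This is trivially a valid function $\Sigma^{|dep(\mv{C}_f)|} \to [0,1]$.

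Next I would check that $\mv{X}^{\mt{new}}$ is feasible for $P_{r,p}^{\mt{new}}$. Constraint~\eqref{constraint1} for indices $i < r$ is untouched. For $i = r$, the inner sum $\sum_{a_r} X_r^{\mt{new}}(\mv{a}_{dep(\mv{C}_f)}) = \sum_{a_r} X_r(\mv{a}_{dep(r)}) \le 1$ holds for each fixed assignment to $dep(r)\setminus\{r\}$, hence a fortiori for each fixed assignment to the larger index set $dep(\mv{C}_f)\setminus\{r\}$; so~\eqref{constraint1} still holds. For~\eqref{constraint2}: the constraint for $j \ne f$ does not involve $X_r$ at all and is unchanged; the constraint for $j = f$ reads $\sum_{a_{n_{f,1}}}\cdots\sum_{a_{n_{f,s_f}}} X_{n_{f,i}}(\cdots)\cdot Y_f(\mv{a}_{dep(\mv{C}_f)}) \ge 1 - b_f\gamma$ for each fixed $\mv{a}_{dep(\mv{C}_f)\setminus\mv{C}_f}$, and since $n_{f,s_f} = r$ (as $r\in\mv{C}_f$ is its largest element under the topological ordering) and $X_r^{\mt{new}}$ agrees numerically with $X_r$ on every argument, the left-hand side is exactly the same quantity as in $P_{r,p}$; feasibility is inherited.

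Finally, the objective: replacing $X_r$ by $X_r^{\mt{new}}$ leaves every summand $X_1(\cdot)\cdots X_r(\cdot)\prod_j Y_j(\cdot)$ numerically unchanged, since $X_r^{\mt{new}}$ evaluates to the same number. Hence $f_{r,p}^{\mt{new}}(\mv{X}^{\mt{new}}) = f_{r,p}(\mv{X})$. Taking the infimum over all feasible $\mv{X}$ yields $\Opt(P_{r,p}^{\mt{new}}) \le \Opt(P_{r,p})$, as desired. I do not expect any genuine obstacle here: the lemma is essentially a bookkeeping step that enlarges the domain of one variable and observes that enlarging a feasible region (relaxing a constraint, here the constraint ``$X_r$ may only depend on $dep(r)$'') cannot increase a minimum. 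The only point requiring a line of care is confirming that the renaming is consistent with the structural requirements on $\mv{dep}$ in Definition~\ref{programP} — in particular that after replacing $dep(r)$ by $dep(\mv{C}_f)$ the chain conditions $[n_{j,i}] \supseteq dep(n_{j,i}) \supseteq \{n_{j,i}\}\cup dep(n_{j,i-1})$ and $dep(\mv{C}_j)\supseteq dep(n_{j,s_j})$ are still satisfied, which holds precisely because $dep(r) \subseteq dep(\mv{C}_f) \subseteq [r]$ and $r = n_{f,s_f}$, so the new value of $dep(r)$ is sandwiched correctly.
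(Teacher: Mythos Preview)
Your proposal is correct and follows essentially the same route as the paper's proof: take a feasible (in the paper, optimal) solution $\mv{X}$ of $P_{r,p}$, define $X_r^{\mt{new}}$ by ignoring the extra coordinates in $dep(\mv{C}_f)\setminus dep(r)$ while keeping all other $X_i$ unchanged, verify that the objective is preserved and that both families of constraints carry over (using $n_{f,s_f}=r$ for the $j=f$ case of~\eqref{constraint2}), and conclude. Your additional remark that the chain conditions on $\mv{dep}$ remain satisfied after the replacement is a point the paper leaves implicit, so your version is if anything slightly more careful.
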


\begin{proof}
Our goal is to reduce the given program $P_{r,p}$ to a different program $P_{r,p}^{\mt{new}}$ such that $\Opt(P_{r,p}) \geq \Opt(P_{r,p}^{\mt{new}})$, where $P_{r,p}^{\mt{new}}$ is defined from $P_{r,p}$ by defining $dep(r)$ to be $dep(\mv{C}_f)$.  

Let $\mv{X}^{\mt{old}} = \{X_1^{\mt{old}}, \ldots X_r^{\mt{old}} \}$ be an optimal solution of $P_{r,p}$.  Now we construct a \emph{feasible} solution $\mv{X}^{\mt{new}} = \{X_1^{\mt{new}}, \ldots, X_r^{\mt{new}}\}$ for the program $P_{r,p}^{\mt{new}}$, such that $f_{r,p}^{\mt{new}}(\mv{X}^{\mt{new}})=f_{r,p}(\mv{X}^{\mt{old}})=\Opt(P_{r,p})$.  For all $i \neq r$, we define $X_i^{\mt{new}} = X_i^{\mt{old}}$. For $i=r$, we define $X^{\text{new}}_r(\mv{a}_{dep^{\mt{new}}(r)}) = X^{\text{old}}_r(x_{dep(r)})$.  In other words, $X^{\mt{new}}_r$ ignores the new variables added to $dep^{\mt{new}}(r)$.  Therefore,
\begin{align*}
f_{r,p}^{\mt{new}}(\mv{X}^{\mt{new}}) 
	&= \sum_{a_1 \in \Sigma} X^{\mt{new}}_1(\mv{a}_{dep(1)}) \cdots  \sum_{a_r \in \Sigma} X^{\mt{new}}_r(\mv{a}_{dep^{\mt{new}}(r)})\cdot \prod_{j=1}^{p} Y_j(\mv{a}_{dep(\mv{C}_j)}) \\
\ignore{	&= \sum_{a_1 \in \Sigma} X^{\mt{new}}_1(\mv{a}_{dep(1)}) \cdots  \sum_{a_r \in \Sigma} X^{\mt{new}}_r(\mv{a}_{dep(\mv{C}_f)})\cdot \prod_{j=1}^{p} Y_j(\mv{a}_{dep(\mv{C}_j)}) \quad (\because dep^{\mt{new}} = dep(\mv{C}_j)) \\}
	&= \sum_{a_1 \in \Sigma} X^{\mt{old}}_1(\mv{a}_{dep(1)}) \cdots \sum_{a_r \in \Sigma} X^{\mt{old}}_r(\mv{a}_{dep(r)})\cdot \prod_{j=1}^{p} Y_j(\mv{a}_{dep(\mv{C}_j)}) \quad \text{(by the definition of $\mv{X}^{\mt{new}}$)}\\
	&= f_{r,p}(X^{\mt{old}}) \quad \quad \text{(by the definition of $f_{r,p}$)}.
\end{align*}

For the program $P_{r,p}^{\mt{new}}$, when $i \neq r$, $\mv{X}^{\mt{new}}$ satisfies the constraints in \eqref{constraint1} (since the functions $X^{\mt{old}}_{i}$ and $X^{\mt{new}}_{i}$ are the same).  Similarly, for $j \neq f$, constraints in \eqref{constraint2} of the program $P_{r,p}^{\mt{new}}$ are valid.  When $i=r$ in \eqref{constraint1}, for each $\mv{a}_{dep^{\mt{new}}(r) \setminus \{r\}}$, we get 
\begin{align*}
\sum_{a_{r}} X_r^{\mt{new}}(\mv{a}_{dep^{\mt{new}}(r)})
& = \sum_{\mv{a}_{r}} X^{\mt{old}}_{r}(\mv{a}_{dep(r)})\leq 1. 
\end{align*}

When $j=f$ in \eqref{constraint2}, for all $\mv{a}_{dep(\mv{C}_j) \setminus \mv{C}_j}$, since $n_{f,s_f} = r$ we get,
\begin{align*}
&\sum\limits_{a_{n_{f,1}} \in \Sigma} X^{\mt{new}}_{n_{f,1}}(\mv{a}_{dep(n_{f,1})}) \cdots  \sum\limits_{a_{r} \in \Sigma} X^{\mt{new}}_{{r}} (\mv{a}_{dep^{\mt{new}}({r})}) \cdot Y_f(\mv{a}_{dep(\mv{C}_f)}) \\
	& = \sum\limits_{a_{n_{f,1}} \in \Sigma} X^{\mt{old}}_{n_{f,1}}(\mv{a}_{dep(n_{f,1})}) \cdots   \sum\limits_{a_{r} \in \Sigma} X^{\mt{old}}_{{r}} (\mv{a}_{dep({r})}) \cdot Y_f(\mv{a}_{dep(\mv{C}_f)}) 
	 \quad \text{(from definition of $\mv{X}^{\mt{new}}$)}\\
	 &\geq 
	1- b_f \gamma \qquad \text{(using \eqref{constraint2}).}
\end{align*}
This implies $X^{\mt{new}}$ is a feasible solution for $P_{r,p}^{\mt{new}}$ and hence $\Opt(P_{r,p}) \geq \Opt(P_{r,p}^{\mt{new}})$.  
\end{proof}

The next lemma takes a program $P_{r,p}$ as input and outputs a new program (with a smaller optimal value) that satisfies $r \notin dep(\mv{C}_h)$ (for some given $\mv{C}_h$ such that $r \notin \mv{C}_h$). 

\begin{lemma}[Y-R Reduction] \label{lemma:Y-Rreduction}
Let $P_{r,p}(\Sigma,\gamma,\mcl{C},\mathbf{b},\mathbf{dep},\mathbf{Y})$ be a given program, and there exists $h \in [p]$ such that $r \notin \mv{C}_{h}$ and $r \in dep(\mv{C}_h)$.  Then, there exists a program $P^{\mt{new}}_{r,p}\paren{\Sigma,\gamma,\mcl{C},\mathbf{b}^{\mt{new}},\mathbf{dep}^{\mt{new}}\linebreak[0],\mathbf{Y}^{\mt{new}}}$ such that 
\begin{align*}
	\Opt(P_{r,p}) \geq \Opt(P^{\mt{new}}_{r,p}),
\end{align*}
where
\begin{enumerate}
\item \makebox[5cm][t]{$b^{\mt{new}}_h = |\Sigma| \cdot b_h$}
\item \makebox[5cm][t]{$dep^{\mt{new}}(\mv{C}_h) = dep(\mv{C}_h) \setminus \{ r \}$} 
\item \makebox[5cm][t]{$b^{\mt{new}}_j = b_j$} $\forall j \in [p] \setminus \{ h \}$ 
\item \makebox[5cm][t]{$dep^{\mt{new}}(C_j) = dep(C_j)$} $\forall j \in [p] \setminus \{h\}$
\item \makebox[5cm][t]{$dep^{\mt{new}}(i) = dep(i)$ } $\forall i \in [r] $
\item \makebox[5cm][t]{$Y^{\mt{new}}_j (\mv{a}_{dep(\mv{C}_j)}) = Y_j (\mv{a}_{dep(\mv{C}_j)})$} $\forall j \in [p] \setminus \{h\}, \forall \mv{a}_{dep(\mv{C}_j)}$.
\end{enumerate}
\end{lemma}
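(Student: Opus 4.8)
The plan is the same feasible-solution transfer used for Lemma~\ref{lemma:dependent-set-reduction}: fix an optimal solution $\mv{X}^{\mt{old}}$ of $P_{r,p}$ and keep it unchanged, $\mv{X}^{\mt{new}}:=\mv{X}^{\mt{old}}$, as a candidate solution of $P^{\mt{new}}_{r,p}$ (legitimate since $dep^{\mt{new}}(i)=dep(i)$ for all $i\in[r]$). The only piece of $P^{\mt{new}}_{r,p}$ left to specify is $Y^{\mt{new}}_h$; I would take
\[
Y^{\mt{new}}_h(\mv{a}_{dep^{\mt{new}}(\mv{C}_h)}) \;:=\; \min_{a_r\in\Sigma} Y_h(\mv{a}_{dep(\mv{C}_h)}),
\]
which is well defined on $\Sigma^{|dep(\mv{C}_h)\setminus\{r\}|}$ and takes values in $[0,1]$. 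The structural fact that drives the whole argument is that, since $r\notin\mv{C}_h$ and every index of the program lies in $[r]$, we have $\mv{C}_h\subseteq[r-1]$ and hence $r\notin dep(n_{h,i})$ for every $i\in[s_h]$; so the functions $X_{n_{h,1}},\dots,X_{n_{h,s_h}}$ do not depend on the coordinate $a_r$ --- only $Y_h$ does --- and also $dep^{\mt{new}}(\mv{C}_h)=dep(\mv{C}_h)\setminus\{r\}\supseteq dep(n_{h,s_h})$, so $P^{\mt{new}}_{r,p}$ is a legal instance of Definition~\ref{programP}.

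Next I would check feasibility. Constraints~\eqref{constraint1}, and constraints~\eqref{constraint2} for $j\ne h$, are literally unchanged, so they hold for $\mv{X}^{\mt{new}}$. For the $j=h$ constraint, write $\Lambda[g]:=\sum_{a_{n_{h,1}}} X^{\mt{old}}_{n_{h,1}}\sum_{a_{n_{h,2}}} X^{\mt{old}}_{n_{h,2}}\cdots\sum_{a_{n_{h,s_h}}} X^{\mt{old}}_{n_{h,s_h}}\,g$ for the iterated-sum operator over $\mv{C}_h$; it is linear and nonnegative, it commutes with $\sum_{a_r}$ (because the $X$'s are $a_r$-free), and $\Lambda[1]\le 1$ by applying~\eqref{constraint1} from the innermost sum outward. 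Using the elementary bound $\min_{a_r}Y_h\ge\sum_{a_r}Y_h-(|\Sigma|-1)$ (a sum of $|\Sigma|$ numbers in $[0,1]$, one of which is the minimum), I get, for every assignment to $dep^{\mt{new}}(\mv{C}_h)\setminus\mv{C}_h$,
\begin{align*}
\Lambda[Y^{\mt{new}}_h] &\;\ge\; \Lambda\!\Big[\sum_{a_r}Y_h\Big]-(|\Sigma|-1)\Lambda[1] \;=\; \sum_{a_r}\Lambda[Y_h]-(|\Sigma|-1)\Lambda[1] \\
&\;\ge\; |\Sigma|(1-b_h\gamma)-(|\Sigma|-1) \;=\; 1-|\Sigma|b_h\gamma,
\end{align*}
where the last inequality uses the old constraint~\eqref{constraint2} for $j=h$ (valid for each value of $a_r$) together with $\Lambda[1]\le 1$. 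Since $b^{\mt{new}}_h=|\Sigma| b_h$, this is exactly the new~\eqref{constraint2} for $j=h$.

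Finally I would compare objectives: $Y^{\mt{new}}_h\le Y_h$ pointwise, uniformly in $a_r$, and all $X_i$'s and $Y_j$'s are nonnegative, so substituting $Y^{\mt{new}}_h$ for $Y_h$ termwise inside every nested sum of $f_{r,p}$ only decreases it; hence $f^{\mt{new}}_{r,p}(\mv{X}^{\mt{new}})\le f_{r,p}(\mv{X}^{\mt{old}})=\Opt(P_{r,p})$, giving $\Opt(P_{r,p})\ge\Opt(P^{\mt{new}}_{r,p})$. The only genuinely delicate point is the middle display --- the commutation of $\Lambda$ with $\sum_{a_r}$ and the bound $\Lambda[1]\le1$, both resting entirely on $r\notin dep(n_{h,i})$, and keeping the loss term $(|\Sigma|-1)$ on the correct side. (One caveat worth a sentence: $b^{\mt{new}}_h=|\Sigma| b_h$ may exceed $1$, so the components of $\mathbf{b}$ should be read as arbitrary nonnegatives --- or $b^{\mt{new}}_h$ capped at $1$, which only strengthens~\eqref{constraint2}; this is harmless since the reduction is ultimately applied with $\gamma$ small enough that $b_j\gamma<1$ throughout.)
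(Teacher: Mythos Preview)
Your proof is correct and follows essentially the same approach as the paper: define $Y^{\mt{new}}_h$ as the pointwise minimum of $Y_h$ over $a_r$, keep $\mv{X}$ unchanged, and verify the $j=h$ constraint via the inequality $\min_{a_r}Y_h\ge\sum_{a_r}Y_h-(|\Sigma|-1)$ combined with $\Lambda[1]\le1$. Your exposition is in fact a bit more careful than the paper's --- you make explicit the structural fact $r\notin dep(n_{h,i})$ (needed for the $\sum_{a_r}$ commutation), you verify $dep^{\mt{new}}(\mv{C}_h)\supseteq dep(n_{h,s_h})$, and you flag that $b^{\mt{new}}_h$ may leave $[0,1]$, all of which the paper glosses over.
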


\begin{proof}
Let $\mv{X}^{\prime}$ be an optimal solution of $P_{r,p}$.  Note that, since $dep(\mv{C}^{\mt{new}}_h) = dep(\mv{C}_h) \setminus \{ r \}$, our goal is to find a function $Y^{\mt{new}}_h : \Sigma^{|dep(\mv{C}^{\mt{new}}_h)|} \to [0,1]$, whose domain size is smaller than the domain size of $Y_h$ (as $Y_h^{\mt{new}}$ is independent of the value of $a_r$), that satisfies the required constraints.  \paragraph{} 
For a given set of functions $\mv{X}$, a subset $\mv{S} \subseteq [r]$,  and for a given assignment $a_{\mv{S}}$ to $\mv{S}$, let $f_{r,p}( \mv{X} ) |_{a_{\mv{S}}}$ represent the sum of all terms in $f_{r,p}(\mv{X})$ that are consistent with the assignment $a_{\mv{S}}$.  Note that 
\begin{align}
f_{r,p}(\mv{X}^{\prime}) = \sum\limits_{\mv{a}_{dep(\mv{C}_h)}}  f_{r,p}(\mv{X}^{\prime})|_{\mv{a}_{dep(\mv{C}_h)}}.  \label{summation-equality}
\end{align}

For each $\mv{a}_{dep(\mv{C}_h) \setminus \{ r \}}$, let
\begin{enumerate}
\item $z_h(\mv{a}_{dep(\mv{C}_h) \setminus \{r\}}) = \argmin_{a_r} Y_h({\mv{a}_{dep(\mv{C}_h)\setminus \{ r \}},a_r})$,
\item ${Y}^{\mt{new}}_h(\mv{a}_{dep^{\mt{new}}(\mv{C}_h)}) = Y^{\mt{new}}_h(\mv{a}_{dep(\mv{C}_h) \setminus \{r\}}) = Y_h(\mv{a}_{dep(\mv{C}_h)\setminus \{r\}},z_h(\mv{a}_{dep(\mv{C}_h) \setminus \{r\}}))$.
\end{enumerate}

Based on the above definition of $Y_h^{\mt{new}}$, we know that $f^{\mt{new}}_r(\mv{X}^{\prime}) \leq f_{r,p}(\mv{X}^{\prime})$.  In the remainder of the proof, we show that $\mv{X}^{\prime}$ is also a feasible solution for $P^{\mt{new}}_{r,p}$.  The first set of constraints of $P^{\mt{new}}_{r,p}$ are valid (as we have not modified $\mv{X}$).  Similarly, the second set of constraints is valid for all $j \neq h$ (as we have not changed any parameters).  Now we prove the constraints in \eqref{constraint2}, for $j=h$.  For all assignments $a_{ dep^{\mt{new}}(\mv{C}_h) \setminus \mv{C}_{h}}$,
\begin{align*}
\sum_{a_{n_{h,1}}}& X^{\prime}_{n_{h,1}} (\mv{a}_{dep(n_{h,1})}) \cdots \sum_{a_{n_{h,s_h}}} X^{\prime}_{n_{h,s_h}} (\mv{a}_{dep(n_{h,s_h})})  
\cdot
{Y}^{\mt{new}}_h (\mv{a}_{dep^{\mt{new}}({C}_h)})  \\
	&= \sum_{a_{n_{h,1}}} X^{\prime}_{n_{h,1}} (\mv{a}_{dep(n_{h,1})}) \cdots \sum_{a_{h_{s_j}}} X^{\prime}_{n_{h,s_h}} (\mv{a}_{dep(n_{h,s_h})})  
\cdot
Y_h (\mv{a}_{dep(\mv{C}_h) \setminus \{ r \}},a_r = z_h(\mv{a}_{dep(\mv{C}_h)\setminus \{r\}})) \\ 
& \hspace{25em} \text{(by definition of $Y_h^{\mt{new}}$)} \\
&= \left[ \sum_{a_{n_{h,1}}} X^{\prime}_{n_{h,1}} (\mv{a}_{dep(n_{h,1})}) \cdots \sum_{a_{h_{s_j}}} X^{\prime}_{n_{h,s_h}} (\mv{a}_{dep(n_{h,s_h})})  
\cdot
\sum_{a_r} Y_h (\mv{a}_{dep(\mv{C}_h \setminus \{ r \})},a_r) \right] \\ 
& \quad \, - \left[ \sum_{a_{n_{h,1}}} X^{\prime}_{n_{h,1}} (\mv{a}_{dep(n_{h,1})}) \cdots \sum_{a_{h_{s_j}}} X^{\prime}_{n_{h,s_h}} (\mv{a}_{dep(n_{h,s_h})})  
\cdot
\sum_{a_r:a_r \neq z_h(\mv{a}_{dep(\mv{C}_h) \setminus \{ r \}})} Y_h (\mv{a}_{dep(\mv{C}_h) \setminus \{ r \}},a_r) \right] \\
&\geq \left[ \sum_{a_r} \sum_{a_{n_{h,1}}} X^{\prime}_{n_{h,1}} (\mv{a}_{dep(n_{h,1})}) \cdots \sum_{a_{n_{h,s_h}}} X^{\prime}_{n_{h,s_h}} (\mv{a}_{dep(n_{h,s_h})})  
\cdot
Y_h (\mv{a}_{dep(\mv{C}_h) \setminus \{ r \}},a_r) \right] \\
&\quad \, - \left[ \sum_{a_{n_{h,1}}} X^{\prime}_{n_{h,1}} (\mv{a}_{dep(n_{h,1})}) \cdots \sum_{a_{h_{s_j}}} X^{\prime}_{n_{h,s_h}} (\mv{a}_{dep(n_{h,s_h})})  
\cdot
\sum_{a_r:a_r \neq z(\mv{a}_{dep(\mv{C}_h) \setminus \{ r \}})} 1 \quad \right] (\because Y_h(.) \leq 1) \\
	&\geq \left[ \sum_{a_r \in \Sigma} (1-b_h \gamma) \right] -  \left[(|\Sigma - 1|) \cdot \sum_{a_{n_{h,1}}} X^{\prime}_{n_{h,1}} (\mv{a}_{dep(n_{h,1})}) \cdots \sum_{a_{h_{s_j}}} X^{\prime}_{n_{h,s_h}} (\mv{a}_{dep(n_{h,s_h})}) \right] \\
 & \hspace{25em} \text{(by constraint \eqref{constraint2} of $P_{r,p}$)} \\
&\geq |\Sigma| (1 - b_h  \gamma) - (|\Sigma|-1) 1 \quad \text{(by constraint \eqref{constraint1} of $P_{r,p}$)} \\
&= 1 - |\Sigma|  b_h \gamma \\
&=1 - b_h^{\mt{new}} \gamma.
\end{align*}
\end{proof}

After multiple passes through the above lemma, we get a program $P_{r,p}$ that satisfies $r \notin dep(\mv{C}_j)$, for all $\mv{C}_j$ such that $r \notin \mv{C_j}$.  The next lemma takes in such a program, and outputs a program with a reduced value of $r$.

\begin{lemma}[R-Elimination] \label{lemma:R-Elimination}
Let $P_{r,p}(\Sigma,\gamma,\mcl{C},\mathbf{b},\mathbf{dep},\mathbf{Y})$ be a given program such that the element $r \in \mv{C}_f$.  Suppose $dep(r)=dep(\mv{C}_f)$, and for all $j \in [p]\setminus \{f\}$, $r \notin dep(\mv{C}_j)$.  Then there exists a program $P^{\mt{new}}_{r-1,p}(\Sigma,\gamma,\mathcal{C}^{\mt{new}},\mathbf{b},\mathbf{dep}^{\mt{new}},\mathbf{Y}^{\mt{new}})$ such that 
\begin{align*}
\Opt(P_{r,p}) \geq \Opt(P^{\mt{new}}_{r-1,p}) 
\end{align*}
where $\mathbf{Y}^{\mt{new}}$ differs from $\mathbf{Y}$ only on the function $Y_f$, $\mathcal{C}^{\mt{new}}$ differs from $\mathcal{C}$ only on the partition $\mv{C}_{f}$ where $\mv{C}_{f}^{\mt{new}} = \mv{C}_{f} \setminus \{ r \}$, and $\mathbf{dep}^{\mt{new}} = (dep(1),dep(2),\ldots,dep(r-1),dep(\mv{C}_1),\ldots, dep(\mv{C}_{f -1}),$ $dep^{\mt{new}}(\mv{C}^{\mt{new}}_f),$ $dep(\mv{C}_{f+1}),$ $\dots,$ $dep(\mv{C}_p))$ where $dep^{\mt{new}}(\mv{C}^{\mt{new}}_f) = dep(r)\setminus \{r\}$. 
\end{lemma}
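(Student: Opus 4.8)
The plan is to \emph{fold} the last variable-function $X_r$ into $Y_f$, thereby eliminating the index $r$ while only decreasing the optimum. The role of the two hypotheses is precisely this: since $r$ is the largest index and $r\in\mv{C}_f$, the summation variable $a_r$ enters the objective $f_{r,p}$ only through the factors $X_r(\mv{a}_{dep(r)})$ and $Y_f(\mv{a}_{dep(\mv{C}_f)})$. Indeed $dep(i)\subseteq[i]$ forbids $a_r$ from occurring in $X_i$ for $i\leq r-1$; the hypothesis $r\notin dep(\mv{C}_j)$ for $j\neq f$ forbids it from $Y_j$; and the hypothesis $dep(r)=dep(\mv{C}_f)$ guarantees that $X_r$ and $Y_f$ depend on exactly the same block of coordinates, namely $(\mv{a}_{dep(r)\setminus\{r\}},a_r)$.

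Concretely, I would fix an optimal solution $\mv{X}^\ast=(X_1^\ast,\dots,X_r^\ast)$ of $P_{r,p}$ and define $P^{\mt{new}}_{r-1,p}$ by the parameters listed in the statement together with
\[
Y_f^{\mt{new}}(\mv{a}_{dep^{\mt{new}}(\mv{C}_f^{\mt{new}})}) \defeq \sum_{a_r\in\Sigma} X_r^\ast(\mv{a}_{dep(r)})\cdot Y_f(\mv{a}_{dep(\mv{C}_f)}).
\]
This is well defined because $dep(r)=dep(\mv{C}_f)$ and $dep^{\mt{new}}(\mv{C}_f^{\mt{new}})=dep(r)\setminus\{r\}$, so both factors on the right depend only on $\mv{a}_{dep(r)\setminus\{r\}}$ and $a_r$; and $0\leq Y_f^{\mt{new}}\leq\sum_{a_r}X_r^\ast(\mv{a}_{dep(r)})\leq 1$ using $Y_f\leq 1$ and \eqref{constraint1}, so $Y_f^{\mt{new}}$ is a legal objective function. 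I would also check that $\mv{dep}^{\mt{new}}$ still meets the requirements of Definition \ref{programP}: $dep^{\mt{new}}(\mv{C}_f^{\mt{new}})=dep(r)\setminus\{r\}\subseteq[r-1]$ since $dep(r)\subseteq[r]$, and $dep^{\mt{new}}(\mv{C}_f^{\mt{new}})\supseteq dep(n_{f,s_f-1})$ since $dep(r)=dep(n_{f,s_f})\supseteq\{r\}\cup dep(n_{f,s_f-1})$ and $r\notin dep(n_{f,s_f-1})$.

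I would then take $\mv{X}^{\mt{new}}=(X_1^\ast,\dots,X_{r-1}^\ast)$ as a candidate for $P^{\mt{new}}_{r-1,p}$. Since none of $X_1^\ast,\dots,X_{r-1}^\ast$ and no $Y_j$ with $j\neq f$ depends on $a_r$, the innermost sum $\sum_{a_r}$ in $f_{r,p}(\mv{X}^\ast)$ can be commuted inward to act only on $X_r^\ast\cdot Y_f$, which is exactly the defining sum of $Y_f^{\mt{new}}$; hence $f^{\mt{new}}_{r-1,p}(\mv{X}^{\mt{new}})=f_{r,p}(\mv{X}^\ast)=\Opt(P_{r,p})$. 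For feasibility, constraints \eqref{constraint1} for $i\leq r-1$ are untouched, and constraints \eqref{constraint2} for $j\neq f$ are untouched (their indices lie in $[r-1]$ and none of the relevant functions involves $a_r$). For $j=f$, observe that $dep^{\mt{new}}(\mv{C}_f^{\mt{new}})\setminus\mv{C}_f^{\mt{new}}=dep(\mv{C}_f)\setminus\mv{C}_f$ because $r\in\mv{C}_f$ and $dep(r)=dep(\mv{C}_f)$; so for a fixed assignment $\mv{a}_{dep(\mv{C}_f)\setminus\mv{C}_f}$, substituting the definition of $Y_f^{\mt{new}}$ turns the left-hand side of \eqref{constraint2} for $P^{\mt{new}}_{r-1,p}$ into the left-hand side of \eqref{constraint2} for $P_{r,p}$ at $j=f$, with the $a_{n_{f,s_f}}=a_r$ summation reinstated as the innermost one weighted by $X^\ast_{n_{f,s_f}}=X_r^\ast$; this is $\geq 1-b_f\gamma=1-b^{\mt{new}}_f\gamma$. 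Thus $\mv{X}^{\mt{new}}$ is feasible and $\Opt(P^{\mt{new}}_{r-1,p})\leq f^{\mt{new}}_{r-1,p}(\mv{X}^{\mt{new}})=\Opt(P_{r,p})$.

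I do not expect a conceptual obstacle: the difficulty is confined to careful bookkeeping of which functions depend on $a_r$ and of the set identities involving $dep(r)$, $dep(\mv{C}_f)$, and $\mv{C}_f$. The one point deserving a little extra care is the degenerate case $\mv{C}_f=\{r\}$, where $\mv{C}_f^{\mt{new}}$ is empty and constraint \eqref{constraint2} for $j=f$ degenerates to the pointwise bound $Y_f^{\mt{new}}(\cdot)\geq 1-b_f\gamma$ --- still immediate from \eqref{constraint2} of $P_{r,p}$ --- while the objective identity and the optimality comparison go through verbatim.
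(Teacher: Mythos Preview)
Your proposal is correct and follows essentially the same approach as the paper: fix an optimal solution of $P_{r,p}$, define $Y_f^{\mt{new}}(\mv{a}_{dep(r)\setminus\{r\}})=\sum_{a_r}X_r^\ast(\mv{a}_{dep(r)})\,Y_f(\mv{a}_{dep(\mv{C}_f)})$, and check that $(X_1^\ast,\dots,X_{r-1}^\ast)$ is feasible for $P^{\mt{new}}_{r-1,p}$ with the same objective value. If anything, you are slightly more careful than the paper in verifying that $\mv{dep}^{\mt{new}}$ satisfies the structural requirements of Definition~\ref{programP} and in noting the degenerate case $\mv{C}_f=\{r\}$.
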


\begin{proof}
Let $\mv{X}^{\mt{old}}$ be an optimal solution of $P_{r,p}$.  For a given set of functions $\mv{X}$, a subset $\mv{S} \subseteq [r]$,  and for a given assignment $a_{\mv{S}}$ to $\mv{S}$, let $f_{r,p}( \mv{X} ) |_{a_{\mv{S}}}$ represent the sum of all terms in $f_{r,p}(\mv{X})$ that are consistent with the assignment $a_{\mv{S}}$.  Then, for all assignments $\mv{a}_{dep(r) \setminus \{ r \}}$
\begin{align}
f_{r,p}(\mv{X}^{\mt{old}})|_{\mv{a}_{dep(r) \setminus \{ r \}}} = L_{\mv{a}_{dep(r) \setminus \{ r \}}} \cdot \sum_{a_r} X^{\mt{old}}_r (\mv{a}_{dep(r)})\cdot Y_f({\mv{a}_{dep(r)}}) \label{eqn:restrictedObjective}
\end{align}
We define $Y_{f}^{\mt{new}}(\mv{a}_{dep^{\mt{new}}(\mv{C}^{\mt{new}}_f)}) = Y^{\mt{new}}_{f}(\mv{a}_{dep(r)\setminus \{ r \}}) = \sum_{a_r} X^{\mt{old}}_r (\mv{a}_{dep(r)}) \cdot Y_f({\mv{a}_{dep(r)}})$. Observe that $Y^{\mt{new}}_f:\Sigma^{|dep(r)\setminus \{r\}|} \to [0,1]$ because of constraint \eqref{constraint1} and since $Y_f$ itself falls in the range $[0,1]$.
Now, the new program $P^{\mt{new}}_{r-1,p}$ is completely specified. 

Observe that:
\ignore{ Note that $L_{\mv{a}_{dep(r)\setminus \{r\}}}$ does not depend on the value of ${a}_r$.  Therefore, when $\mv{a}_{dep(r) \setminus \{ r \} }$ is fixed, with respect to each value of ${a}_r \in \Sigma$, \eqref{eqn:restrictedObjective} consists of the terms $L_{\mv{a}_{dep(r) \setminus \{ r \}}} \cdot X^{\mt{old}}_r (\mv{a}_{dep(r)}) \cdot Y_f({\mv{a}_{dep(r)}})$.  Hence, }
\begin{align*}
f_{r,p}(\mv{X}^{\mt{old}})|_{\mv{a}_{dep(r) \setminus \{ r \}}} 
			= L_{\mv{a}_{dep(r) \setminus \{ r \}}} \cdot Y^{\mt{new}}_{f}(\mv{a}_{dep(r)\setminus \{ r \}})
			&= f_{r-1,p}^{\mt{new}}(\mv{X}_{r-1}^{\mt{old}})|_{\mv{a}_{dep(r) \setminus \{ r \}}}
\end{align*}
where $\mv{X}^{\mt{old}}_{r-1} = \{ X^{\mt{old}}_1,\ldots,X^{\mt{old}}_{r-1} \}$.  This implies $$f_{r,p}(\mv{X}^{\mt{old}}) = \sum_{\mv{a}_{dep(r)\setminus \{ r \}}} f_{r,p}(\mv{X}^{\mt{old}})|_{\mv{a}_{dep(r) \setminus \{ r \}}} =\sum_{\mv{a}_{dep(r)\setminus \{ r \}}} f_{r-1,p}^{\mt{new}}(\mv{X}_{r-1}^{\mt{old}})|_{\mv{a}_{dep(r) \setminus \{ r \}}} = f_{r-1,p}^{\mt{new}}(\mv{X}^{\mt{old}}_{r-1}).$$

We now show that the functions $\mv{X}^{\mt{old}}_{r-1}$ form a feasible solution for $P^{\mt{new}}_{r-1,p}$. 
The first set of constraints \eqref{constraint1} holds for $P^{\mt{new}}_{r-1,p}$ because $\mv{X}^{\mt{old}}$ is feasible for $P_{r,p}$.  Also for all $j \neq f$, the second set of constraints \eqref{constraint2} holds for the same reason.  For $j=f$: 
\begin{align*}
		&\sum\limits_{a_{n_{f, 1}} \in \Sigma} X^{\mt{old}}_{n_{f,1}}(\mv{a}_{dep(n_{f,1})}) \cdots \sum\limits_{a_{n_{f, s_f-1}} \in \Sigma} X^{\mt{old}}_{n_{f, s_f-1}}(\mv{a}_{dep(n_{f,s_f-1})}) \cdot Y_f^{\mt{new}}({\mv{a}_{dep(C^{\mt{new}}_f)}}) \\
	&= 		\sum\limits_{a_{n_{f, 1}} \in \Sigma} X^{\mt{old}}_{n_{f,1}}(\mv{a}_{dep(n_{f,1})}) \cdots \sum\limits_{a_{n_{f, s_f-1}} \in \Sigma} X^{\mt{old}}_{n_{f, s_f-1}}(\mv{a}_{dep(n_{f,s_f-1})})  \sum_{a_r \in \Sigma} X^{\mt{old}}_r(\mv{a}_{dep(r)}) \cdot Y_f(\mv{a}_{dep(\mv{C}_f)}) \\ & \hspace{25em} \text{(by definition)} \\
	&\geq 1-b_f \gamma
\end{align*}
This completes the proof that $\Opt(P_{r,p}) \geq \Opt(P^{\mt{new}}_{r-1,p})$.
\end{proof}

\begin{lemma} \label{programPHelperLemma}
For any integers $r,p \geq 1$ and given a program $P_{r,p}(\Sigma,\gamma,\mcl{C},\mathbf{b},\mathbf{dep},\mathbf{Y})$, there exists a 
program $P^{\mt{new}}_{r-1,p}(\Sigma,$ $\gamma,$ ${\mcl{C}}^{\mt{new}},$ ${\mathbf{b}^{\mt{new}}},$ $\mathbf{dep}^{\mt{new}},$ ${\mathbf{Y}}^{\mt{new}})$ such that 
\begin{align*}
\Opt(P_{r,p}) \geq \Opt(P^{\mt{new}}_{r-1,p})
\end{align*}
where
\begin{align*}
&{b}^{\mt{new}}_j  = b_j \quad 
 	& &,\, \forall j \in [p] \colon r \notin dep(\mv{C}_j), \\
&{b}^{\mt{new}}_j  = |\Sigma| \cdot b_j \quad 
	& &, \, \forall j \in [p] \colon r \in dep(\mv{C}_j).
\end{align*}
\end{lemma}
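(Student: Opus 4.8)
The plan is to produce $P^{\mt{new}}_{r-1,p}$ by composing the three reduction lemmas just established, in the order announced before the statement: one pass through Lemma~\ref{lemma:dependent-set-reduction}, then one pass through Lemma~\ref{lemma:Y-Rreduction} for each relevant index, and finally one pass through Lemma~\ref{lemma:R-Elimination}. Throughout, let $f$ be the unique index with $r \in \mv{C}_f$; since $r$ is the largest element of $[r]$ and $\mathcal{C}$ partitions $[r]$, we have $r = n_{f,s_f}$, so $r \in dep(n_{f,s_f}) \subseteq dep(\mv{C}_f)$ automatically. In particular the clause ``$r \in dep(\mv{C}_j)$'' in the conclusion always includes $j = f$.

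First, apply Lemma~\ref{lemma:dependent-set-reduction} to $P_{r,p}$ to obtain $P^{(1)}_{r,p}$ with $\Opt(P_{r,p}) \ge \Opt(P^{(1)}_{r,p})$ and $dep(r) = dep(\mv{C}_f)$; this changes neither $\mathbf{b}$ nor $\mathcal{C}$ nor any $dep(\mv{C}_j)$. Next put $H = \{\, h \in [p] : h \ne f, r \in dep(\mv{C}_h)\,\}$; for each $h \in H$ we have $r \notin \mv{C}_h$ (as $\mathcal{C}$ is a partition) and $r \in dep(\mv{C}_h)$, so Lemma~\ref{lemma:Y-Rreduction} applies. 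Process the elements of $H$ one at a time --- the order is immaterial, since a single application of Lemma~\ref{lemma:Y-Rreduction} changes only $b_h$, $dep(\mv{C}_h)$, and $Y_h$ for the index $h$ being treated, and it deletes $r$ from $dep(\mv{C}_h)$, so the hypothesis never re-triggers for that $h$. After $|H|$ steps we reach $P^{(2)}_{r,p}$ with $\Opt(P^{(1)}_{r,p}) \ge \Opt(P^{(2)}_{r,p})$ in which: the identity $dep(r) = dep(\mv{C}_f)$ still holds (Lemma~\ref{lemma:Y-Rreduction} leaves every $dep(i)$, $i \in [r]$, and every $dep(\mv{C}_j)$ with $j \ne h$ untouched, and we only ever chose $h \ne f$); $r \notin dep(\mv{C}_j)$ for every $j \ne f$; and $b^{(2)}_h = |\Sigma|\, b_h$ for $h \in H$ while $b^{(2)}_j = b_j$ for all remaining indices (including $j = f$). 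Thus $P^{(2)}_{r,p}$ satisfies the hypotheses of Lemma~\ref{lemma:R-Elimination} with the same $f$, and one last pass gives $P^{(3)}_{r-1,p}$ with $\Opt(P^{(2)}_{r,p}) \ge \Opt(P^{(3)}_{r-1,p})$ and the same $b$-vector, $b^{(3)} = b^{(2)}$.

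Chaining these inequalities yields $\Opt(P_{r,p}) \ge \Opt(P^{(3)}_{r-1,p})$, and $P^{(3)}_{r-1,p}$ already has the required $b$-values at every index other than $f$, where the composition leaves $b^{(3)}_f = b_f$ rather than $|\Sigma|\, b_f$. To bridge this, use that $\Opt$ is non-increasing in each coordinate of $\mathbf{b}$: raising $b_f$ to $|\Sigma|\, b_f$ only lowers the threshold $1 - b_f\gamma$ in constraint~\eqref{constraint2} for $j = f$, enlarging the feasible region and hence not increasing the optimum. So, letting $P^{\mt{new}}_{r-1,p}$ be $P^{(3)}_{r-1,p}$ with $b_f$ replaced by $|\Sigma|\, b_f$, we get $\Opt(P^{(3)}_{r-1,p}) \ge \Opt(P^{\mt{new}}_{r-1,p})$ and therefore $\Opt(P_{r,p}) \ge \Opt(P^{\mt{new}}_{r-1,p})$, where $P^{\mt{new}}_{r-1,p}$ satisfies $b^{\mt{new}}_j = b_j$ when $r \notin dep(\mv{C}_j)$ and $b^{\mt{new}}_j = |\Sigma|\, b_j$ when $r \in dep(\mv{C}_j)$, as claimed.

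Given the three reduction lemmas, this is essentially bookkeeping and I anticipate no genuine obstacle. The two points needing care are maintaining the chain of prerequisites --- chiefly that the equality $dep(r) = dep(\mv{C}_f)$ produced by Lemma~\ref{lemma:dependent-set-reduction} is preserved through all the applications of Lemma~\ref{lemma:Y-Rreduction} --- and the (easy) monotonicity observation that lets us harmlessly over-report the update at index $f$ as $|\Sigma|\, b_f$.
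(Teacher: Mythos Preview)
Your proposal is correct and follows the same three-step composition as the paper's proof: Dependent Set Reduction, then repeated Y-R Reduction, then R-Elimination. Your account is in fact more careful than the paper's terse proof---you correctly observe that the chain of lemmas leaves $b_f$ unchanged while the statement demands $b^{\mt{new}}_f=|\Sigma|\,b_f$, and you close this with the straightforward monotonicity of $\Opt$ in $\mathbf{b}$; the paper's proof glosses over this point.
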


\begin{proof}
First we apply Lemma~\ref{lemma:dependent-set-reduction} (Dependent Set Reduction).  Then, we apply Lemma~\ref{lemma:Y-Rreduction} (Y-R Reduction) repeatedly, until there does not exist any $h \in [p]$ such that $r \in dep(\mv{C}_h)$ but $r \notin \mv{C}_h$.  Note that, in each step of this reduction, the respective $b_h$ increases by a factor of $|\Sigma|$.  Finally, applying Lemma \ref{lemma:R-Elimination} (R-Elimination) results in a program $P^{\mt{new}}_{r-1,p}$  on $r-1$ inputs with the desired property.
\end{proof}

\begin{lemma} \label{programPMainLemma}
For a given program $P_{r,p}(\Sigma,\gamma,\mcl{C},\mathbf{b} = \mathbf{1},\mathbf{dep},\mathbf{Y} = \mathbf{1})$, suppose we know that $\max\limits_{j} |dep(\mv{C}_j)|$ is at most $L$.  Then
\begin{align*}
\Opt(P_{r,p}) \geq (1 - {|\Sigma|}^{L} \gamma)^{p}.
\end{align*}
\end{lemma}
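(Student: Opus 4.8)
The plan is to iterate Lemma \ref{programPHelperLemma} (the combined reduction) exactly $r$ times to peel off the variables $X_r, X_{r-1}, \dots, X_1$ one at a time, arriving at a program $P_{0,p}$, and then to bound the optimum of that trivial program directly using constraint \eqref{constraint2}. The key bookkeeping issue is tracking how the vector $\mathbf{b}$ grows during the reductions. Starting from $\mathbf{b} = \mathbf{1}$, each application of Lemma \ref{programPHelperLemma} multiplies $b_j$ by $|\Sigma|$ exactly when the eliminated index $r$ lies in $dep(\mv{C}_j)$, and leaves $b_j$ unchanged otherwise. So I would argue: for each fixed $j \in [p]$, the number of reduction steps in which the eliminated index lies in $dep(\mv{C}_j)$ is at most $|dep(\mv{C}_j)| \leq L$ — because each index of $[r]$ is eliminated exactly once over the course of the $r$ steps, and only the (at most $L$) indices that actually belong to $dep(\mv{C}_j)$ can trigger a multiplication of $b_j$. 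Hence after all $r$ reductions we obtain a program $P_{0,p}(\Sigma,\gamma,\mcl{C}',\mathbf{b}',\mathbf{dep}',\mathbf{Y}')$ with $b'_j \leq |\Sigma|^{L}$ for every $j$.

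Next I would handle the base case $P_{0,p}$. When $r=0$ there are no $X$ variables, the partition $\mcl{C}'$ consists of $p$ empty sets, each $dep(\mv{C}_j)=\emptyset$, each $Y_j$ is a constant in $[0,1]$, and the objective is simply $f_{0,p} = \prod_{j=1}^p Y_j$. For each $j$, the empty-product version of constraint \eqref{constraint2} (with $s_j = 0$) reads $Y_j \geq 1 - b'_j\gamma \geq 1 - |\Sigma|^L \gamma$. Therefore $\Opt(P_{0,p}) = \prod_{j=1}^p Y_j \geq (1-|\Sigma|^L\gamma)^p$. Chaining the inequalities $\Opt(P_{r,p}) \geq \Opt(P_{r-1,p}^{\mt{new}}) \geq \cdots \geq \Opt(P_{0,p})$ from the $r$ successive applications of Lemma \ref{programPHelperLemma} yields $\Opt(P_{r,p}) \geq (1-|\Sigma|^L\gamma)^p$, as desired. (One should also check $1-|\Sigma|^L\gamma \geq 0$ is not actually needed: if it is negative the bound is vacuous since the objective is a product of nonnegative terms, but in the intended application $\gamma$ is small enough that it is positive.)

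I would formalize the induction cleanly as follows: prove by induction on $r \geq 0$ the statement ``for every program $P_{r,p}(\Sigma,\gamma,\mcl{C},\mathbf{b},\mathbf{dep},\mathbf{Y})$ with $\mathbf{Y}=\mathbf 1$ and $\max_j |dep(\mv{C}_j)| \leq L$ and $b_j \leq |\Sigma|^{L - |dep(\mv{C}_j)|}$ for all $j$, we have $\Opt(P_{r,p}) \geq (1-|\Sigma|^L\gamma)^p$.'' Wait — a subtlety: $\mathbf Y = \mathbf 1$ is not preserved by the reductions (Lemma \ref{lemma:Y-Rreduction} and Lemma \ref{lemma:R-Elimination} both modify $Y$-functions), so the right invariant to carry is just ``$Y_j \colon \Sigma^{|dep(\mv{C}_j)|}\to[0,1]$ and $b_j \leq |\Sigma|^{\,L - |dep(\mv{C}_j)|}$'', which is exactly what \eqref{constraint1} and the range of $Y_j$ guarantee is maintained. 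I expect the main obstacle to be precisely this step: verifying that the parameter accounting in Lemma \ref{programPHelperLemma} interacts correctly with the hypothesis $\max_j|dep(\mv{C}_j)| \leq L$ — specifically, that eliminating index $r$ can only decrease (or preserve) each $|dep(\mv{C}_j)|$ by exactly the amount needed to keep $b_j \cdot |\Sigma|^{|dep(\mv{C}_j)|} \leq |\Sigma|^L$ invariant, since Lemma \ref{lemma:Y-Rreduction} multiplies $b_h$ by $|\Sigma|$ exactly when it removes $r$ from $dep(\mv{C}_h)$, shrinking $|dep(\mv{C}_h)|$ by one. Everything else is routine given the three reduction lemmas.
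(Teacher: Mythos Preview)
Your proposal is correct and follows essentially the same approach as the paper: iterate Lemma \ref{programPHelperLemma} down to $P_{0,p}$, observe that each $b_j$ gets multiplied by $|\Sigma|$ at most $|dep(\mv{C}_j)|\le L$ times (since each eliminated index can lie in the current $dep(\mv{C}_j)$ only if it lay in the original one, as these sets only shrink under the reductions), and then read off $\Opt(P_{0,p})=\prod_j Y'_j(\emptyset)\ge\prod_j(1-b'_j\gamma)\ge(1-|\Sigma|^L\gamma)^p$ from constraint \eqref{constraint2}. Your proposed invariant $b_j \le |\Sigma|^{L-|dep(\mv{C}_j)|}$ is a clean way to formalize the bookkeeping that the paper leaves implicit, and your observation that $\mathbf{Y}=\mathbf{1}$ is not preserved (only $Y_j\in[0,1]$ and the constraints are) is exactly right and consistent with how the paper's argument actually proceeds.
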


\begin{proof}
We apply Lemma \ref{programPHelperLemma} recursively.  Note that in each such reduction from $P_{r,p}$ to $P_{r-1,p}$, the value of $b_j$ increases by a factor of $|\Sigma|$ only when $r \in dep(\mv{C}_j)$.  

At $r=0$, we have the program $P_{0,p}(\Sigma, \gamma, \mcl{C}', \mv{b}', \mv{dep}', \mv{Y}')$. 
For all $j \in [p]$, we know that $b'_j \leq |\Sigma|^{L}$ (since $|dep(\mv{C}_j)| \leq L$).  Therefore,
\begin{align*}
\Opt(P_{r,p}) & \geq \Opt(P_{0,p}) \\
		 & = \prod_{j=1}^{p} Y'_j(\emptyset)\\
		 & \geq \prod_{j=1}^{p} (1 - b'_j  \gamma) \quad \text{based on constraint \eqref{constraint2} of the program $P_{0,p}$} \\
		 & \geq (1 - |\Sigma|^{L} \gamma)^{p}.
\end{align*}
\end{proof}

\section{Acknowledgments}
We would like to thank Vasant Honavar who told us about the problems considered here and for several helpful discussions that were essential for us to complete this work.

\ignore{ 
\section{Testing Bayesian Networks with Hidden Variable}

\begin{theorem} \label{theorem:hiddenVariables}
Let $P$ and $Q$ be two SMCMs defined on a {\em known and common} SMCG $G$ of maximum degree $d$ and maximum c-component size $\ell$.  Let $\{V_1,\ldots,V_n\}$ be a topological ordering and let $\mv{C}_{V_i}$ be the c-component of $G$ that contains $V_i$, and $\mv{V}^{(i)}$ be the set $\{V_1,\ldots,V_i\}$.  Then
\begin{align*}
H^2\left(P[V_1,\ldots,V_n],Q[V_1,\ldots,V_n]\right) \leq \sum_{i=1}^{n} H^2\left(P[V_i,\Neighbor(V_i)],Q[V_i,\Neighbor(V_i)]\right) 
\end{align*}
where we call $\Neighbor(V_i)= (\mv{C}_{V_{i}} \cap \mv{V}^{(i-1)})\cup \Pa(\mv{C}_{V_{i}} \cap \mv{V}^{(i)})$, the neighbors of $V_i$.  
\end{theorem}

\begin{proof}
\begin{align*}
H^2(P,Q) &= 1 - \sum_{\mv{v}^{(n)}} \sqrt{ P[\mv{v}] Q[\mv{v}] } \\
&= 1 - \sum_{\mv{v}^{(n-1)}} \sqrt{ P[\mv{v}^{(n-1)}] Q[\mv{v}^{(n-1)}]} \sum_{v_{n}} \sqrt{ P[v_n \mid \mv{v}^{(n-1)}] Q[v_n \mid \mv{v}^{(n-1)}]}
\end{align*}
\end{proof}
By Lemma~\ref{lemma:conditional_independence}, we know that $V_n$ is independent of $\mv{V}^{(n-1)} \setminus \Neighbor(V_n)$ when conditioned on $\Neighbor(V_n)$.  Therefore,
\begin{proof}
\begin{align*}
H^2(P,Q) &= 1 - \sum_{\mv{v}^{(n-1)}} \sqrt{ P[\mv{v}^{(n-1)}] Q[\mv{v}^{(n-1)}] } \sum_{v_{n}} \sqrt{ P[v_n \mid \neighbor(v_n)] Q[v_n \mid \neighbor(v_n)]} \\
		 &= 1 - \begin{array}{l} 
		 	\sum\limits_{\mv{v}^{(n-1)}} \left( \dfrac{P[\mv{v}^{(n-1)}] + Q[\mv{v}^{(n-1)}]}{2} \right) \sum\limits_{v_{n}} \sqrt{ P[v_n \mid \neighbor(v_n)] Q[v_n \mid \neighbor(v_n)}] \\ \newline 
		 	+ \sum\limits_{\mv{v}^{(n-1)}} \left( \dfrac{P[\mv{v}^{(n-1)}] + Q[\mv{v}^{(n-1)}]}{2} - \sqrt{ P[\mv{v}^{(n-1)}] Q[\mv{v}^{(n-1)}] } \right) \sum\limits_{v_{n}} \sqrt{ P[v_n \mid \neighbor(v_n)] Q[v_n \mid \neighbor(v_n)]} 
		 	\end{array} \\		 	
		 &= 1 - \begin{array}{l} 
		 	\sum\limits_{\neighbor(V_n)} \left( \dfrac{P[\neighbor(V_n)] + Q[\neighbor(V_n)]}{2} \right) \sum\limits_{v_{n}} \sqrt{ P[v_n \mid \neighbor(V_n)] Q[v_n \mid \neighbor(V_n)]} \\ \newline
		 	+ \sum\limits_{\mv{v}^{(n-1)}} \left( \dfrac{P[\mv{v}^{(n-1)}] + Q[\mv{v}^{(n-1)}]}{2} - \sqrt{ P[\mv{v}^{(n-1)}] Q[\mv{v}^{(n-1)}] } \right) \sum\limits_{v_{n}} \sqrt{ P[v_n \mid \neighbor(V_n)] Q[v_n \mid \neighbor(V_n)]} 
		 	\end{array} \\ 
		 &\leq 1 - \begin{array}{l} 
		 	\sum\limits_{\neighbor(V_n)} \left( \sqrt{ P[\neighbor(V_n)] Q[\neighbor(V_n)] } \right) \sum\limits_{v_{n}} \sqrt{ P[v_n \mid \neighbor(V_n)] Q[v_n \mid \neighbor(V_n)]}  \\ \newline
		 	+ \sum\limits_{\mv{v}^{(n-1)}} \left( P[\mv{v}^{(n-1)}] + Q[\mv{v}^{(n-1)}] - \sqrt{ P[\mv{v}^{(n-1)}] Q[\mv{v}^{(n-1)}] } \right) 
		 	\end{array} \quad \text{(using Cauchy Schartz)} \\
		 &\leq 1 - \sum_{\neighbor(V_n),v_n} \sqrt{ P[v_n,\neighbor(V_n)] Q[v_n,\neighbor(v_n)]} + \sum_{\mv{v}^{(n-1)}} \left( \sqrt{P[\mv{v}^{(n-1)}]} - \sqrt{Q[\mv{v}^{(n-1)}]}\right)^2 \\
		 &= H^2(P[V_n,\Neighbor(V_n)],Q[V_n,\Neighbor(V_n)]) + H^2(P[\mv{V}^{(n-1)}],Q[\mv{V}^{(n-1)}]) 
\end{align*} 		 
By applying the above inequality recursively we get the required result.
\end{proof}

\begin{lemma} \label{lemma:conditional_independence}
For a given SMCM $P$ defined on a SMCG $G$, let $\mv{V} = \mv{V}^{(n)}$ be a topological ordering of $G$.  Let $\mv{C}_{V_j}$ be the c-component of $G$ that contains $V_j$.  Then for a given $V_i \in \mv{V}$, and a given assignment $\mv{v}^{(i)}$ to $\mv{V}^{(i)}$,
$$ P[v_i  \mid  \mv{v}^{(i-1)}] = P[ v_i  \mid  \neighbor(V_i) ]$$
where $\neighbor(V_i) = (\mv{C}_{V_i} \cap \mv{V}^{(i-1)}) \cup \Pa((\mv{C}_{V_i} \cap \mv{V}^{(i)}))$ is the assignment to $\Neighbor(V_i)$ that is consistent with $\mv{v}^{(i-1)}$.
\end{lemma}

\begin{proof}
Using Bayes' rule, 
$$ P[v_i  \mid  \mv{v}^{(i-1)}] = \dfrac{P[\mv{v}^{(i)}]}{P[\mv{v}^{(i-1)}]} $$
Because of the topological ordering of $V$, intervening the nodes in $\mv{V}^{(i+1,n)} \defeq \{ V_{i+1},\ldots,V_n \}$ does not affect the distribution of $\mv{V}^{(i)}$.  Therefore, for an arbitrary assignment $\mv{v}^{(i+1,n)}$ to $\mv{V}^{(i+1,n)}$,
$$ P[v_i  \mid  \mv{v}^{(i-1)}] = \dfrac{P[\mv{v}^{(i)} \mid do(\mv{v}^{(i+1,n)})]}{P[\mv{v}^{(i-1)} \mid do(\mv{v}^{(i+1,n)})]} = P[V_i \mid \mv{V}^{(i-1)},do(\mv{v}^{(i+1,n)})]$$
Let $G^{\prime} = G_{\mv{\overline{V}}^{(i+1,n)}}$ be the induced graph with respect to the intervention $do(\mv{v}^{(i+1,n)})$ obtained from $G$ by removing the incoming edges of $\mv{V}^{(i+1,n)}$.  Let $\mv{W} = \mv{V}^{(i-1)} \setminus \Neighbor(V_i)$.  First we prove that $[V_i \indep \mv{W}  \mid  \Neighbor(V_i)]$ in the graph $G^{\prime}$.  Suppose, for contradiction, there exists a path in $G^{\prime}$ between some nodes of $\mv{W}$ and $V_i$ that is not blocked.  First note that $p$ cannot contain nodes from $\mv{V}^{(i+1,n)}$ (since incoming edges to $\mv{V}^{(i+1,n)}$ are removed in $G^{\prime}$).  Let $W_j \in \mv{W}$ be the $W_j$ that is closest to $V_i$ in $p$.  Now $W_j \rightarrow V_i$ or $W_j \leftrightarrow V_i$ implies $W_j \in \Neighbor(V_i)$, which is not possible.  Also, $W_j \leftarrow V_i$ implies $W_j \in \mv{V}^{(i+1,n)}$, which is not possible.  Therefore, the nodes in between $W_j$ and $V_i$ in $p$ belong to $\Neighbor(V_i)$ and such nodes must be collider nodes (as otherwise $p$ would be blocked).  The only two possibilities are $W_j \rightarrow () \leftrightarrow () \leftrightarrow \cdots \leftrightarrow V_i$ or $W_j \leftrightarrow () \leftrightarrow () \leftrightarrow \cdots \leftrightarrow V_i$, either of them implies $W_j \in \Neighbor(V_i)$, a contradiction.  Therefore, $[V_i \indep \mv{W}  \mid  \Neighbor(V_i)]$ in the graph $G^{\prime}$.  Using this conditional independence we get,        
$$ P[v_i  \mid  \mv{v}^{(i-1)}] = P[v_i \mid \neighbor(V_i),do(\mv{v}^{(i+1,n)})]$$
Since intervening vertices in $\mv{V}^{(i+1,n)}$ does not affect the distribution on $V_i,\Neighbor(V_i)$, we get 
$$ P[v_i  \mid  \mv{v}^{(i-1)}] = P[v_i \mid \neighbor(V_i)].$$
\end{proof}  
} 

\bibliographystyle{alpha}
\bibliography{confs}

\appendix

\section{Proof Sketch for the Fully Observable Case}\label{sec:fully_observable_case}
In the absence of unobservable variables, the analysis becomes much simpler. Let us look at the two-sample testing problem on input causal models $\mathcal{X}$ and $\mathcal{Y}$ defined on a DAG $G$. Now, each c-component is a single vertex, so that every ``local'' intervention is of the form $P[V_i \mid do(\pa(V_i))]$ for a vertex $V_i$ and an assignment $\pa(V_i)$ to the parents of $V_i$. We define our tester to accept iff each such local intervention on $\mathcal{X}$ and $\mathcal{Y}$ yields distributions which differ by at most $\eps^2/2n$ in squared Hellinger distance.  The squared Hellinger distance is defined as follows for two  distributions $P$ and $Q$ on $[D]$:
\begin{align}
H^2(P,Q) :&= 1-\sum_{i\in [D]}\sqrt{P(i)\cdot Q(i)}  = 1 - BC(P,Q)\label{eqn:squaredHellinger}
\end{align}
where $BC(P,Q)$ is the Fidelity or Bhattacharya coefficient of $P$ and $Q$.  Below, our subadditivity theorem shows that if the algorithm accepts, then for every intervention, the resulting distributions for $\mathcal{X}$ and $\mathcal{Y}$ differ by at most  $\eps^2/2$ in squared Hellinger distance, implying $\Delta(\mathcal{X},\mathcal{Y}) \leq \eps$.

\begin{theorem} \label{theorem:mainTheorem_without_hidden_vars}
Let $\mathcal{X}$ and $\mathcal{Y}$ be two causal Bayesian networks defined on a {\em known and common} DAG $G$ with no hidden variables.  Identify the vertices in $\mathbf{V}$ as $\{V_1, \dots, V_n\}$ arranged in a topological order.  Suppose we know that 
\begin{align}
H^2( P_{\mathcal{X}}[V_j \mid  do(\pa(V_j))] , P_{\mathcal{Y}}[V_j  \mid  do(\pa(V_j))] ) \leq \gamma \qquad \forall j \in [n],\forall \pa(V_j) \in \Sigma^{|\Pa(V_j)|}. \label{known_1}
\end{align}
Then, for each subset $\mathbf{T} \subseteq \mathbf{V}$ and $\mathbf{t} \in \Sigma^{|\mathbf{T}|}$,
\begin{align}
H^2 \left( P_{\mathcal{X}}[\mathbf{V \setminus T}  \mid  do(\mathbf{t}) ] , P_{\mathcal{Y}}[\mathbf{V \setminus T}  \mid  do(\mathbf{t}) ] \right) \leq {\gamma} n. \label{unknown_1}
\end{align}
\end{theorem}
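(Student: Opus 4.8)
The plan is to argue directly, by peeling off one vertex at a time in reverse topological order, rather than routing through the program $P_{r,p}$; the point of a direct argument here is that it yields the sharp factor $\gamma n$, whereas instantiating Theorem~\ref{theorem:mainTheorem} with $\ell=1$ would only give $\gamma\,|\Sigma|^{d+1}n$.

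First I would set up notation: write $\mathbf{W}=\mathbf{V}\setminus\mathbf{T}=\{W_1,\dots,W_r\}$ in a topological order of $G$, so $r=|\mathbf{W}|\le n$. Since $G$ has no hidden variables, each c-component of $G[\mathbf{W}]$ is a single vertex, so Lemma~\ref{lemma:c-component-factorization} (equivalently, the chain rule in the intervened Bayesian network $G_{\overline{\mathbf{T}}}$) gives $P_{\mathcal{X}}[\mathbf{w}\mid do(\mathbf{t})]=\prod_{i}P_{\mathcal{X}}[w_i\mid\pa(W_i)]$, and similarly for $\mathcal{Y}$, where the values $\pa(W_i)$ are read off from $(\mathbf{w},\mathbf{t})$ and — crucially, by the topological order and $W_i\notin\mathbf{T}$ — depend only on $w_1,\dots,w_{i-1}$ and on the fixed $\mathbf{t}$. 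Passing to the Bhattacharyya coefficient via~\eqref{eqn:squaredHellinger}, the goal becomes to lower bound $\sum_{w_1}\cdots\sum_{w_r}\prod_{i=1}^{r}D_i$, where $D_i:=\sqrt{P_{\mathcal{X}}[w_i\mid\pa(W_i)]\,P_{\mathcal{Y}}[w_i\mid\pa(W_i)]}$.

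The main step is a one-vertex peeling estimate. I would define $B^{(k)}:=\sum_{w_1}\cdots\sum_{w_k}\prod_{i\le k}D_i$ and show $B^{(k)}\ge B^{(k-1)}-\gamma$ for each $k\in[r]$. This uses that $W_k$, being last among $W_1,\dots,W_k$, is a parent of none of the earlier $W_i$, so only $D_k$ involves $w_k$; then $\sum_{w_k}D_k=BC\bigl(P_{\mathcal{X}}[W_k\mid\pa(W_k)],P_{\mathcal{Y}}[W_k\mid\pa(W_k)]\bigr)$, which is $\le 1$ by Cauchy--Schwarz and $\ge 1-\gamma$ by the hypothesis~\eqref{known_1} together with the identity $P_{\mathcal{X}}[W_k\mid\pa(W_k)]=P_{\mathcal{X}}[W_k\mid do(\pa(W_k))]$ (and likewise for $\mathcal{Y}$; cf.\ Lemma~\ref{lemma:trimUnnecessaryActions}), which holds because intervening on all parents of $W_k$ leaves the mechanism of $W_k$ untouched. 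Writing $\sum_{w_k}D_k=1-\eta_k$ with $\eta_k\in[0,\gamma]$, and using $B^{(k-1)}=\sum_{w_1,\dots,w_{k-1}}\prod_{i<k}D_i\le 1$ (it too is a Bhattacharyya coefficient of honest distributions on $\Sigma^{k-1}$, by the same normalization of the partial product of conditionals), one gets $B^{(k)}=B^{(k-1)}-\sum_{w_1,\dots,w_{k-1}}\bigl(\prod_{i<k}D_i\bigr)\eta_k\ge B^{(k-1)}-\gamma$. Iterating from $k=r$ down to $k=1$ and using $B^{(0)}=1$ gives $B^{(r)}\ge 1-r\gamma\ge 1-n\gamma$, i.e.\ $H^2\bigl(P_{\mathcal{X}}[\mathbf{V}\setminus\mathbf{T}\mid do(\mathbf{t})],P_{\mathcal{Y}}[\mathbf{V}\setminus\mathbf{T}\mid do(\mathbf{t})]\bigr)=1-B^{(r)}\le n\gamma$.

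The part that needs the most care is the bookkeeping around the topological order: making sure each partial sum $B^{(k)}$ really is the Bhattacharyya coefficient of two genuine probability distributions (so that it is bounded by $1$), that the conditional factors $P_{\mathcal{X}}[W_k\mid\pa(W_k)]$ appearing after the intervention $do(\mathbf{t})$ are exactly the interventional quantities controlled by~\eqref{known_1} even when some parents of $W_k$ lie in $\mathbf{T}$, and that $W_k$ is a parent of no earlier $W_i$. None of this is deep, but it is where the argument could go wrong if done carelessly; everything else is a short calculation.
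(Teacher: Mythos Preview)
Your proposal is correct and is essentially the same argument as the paper's: both peel off the last variable in topological order, identify its conditional with the local interventional quantity in~\eqref{known_1}, and iterate the resulting one-step inequality to reach the $n\gamma$ bound. The only cosmetic difference is that the paper phrases the recursion in terms of $H^2$ (obtaining $H^2_{k}\le(1-\gamma)H^2_{k-1}+\gamma$ and hence $1-(1-\gamma)^m\le m\gamma$), whereas you phrase it in terms of the Bhattacharyya coefficient $B^{(k)}\ge B^{(k-1)}-\gamma$.
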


\begin{proof}
Fix $\mathbf{T \subseteq V}$ and an assignment $\mathbf{t} \in \Sigma^{|\mathbf{T}|}$.  Let $\mathbf{W} = \mathbf{V \backslash T}=\{W_1,W_2,\ldots,W_m\}$ whose indices are arranged in a topological ordering.  By the definition of squared Hellinger distance:
\begin{align*}
H^2 &\left(\begin{array}{l} P_{\mathcal{X}}[\mathbf{W} |do(\mathbf{t})],\\ \newline P_{\mathcal{Y}}[\mathbf{W} |do(\mathbf{t})] \end{array}\right) \\
	&= 1 - \sum\limits_{w_1,w_2,\ldots,w_m} \sqrt{\begin{array}{l}P_{\mathcal{X}} [w_1,w_2,\ldots,w_m | do(\mathbf{t})] \\ \newline P_{\mathcal{Y}} [w_1,w_2,\ldots,y_m | do(\mathbf{t})] \end{array}} \\
	&= 1 - \sum\limits_{w_1,\ldots,w_{m-1}} \sqrt{\begin{array}{l}P_{\mathcal{X}} [w_1,\ldots,w_{m-1} | do(\mathbf{t})] \\ \newline P_{\mathcal{Y}} [w_1,\ldots,w_{m-1} | do(\mathbf{t})] \end{array}} \sum\limits_{w_m} \sqrt{\begin{array}{l} P_{\mathcal{X}}[w_m|w_1,\ldots,w_{m-1},do(\mathbf{t})] \\ \newline P_{\mathcal{Y}}[w_m|w_1,\ldots,w_{m-1},do(\mathbf{t})] \end{array}}  \\
	&= 1 - \sum\limits_{w_1,\ldots,w_{m-1}} \sqrt{ \begin{array}{l}P_{\mathcal{X}} [w_1,\ldots,w_{m-1} | do(\mathbf{t})] \\ \newline P_{\mathcal{Y}} [w_1,\ldots,w_{m-1} | do(\mathbf{t})] \end{array}} \sum\limits_{w_m} \sqrt{\begin{array}{l} P_{\mathcal{X}}[w_m|do(\pa(w_m))] \\ \newline P_{\mathcal{Y}}[w_m|do(\pa(w_m))] \end{array}}.	
\end{align*}
The above step can be obtained easily by using Lemma \ref{lemma:independenceLemma} and the conditional independence constraints obtained from $G$.  Therefore:
\begin{align*}
H^2 \left(\begin{array}{l} P_{\mathcal{X}}[\mathbf{W} |do(\mathbf{t})],\\ \newline P_{\mathcal{Y}}[\mathbf{W} |do(\mathbf{t})] \end{array}\right)
	&\leq 1 - \sum\limits_{w_1,\ldots,w_{m-1}} \sqrt{ \begin{array}{l}P_{\mathcal{X}} [w_1,\ldots,w_{m-1} | do(\mathbf{t})] \\ \newline P_{\mathcal{Y}} [w_1,\ldots,w_{m-1} | do(\mathbf{t})] \end{array}} (1 - \gamma) \quad (\text{from}~\eqref{known_1})\\
	&=H^2 \left(\begin{array}{l}  P_{\mathcal{X}}[W_1 \ldots W_{m-1} \mid do(\mathbf{t})],\\ \newline P_{\mathcal{Y}}[W_1 \ldots W_{m-1} |do(\mathbf{t})] \end{array} \right) (1-\gamma) + \gamma. 
\end{align*}
By induction on $n$, we get:
\begin{align*}
H^2 \left(\begin{array}{l} P_{\mathcal{X}}[\mathbf{W} |do(\mathbf{t})],\\ \newline P_{\mathcal{Y}}[\mathbf{W} |do(\mathbf{t})] \end{array}\right)  &\leq \gamma [1 + (1-\gamma) + (1-\gamma)^2 +\ldots+ (1-\gamma)^{m-1}] \\
&= 1 - (1-\gamma)^{m} \leq 1 - (1-\gamma)^{n} \leq n \gamma.
\end{align*}     
\end{proof}
The time and sample complexities are then determined by that required for two-sample testing on each pair of  local distributions with accuracy $\eps^2/2n$ in $H^2$ distance. We defer this calculation, as well as bounding the total number of interventions, to later when we analyze semi-Markovian CBNs.

\section{Reduction from General Graphs} \label{section:general-graph-reduction}

First we define the effective parents and the c-component relation for general causal graphs.

\begin{definition}[Effective Parents $\ePa$]
Given a general causal graph $H$ and a vertex $V_i \in \mv{V}$, {\em the effective parents of $V_i$}, denoted by $\ePa(V_i)$, is the set of all observable vertices $V_j$ such that either $V_j$ is a parent of $V_i$ or there exists a directed path from $V_j$ to $V_i$ that contains only unobservable variables.
\end{definition}

\begin{definition}[c-component]
For a given general causal graph $H$, two vertices $V_i$ and $V_j$ are related by the {\em c-component relation} if (i) there exists an unobservable variable $U_k$ such that $H$ contains two paths (i) from $U_k$ to $V_i$; and (ii) from $U_k$ to $V_i$, where both the paths use only unobservable variables, or (ii) there exists another vertex $V_{z} \in \mv{V}$ such that $V_i$ and $V_{z}$ (and) $V_j$ and $V_{z}$ are related by c-component relation.
\end{definition}

We study Semi Markovian Bayesian Networks ({\smbn})'s without any loss of generality owing to the projection of a general causal graph to a \smcg~\cite{tian-pearl,verma-pearl}. For a given graph $H$ they showed that there is an equivalent {\smcg} $G$ such that the c-component factorization and some other important properties hold.  Namely,
\begin{itemize}
\item 
The set of observable nodes in $H$ and $G$ are the same.
\item
The topological ordering of the observable nodes in $H$ and $G$ are the same.
\item
The $c$-components of $H$ and $G$ are identical and the c-component factorization formula (Lemma \ref{lemma:c-component-factorization} here, ($20$) in Lemma~$2$ of \cite{tian-pearl}) holds even for the general causal graph (See Section~5 of \cite{tian-pearl}).  They show this based on a known previously known reduction from $H$ to $G$ \cite{verma-pearl}.  The proof is based on the fact that for any subset $\mv{S} \subseteq \mv{V}$ of observable variables, the induced subgraphs $G[\mv{S}]$ and $H[\mv{S}]$ require the same set of conditional independence constraints.
\item
The parents of nodes in $G$ are the effective parents of nodes in $H$. 
\end{itemize}

All the results presented in this paper depend only on the above mentioned properties.  Therefore, we can reduce the given general causal graph $H$ to a {\smcg} $G$ using the available reduction and work with $G$, where the parents of vertices of $G$ correspond to the effective parents of the respectives vertices of $H$.  Now we proceed to show the algorithm of \cite{verma-pearl} that preserves all the required properties mentioned above.  
\noindent \paragraph{Projection Algorithm of \cite{tian-pearl,verma-pearl}} For a given causal graph $H$, the projection algorithm reduces the given causal graph $H$ to a {\smcg} $G$ by the following procedure:
\begin{enumerate}
\item For each observable variable $V_i \in V$ of $H$, add an observable variable $V_i$ in $G$.
\item For each pair of observable variables $V_i,V_j \in \mv{V}$, if there exists a directed edge from $V_i$ to $V_j$ in H, or if there exists a {\em directed} path from $V_i$ to $V_j$ that contains only unobservable variables in $H$, then add a directed edge from $V_i$ to $V_j$ in $G$.  
\item For each pair of observable variables $V_i,V_j \in \mv{V}$, if there exists an unobservable variable $U_k$ such that there exist two {\em directed} paths in $H$ from $U_k$ to $V_i$ and from $U_k$ to $V_j$ such that both the paths contain only the unobservable variables, then add a bi-directed edge between $V_i$ and $V_j$ in $G$.
\end{enumerate}


\ignore{
\textcolor{blue}{We study Semi Markovian Bayesian Networks ({\smbn})'s without any loss of generality owing to the projection of a causal model to a \smcg~\cite{tian-pearl,verma-pearl}. For a given causal graph $H$ they showed that there is a functionally equivalent {\smcg} $G$. Namely,
\begin{itemize}
\item 
The set of observable nodes in $H$ and $G$ are the same.
\item
For any intervention $I$, the distributions on the observable nodes in $H$ and $G$ under $I$ are identical.
\item
The $c$-components of $H$ and $G$ are identical.
\item
The parents of nodes in $G$ are the closest observable ancestors (COA)'s of nodes in $H$. 
\end{itemize}}

In this section, we show how our results hold for general causal graphs using a known \cite{tian-pearl,verma-pearl} reduction.

\subsubsection{Projection Algorithm \cite{tian-pearl,verma-pearl}} For a given causal graph $H$, the projection algorithm reduces the given causal graph $H$ to a {\smcg} $G$ by the following procedure:
\begin{enumerate}
\item For each observable variable $V_i \in V$ of $H$, there exists an observable variable $V_i$ in $G$.
\item For each pair of observable variables $V_i,V_j \in \mv{V}$, if there exists a directed edge from $V_i$ to $V_j$ in H, or if there exists a {\em directed} path from $V_i$ to $V_j$ that contains only unobservable variables in $H$, then there exists a directed edge from $V_i$ to $V_j$.  
\item For each pair of observable variables $V_i,V_j \in \mv{V}$, if there exists an unobservable variable $U_k$ such that there exist two {\em directed} paths in $H$ from $U_k$ to $V_i$ and from $U_k$ to $V_j$ where both the paths contain only the unobservable variables, then there exists a bi-directed edge between $V_i$ and $V_j$ in $G$
\end{enumerate} 

\begin{definition}{(\textbf{Closest Observable Ancestor $\COA(V_i)$})}
For a given causal graph $H$ and a vertex $V_i \in \mv{V}$, $\COA(V_i) \defeq \{ V_j \in \mv{V} : \text{ there exists a directed path from $V_j$ to $V_i$ that contain only unobservable variables } \}$. 
\end{definition}

Now we define the c-component relation for general causal graphs.  
\begin{definition}[c-component]
For a given causal graph $H$, two vertices $V_i$ and $V_j$ are related by the c-component relation if (i) there exists an unobservable variable $U_k$ such that there exist two directed paths one from $U_k$ to $V_i$ and the other from $U_k$ to $V_i$ where both the paths contain only the unobservable variables, or (ii) there exists another vertex $V_{o} \in \mv{V}$ such that $V_i$ and $V_{o}$ (and) $V_j$ and $V_{o}$ are related by c-component relation.
\end{definition}

Similar to {\smcg}s, for a given causal graph $H$, $C(\mv{V}) = \{\mv{S}_1,\ldots,\mv{S}_{k}\}$ denotes the partition of observable variables, where each $\mv{S}_i$ is a c-component of $H$.  And, for any given subset of observable variables $\mv{X \subseteq V}$, we define $C(\mv{X}) = \{\mv{S}_1,\ldots,\mv{S}_{k}\}$, where each $\mv{S}_i$ is a c-component of the induced subgraph $H[\mv{X}]$.    

Note that $\COA(V_i)$ in the general graph $H$ is equivalent to the observable parents $\Pa(V_i)$ of $V_i$ in $G$.  It is known \cite{tian-pearl,verma-pearl} that the reduction presented above preserves the c-component relation and the topological relations of the observable vertices.  

Let $P$ be a {\cbn} on a causal graph $H$.  Let $do(\mv{t})$ be an intervention defined on the model $P$ and $C_j = \{V_{n_{j,1}},V_{n_{j,2}},\ldots,V_{n_{j,s_j}}\}$ be a c-component of the induced subgraph $H^{\prime} = H[\mv{V} \setminus \mv{T}]$.  And, let $G$ be the semi-Markovian graph obtained using the projection algorithm.  It is known \cite{tian-pearl,verma-pearl} (and easy to see) that the partitions $C[\mv{V \setminus T}]$ with respect to the {\smcg} $G^{\prime} = G[\mv{X} \setminus \mv{T}]$ and the general causal graph $H^{\prime} = H[\mv{X} \setminus \mv{T}]$ are equivalent.  Therefore, Lemma~\ref{lemma:independenceLemma} holds for the given graph $H$ when $\Pa_{G^{\prime}(\cdot)}$ is substituted by $\COA(\cdot)$ in $H^{\prime}$.  Based on the same reasoning argument, it is known that Lemma~\ref{lemma:c-component-factorization} holds for the general causal graph $H$ \cite{tian-pearl,verma-pearl}.  Therefore, when the bound on the in-degree ($d$) of {\smcg}s is replaced by the bound on the size of the maximum $\COA$ of general causal graphs, and the size of the largest c-component of {\smbn}s is replaced by the size of the largest c-component of general causal graphs, all the results presented in this paper hold.
}

\section{Conditional Independence}

The following lemma captures a useful fact about conditional independence between variables in a \smbn.
\begin{lemma}[Independence Lemma] \label{lemma:independenceLemma}
Let $M$ be a {\smbn} with respect to a {\smcg} $G$ with the vertex set $\mv{V}=\{ V_1, \ldots, V_n \}$ (where the indices respect topological ordering).  For a given intervention $do(\mv{t})$, let $\mv{C} = \{ V_{n_{1}}, V_{n_{2}},\ldots, V_{n_{s}} \}$ be a c-component of the induced subgraph $G^{\prime} = G[\mv{V} \setminus \mv{T}]$, where $s = |\mv{C}|$ and $n_{1} < n_{2} < \cdots < n_{s}$.  Then for a given vertex $V_{n_{i}}$, for a given set $\mv{D}$ such that $\mv{V} \setminus (\mv{T} \cup \{V_{n_{1}},\ldots, V_{n_{i}}\}) \supseteq \mv{D} \supseteq \Pa_{G^{\prime}}(\{ V_{n_{1}}, \ldots, V_{n_{i}} \})$, and a given set of assignments $v_{n_{1}}, \ldots ,v_{n_{i}}$, $\mv{d}$,
$$P_{\mathcal{M}}[v_{n_{i}}  \mid  v_{n_{1}}, \ldots ,v_{n_{i-1}} , do(\mv{d},\mv{t})] = P_{\mathcal{M}}[ v_{n_{i}}  \mid  v_{n_{1}}, \ldots, v_{n_{i-1}}, do(\pa\nolimits_{G^{\prime}}(V_{n_{1}},\ldots, V_{n_{i}}),\mv{t} )  ]$$
where $pa_{G^{\prime}}(v_{n_{1}},\ldots, v_{n_{i}})$ is the assignment that is consistent with $\mv{D}$.
\end{lemma}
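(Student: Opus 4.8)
The plan is to reduce the stated identity between two \emph{conditional} interventional probabilities to an identity between two \emph{joint} interventional distributions on the prefix $\mv{C}^{(i)} := \{V_{n_1},\ldots,V_{n_i}\}$, and then obtain that joint identity from Lemma~\ref{lemma:trimUnnecessaryActions} applied directly to $\mathcal M$. Throughout, write $G' = G[\mv{V}\setminus\mv{T}]$, $\mv{P}^{(i)} := \Pa_{G'}(\mv{C}^{(i)})$, and let $\mv{p}^{(i)}$ be the restriction of $\mv{d}$ to $\mv{P}^{(i)}$; note that by hypothesis $\mv{P}^{(i)} \subseteq \mv{D}$ and that the assignment ``$\pa_{G'}(V_{n_1},\ldots,V_{n_i})$'' appearing in the statement is exactly $\mv{p}^{(i)}$.

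\textbf{Step 1 (reduce to a joint identity).} For any probability distribution over $\mv{V}$, the conditional $P[v_{n_i}\mid v_{n_1},\ldots,v_{n_{i-1}}]$ is a function of the marginal on $\mv{C}^{(i)}$ alone (all the variables it mentions lie in $\mv{C}^{(i)}$), and the conditioning event is positive under one distribution iff under the other. Hence it suffices to prove
\[
P_{\mathcal M}[v_{n_1},\ldots,v_{n_i}\mid do(\mv d,\mv t)] \;=\; P_{\mathcal M}[v_{n_1},\ldots,v_{n_i}\mid do(\mv p^{(i)},\mv t)].
\]
(The left side is the $\mv{C}^{(i)}$-marginal of $P_{\mathcal M}[\,\cdot\mid do(\mv d,\mv t)]$, and similarly on the right.)

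\textbf{Step 2 (two applications of Lemma~\ref{lemma:trimUnnecessaryActions}).} View $do(\mv d,\mv t)$ as the single intervention on $\mv D\cup\mv T$ with assignment $(\mv d,\mv t)$, and $do(\mv p^{(i)},\mv t)$ as the single intervention on $\mv P^{(i)}\cup\mv T$ with assignment $(\mv p^{(i)},\mv t)$. The observable $G$-parents of $\mv{C}^{(i)}$ outside $\mv{C}^{(i)}$ split as $\Pa_G(\mv C^{(i)}) = \mv P^{(i)} \;\cup\; \bigl(\Pa_G(\mv C^{(i)})\cap\mv T\bigr)$, because an observable parent lying in $\mv V\setminus\mv T$ survives in $G'$ (here we use $\mv C^{(i)}\cap\mv T=\emptyset$). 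Consequently $\Pa_G(\mv C^{(i)})\subseteq \mv P^{(i)}\cup\mv T$, so both $\mv D\cup\mv T$ and $\mv P^{(i)}\cup\mv T$ are subsets of $\mv V\setminus\mv C^{(i)}$ containing $\Pa_G(\mv C^{(i)})$ — exactly the hypothesis of Lemma~\ref{lemma:trimUnnecessaryActions} with $\mv S=\mv C^{(i)}$. Applying that lemma to each shows that both sides of the display in Step~1 equal $P_{\mathcal M}[v_{n_1},\ldots,v_{n_i}\mid do(\pa_G(\mv C^{(i)}))]$, where $\pa_G(\mv C^{(i)})$ is the restriction of $(\mv d,\mv t)$ in the first case and of $(\mv p^{(i)},\mv t)$ in the second. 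These two restrictions coincide: both equal $\mv t$ on $\Pa_G(\mv C^{(i)})\cap\mv T$, and on $\Pa_G(\mv C^{(i)})\cap(\mv V\setminus\mv T)=\mv P^{(i)}$ they equal $\mv d|_{\mv P^{(i)}}=\mv p^{(i)}$ in both cases. This proves the joint identity, and with Step~1, the lemma.

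The argument is essentially bookkeeping; the point to be careful about is Step~2. First, one must package the compound ``$do(\mv d,\mv t)$'' as a single intervention on $\mv D\cup\mv T$ so that Lemma~\ref{lemma:trimUnnecessaryActions} literally applies. Second, and more subtly, relative to the ambient graph $G$ the set $\mv C^{(i)}$ may have observable parents inside $\mv T$ (which are pinned by $do(\mv t)$ but are not parents ``in $G'$''), so the set passed to Lemma~\ref{lemma:trimUnnecessaryActions} as its ``$\mv D$'' must be $\mv D\cup\mv T$ (resp.\ $\mv P^{(i)}\cup\mv T$), not $\mv D$ or $\mv P^{(i)}$ alone — otherwise the inclusion $\mv D \supseteq \Pa_G(\mv C^{(i)})$ can fail. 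Note that nothing in the proof uses that $\mv C$ is a c-component or that the $n_j$ respect the topological order; those hypotheses only matter for how the lemma is invoked in the proof of Theorem~\ref{theorem:mainTheorem}.
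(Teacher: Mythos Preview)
Your proof is correct and follows essentially the same approach as the paper's: both reduce the conditional identity to a statement about joint distributions on $\{V_{n_1},\ldots,V_{n_i}\}$ and then invoke Lemma~\ref{lemma:trimUnnecessaryActions}. The paper writes the conditional as a Bayes ratio and applies Lemma~\ref{lemma:trimUnnecessaryActions} separately to numerator and denominator ``with respect to $G'$ after $do(\mv t)$''; you instead show the $\mv C^{(i)}$-marginals coincide directly by packaging $do(\mv d,\mv t)$ as a single intervention in $G$ and applying Lemma~\ref{lemma:trimUnnecessaryActions} once per side. Your route is slightly cleaner --- it makes explicit the subtlety that $\Pa_G(\mv C^{(i)})$ may meet $\mv T$, and (as you note) it does not need the topological ordering, whereas the paper's denominator step implicitly uses $n_1<\cdots<n_i$ to ensure $\Pa_{G'}(\{V_{n_1},\ldots,V_{n_{i-1}}\})\subseteq \Pa_{G'}(\mv C^{(i)})$.
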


\begin{proof}
By Bayes' theorem 
\begin{align} \label{eqn:independence}
P_{\mathcal{M}}\left[v_{n_{j,i}} \middle| \begin{array}{l} v_{n_{j,1}},\ldots,v_{n_{j,i-1}}, \\ \newline do(\pa\nolimits_{G^{\prime}}(V_{n_{j,1}},\ldots, V_{n_{j,i}}), \mv{t}) \end{array}\right] = \dfrac{P_{\mathcal{M}}[v_{n_{j,i}}, v_{n_{j,1}},\ldots,v_{n_{j,i-1}} \mid do(\pa\nolimits_{G^{\prime}}(V_{n_{j,1}},\ldots, V_{n_{j,i}}),\mv{t}) ]} {P_{\mathcal{M}}[v_{n_{j,1}},\ldots,v_{n_{j,i-1}} \mid do(\pa\nolimits_{G^{\prime}}(V_{n_{j,1}},\ldots, V_{n_{j,i}}),\mv{t}) ]}.
\end{align}
We apply Lemma~\ref{lemma:trimUnnecessaryActions} with respect to the graph $G^{\prime} = G[\mv{V} \setminus \mv{T}]$ that is obtained after the intervention $do(\mv{t})$ for both the numerator and the denominator of \eqref{eqn:independence} seperately.  Therefore:
\begin{align*}
P_{\mathcal{M}}\left[v_{n_{j,i}} \middle| \begin{array}{l} v_{n_{j,1}},\ldots,v_{n_{j,i-1}}, \\ \newline do(\pa\nolimits_{G^{\prime}}(V_{n_{j,1}},\ldots, V_{n_{j,i}}),\mv{t}) \end{array} \right] &= \dfrac{P_{\mathcal{M}}[v_{n_{j,i}}, v_{n_{j,1}},\ldots,v_{n_{j,i-1}} \mid do(\mv{d},\mv{t}) ]} {P_{\mathcal{M}}[v_{n_{j,1}},\ldots,v_{n_{j,i-1}} \mid do(\mv{d},\mv{t}) ]} \\
&= P_{\mathcal{M}}[v_{n_{j,i}} \mid  v_{n_{j,1}},\ldots,v_{n_{j,i-1}},do(\mv{d}, \mv{t}) ].
\end{align*}
\end{proof}

\end{document}